\titlespacing*{\section}{0pt}{4ex}{3ex}
\titlespacing*{\subsection}{0pt}{3ex}{2ex}
\providecommand{\dyad}[2]{{\ket{#1}\mkern-6mu\bra{#2}}}
\setlist[enumerate,1]{label=(\roman*)}
\newcommand\Stk[2]{\genfrac{}{}{0pt}{}{#1}{#2}}
\newcommand\mStk[2][\normalbaselineskip]{%
  \setstackgap{L}{#1}%
  \setsepchar{ }%
  \readlist\inputlist{#2}%
  \raisebox{#1*(\inputlistlen-1)/2}{\lstackMath\Longunderstack[c]{#2}}%
  \setstackgap{L}{\normalbaselineskip}%
}
\newcommand{\sunderbrace}[2]{\,\underbrace{\! #1 \!}_{#2}\,}
\newcommand{\Rep}{\operatorname{Rep}}
\newcommand{\Res}{\operatorname{Res}}
\newcommand{\Ind}{\operatorname{Ind}}
\newcommand{\cat}[1]{\mathrm{#1}}
\newcommand{\tw}[2]{\prescript{#1\mkern-4mu\relax}{}{#2}}
\newcommand{\twb}[2]{\prescript{#1\relax}{}{#2}}
\theoremstyle{definition}
\newtheorem{theorem}{Theorem}[section]
\newtheorem{theorem-no-proof}[theorem]{Theorem}
\newtheorem{definition}[theorem]{Definition}
\newtheorem{deflemma}[theorem]{Lemma and Definition}
\newtheorem{example}[theorem]{Example}
\newtheorem{corollary}[theorem]{Corollary}
\newtheorem{lemma}[theorem]{Lemma}
\newtheorem*{remark}{Remark}
\newtheorem*{conjecture}{Conjecture}
\title{Gauging Quantum Phases: A Matrix Product State Approach}
\author[1]{David Blanik}
\author[2,3]{Jos\'e Garre-Rubio}
\author[1,2]{Norbert Schuch}
\affil[1]{University of Vienna, Faculty of Physics, Boltzmanngasse 5, 1090 Vienna, Austria}
\affil[2]{University of Vienna, Faculty of Mathematics, Oskar-Morgenstern-Platz 1, 1090 Vienna, Austria}
\affil[3]{Instituto de F\'isica Te\'orica, UAM-CSIC, C. Nicol\'as Cabrera 13-15, Cantoblanco, 28049 Madrid, Spain}
\date{}
\begin{document}
\renewcommand\Affilfont{\itshape\small}
\renewcommand{\abstractname}{\vspace{-4\baselineskip}}

\maketitle

\begin{abstract}
  \noindent Utilizing the framework of matrix product states,
  we investigate gauging as a method for exploring quantum phases of matter.
  Specifically, we describe how symmetry-protected topological (SPT) phases
  and spontaneous symmetry breaking (SSB) phases in one-dimensional spin systems
  behave under twisted gauging,
  a generalization of the well-known gauging procedure for globally symmetric states.
  Compared to previous order parameter--based approaches, our analysis is
  not limited to the case of maximally non-commutative (MNC) phases
  and we use our findings to propose a generalization of the Kennedy-Tasaki transformation
  to the non-MNC setting.
  A key result of our work is that gauging produces configurations
  characterized by a combination of MNC order and symmetry breaking,
  when applied to non-MNC SPT phases.
  More generally, we conjecture a precise correspondence between SSB
  and non-MNC SPT phases,
  possibly enabling the detection of such phases using local and string order parameters.
\end{abstract}

\smallskip

\begin{multicols}{2}
\section{Introduction}
\label{sec:introduction}
Dualities and gauging maps are pivotal to our understanding of quantum phases of matter,
particularly in systems exhibiting symmetries, where they establish connections between seemingly unrelated models, providing insight into underlying correspondences. For example, the Kramers-Wannier duality~\cite{kramersStatisticsTwoDimensionalFerromagnet1941} connects the disordered and ordered phases of the quantum Ising model, pinpointing its critical point.

In one-dimensional spin chains, certain symmetric phases, known as symmetry-protected topological (SPT) phases, have garnered increased attention, in part because they can be used as a resource in measurement-based quantum computation~\cite{miyakeQuantumComputationEdge2010, elseSymmetryProtectedPhasesMeasurementBased2012, millerResourceQualitySymmetryProtected2015, stephenComputationalPowerSymmetryProtected2017}.
These phases are protected by a global group symmetry and feature unique ground states, non-trivial edge modes and degeneracies in the entanglement spectrum. The most prominent duality in this context is the Kennedy-Tasaki (KT) transformation~\cite{kennedyHiddenSymmetryBreaking1992},
mapping the \(\mathbb{Z}_2\times \mathbb{Z}_2\)--symmetric
spin-1 Affleck-Kennedy-Lieb-Tasaki (AKLT) model~\cite{affleckRigorousResultsValencebond1987},
which belongs to the Haldane phase, to the fully symmetry broken phase,
revealing its ground state and edge mode structure.

Previous research has extended KT transformations to broader contexts, including \(\mathbb{Z}_N\times \mathbb{Z}_N\)-protected phases~\cite{duivenvoordenSymmetryprotectedTopologicalOrder2013} and maximally non-commutative (MNC) SPT phases for arbitrary abelian groups~\cite{elseHiddenSymmetrybreakingPicture2013}.
Both approaches rely on the fact that MNC phases are characterized by string order parameters \cite{pollmannDetectionSymmetryprotectedTopological2012}.
For the groups \(\mathbb{Z}_N\times \mathbb{Z}_N\), a non-unitary KT transformation generalized to closed spin chains has also been proposed~\cite{liNoninvertibleDualityTransformation2023}
and twisted gauging maps have been constructed~\cite{luRealizingTrialityPality2024} to study generalized \(p\)-alities and their permutation action on quantum phases.
For a generalization of the KT transformation to higher-integer-spin see~\cite{oshikawaHiddenZ2Z21992}.
Gauging as a source of dualities in more general settings has recently been investigated in \cite{lootensEntanglementDensityMatrix2025, cuiperTwistedGaugingTopological2025}.

It has long been realized that matrix product states (MPSs),
a family of one-dimensional tensor network states,
particularly prominent in the study of strongly correlated systems
because of their ability to efficiently approximate ground states of local, gapped Hamiltonians~\cite{hastingsAreaLawOnedimensional2007,wolfAreaLawsQuantum2008},
provide a convenient setting to classify 1D SPT phases \cite{pollmannDetectionSymmetryprotectedTopological2012,schuchClassifyingQuantumPhases2011}.
By leveraging the MPS framework,
where quantum phase information is encoded as local properties of the MPS tensors,
we are able to construct dualities in settings where existing approaches face challenges,
such as non-MNC SPT phases or systems deviating from renormalization group (RG) fixed points.

The results of the paper are as follows:
Based on \cite{haegemanGaugingQuantumStates2015}, we define a gauging procedure of states and a generalization thereof, which we call twisted gauging, and describe explicitly the emergent dual \(\Rep G\) symmetry of the gauged state. For abelian groups, we use this description to study the permutation of quantum phases under the application of the (twisted) gauging procedure, allowing us to view gauging as a mapping between phases.
In particular, we show that non-MNC SPT phases get mapped to phases featuring MNC SPT order in addition to symmetry breaking, a connection which could allow the detection of any SPT phase in terms of local and string order parameters.

In detail, we show that under non-twisted gauging the quantum phase
\((G_{\mathrm{u}}\leq G_{\mathrm{u}}\times G_{\mathrm{b}}, [\alpha]\in \operatorname{H}^2(G_{\mathrm{u}}, \operatorname{U}(1))\)
gets mapped to the phase
\begin{equation*}
  \left(
    \widehat{G_{\mathrm{b}}}\times \operatorname{ker}\Res ^{ G_{\mathrm{u}}}_{K_{\alpha}}\leq \widehat{G}, \left[1\cdot\varphi^{\star}\overline{\alpha}\right] \in \operatorname{H}^2(\widehat{G}_{\mathrm{u}}, \operatorname{U}(1))
  \right),
\end{equation*}
where the map \(\varphi\) depends on the specific choice of dual symmetry
and \(K_{\alpha}\) denotes the degeneracy (cf.\ Definition~\ref{def:degeneracy1}) of \([\alpha]\).
The twisted case follows an analogous pattern and is covered in the Main Theorem~\ref{thm:main}.
Our theorem is only applicable to the case where the unbroken symmetry group allows for a complement in~\(G\),
however, we provide examples suggesting that it continues to hold when no complement exists.

Furthermore, we are able to combine untwisted and twisted gauging in such a way that the mapping of phases mimics the expected behavior of a generalized KT transformation, also for non-MNC phases.
Finally, extrapolating from the Main Theorem, we propose a conjecture that links the degeneracy of the SPT phase with the broken subgroup of the symmetry-broken phase to which it is mapped under gauging, e.g.,
the trivial SPT phase is dual to the fully symmetry broken phase,
offering a deeper understanding of the underlying structure of such phases.

While our results are based on the observed mapping of states under the introduced gauging procedure, we emphasize that any MPS (under certain conditions explained below) is the ground state of a corresponding (parent) Hamiltonian, establishing our results also at the level of Hamiltonian models.

The behavior of quantum phases under twisted gauging
was recently also examined in \cite{luRealizingTrialityPality2024},
where conformal field theory methods were used to describe the mapping of
\(\mathbb{Z}_n\times \mathbb{Z}_n\)-symmetric phases under gauging.
Concerning on-site symmetries,
the results presented here are more general, since we cover all abelian symmetry groups,
only requiring that the unbroken subgroup admits a complement.
On the other hand, we do not consider the interplay between non-local symmetries,
e.g., translation invariance, and gauging of on-site symmetries.
Their definition of twisted gauging coincides with ours and in the cases where our results overlap they agree.

The present paper is organized as follows:
In Section~\ref{sec:tensor-networks}, we introduce matrix product states and matrix product operators (MPOs) as tools for describing one-dimensional quantum systems. We cover graphical notation, the classification of symmetry-protected topological and spontaneous symmetry breaking phases, and their representation in the MPS framework.

In Section~\ref{sec:gauging-procedure}, we introduce gauging, a procedure that transforms globally symmetric states into locally symmetric ones. We define untwisted and twisted gauging, formulate gauging as the application of an MPO, and analyze the emergent dual symmetry and entanglement structure of the resulting gauged MPS.

In Section~\ref{sec:mapping-phases}, we explore the effect of gauging on quantum phases in the case of finite abelian symmetry groups. We demonstrate that gauging permutes between SPT and spontaneous symmetry breaking (SSB) phases and establish a rigorous mapping of phases under gauging, including our Main Theorem that gives the precise transformation rules under twisted gauging.

\section{Basics of Tensor Networks and Graphical Notation in 1D}
\label{sec:tensor-networks}

\subsection{Matrix Product States and Operators}
\label{sec:mps-mpo}
A matrix product state (MPS)~\cite{perez-garciaMatrixProductState2007} is a type of tensor network state
used to efficiently describe quantum many-body systems,
especially ground states of local, gapped Hamiltonians in one-dimensional systems.
They are particularly useful in studying strongly correlated systems, e.g., quantum phase transitions.
MPSs can be visualized as a chain of nodes (tensors) connected by virtual bonds,
the contraction of which reconstructs the full quantum state.
\begin{definition}[MPS tensor]\label{def:mps}
  An \emph{MPS tensor} is an order-3 tensor 
  \(A \in \mathcal{V}_{\mathrm{o}}\otimes \mathcal{H} \otimes \mathcal{V}_{\mathrm{t}}^{*}\),
  where Hilbert spaces denoted as \(\mathcal{H}\) we shall refer to as \emph{physical},
  while Hilbert spaces denoted as \(\mathcal{V}\) we shall refer to as \emph{virtual}, in this context.
  Graphically, we will depict the MPS tensor \(A\) as
  \begin{equation}
    \label{eq:93}
    A \equiv \raisebox{-12.88516pt}{\includegraphics{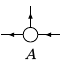}} \equiv \raisebox{-12.88516pt}{\includegraphics{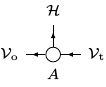}}
  \end{equation}
  and choosing bases for \(\mathcal{H}, \mathcal{V}_{\mathrm{o}}\) and \(\mathcal{V}_{\mathrm{t}}\),
  the tensor \(A\) can be viewed as a set of matrices \(\left\{ A^i \mid i = 1, \ldots, d_{\mathcal{H}} \right\}\) of size \(d_{\mathcal{V}_{\mathrm{o}}}\times d_{\mathcal{V}_{\mathrm{t}}}\).
\end{definition}
\begin{remark}[Graphical notation]
  Formally, in TN diagrams a Hilbert space is assigned to each edge (or leg)
  and tensors, depicted as nodes where multiple edges meet, are simply elements in the tensor product
  of the Hilbert spaces associated with its incident legs,
  where the dual of a Hilbert space appears in the tensor product,
  if the edge it is associated with terminates at the tensor.
  In graphical notation the tensor product of multiple tensors is depicted
  as drawing the factors next to each other.
  Composition, or more generally tensor contraction, is depicted as connecting the corresponding legs.
\end{remark}
\begin{definition}[Injective MPS tensor]\label{def:lri}
  An MPS tensor \(A\) is called
  \emph{injective}, if it is injective as a map 
  \begin{equation}
    \raisebox{-12.88516pt}{\includegraphics{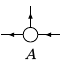}}\colon
    \mathcal{V}_{\mathrm{o}}^{*} \otimes \mathcal{V}_{\mathrm{t}} \longrightarrow \mathcal{H},
  \end{equation}
  going from virtual to physical Hilbert spaces.
  When interpreting TN drawings as linear maps, we shall usually view them as going from bottom to top
  and the orientation of each leg determines whether the associated Hilbert space or its dual appears
  in the (co)domain.
  Injectivity of the tensor \(A\) is equivalent to the existence of a tensor \(A^{-1}\),
  often called the \emph{(left) inverse} of \(A\), satisfying
  \begin{equation}
    \label{eq:95}
    \raisebox{-24.2663pt}{\includegraphics{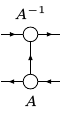}}=
    \raisebox{-9.08113pt}{\includegraphics{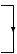}}\; \raisebox{-9.08113pt}{\includegraphics{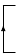}}
    \equiv \operatorname{id}_{\mathcal{V}_{\mathrm{o}}^{*}}
    \otimes \operatorname{id}_{\mathcal{V}_{\mathrm{t}}}.
  \end{equation}
  Another equivalent definition of injectivity is that the
  set of matrices \(\left\{ A^i \mid i = 1, \ldots, d_{\mathcal{H}} \right\}\)
  spans the entire space of 
  \(d_{\mathcal{V}_{\mathrm{o}}}\times d_{\mathcal{V}_{\mathrm{t}}}\) matrices.
\end{definition}
\begin{remark}
  For the rest of this section, if we do not specify otherwise, we will only consider MPS tensors with
  \(\mathcal{V}_{\mathrm{o}} = \mathcal{V}_{\mathrm{t}} \eqcolon \mathcal{V}\).
  In that case one can form a string of these tensors
  and contract the incoming leg of one with the outgoing leg of the next,
  thus producing a new tensor,
  a construction used in the following definition.
\end{remark}
\begin{definition}[Normal MPS tensor]
  An MPS tensor \(A\) is called \emph{normal} if there exists \(k\in\mathbb{N}\),
  such that the \emph{blocked} MPS tensor
  \begin{equation}
    \label{eq:85}
    \raisebox{-23.63835pt}{\includegraphics{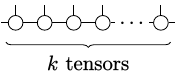}}
    \in \mathcal{V}\otimes \mathcal{H}^{\otimes k} \otimes \mathcal{V}^{*}
  \end{equation}
  is injective.
  The procedure generating tensor \eqref{eq:85} 
  is often referred to as \emph{blocking} \(A\) along \(k\) sites,
  the term \emph{sites} owing to the common application in one-dimensional lattice models.
  If the state generated by blocking \(k\) sites is injective,
  then blocking \(k^{\prime}>k\) sites will also produce an injective
  tensor~\cite{molnarNormalProjectedEntangled2018}.
\end{definition}
\begin{definition}[MPS]
  Every MPS tensor \(A\) generates a family of translationally invariant states
  \(\{\ket{\Psi(A)}_n\mid n\in \mathbb{N}\}\),
  according to
  \begin{equation}
    \label{eq:94}
   \ket{\Psi(A)}_n \coloneq \raisebox{-10.70003pt}{\includegraphics{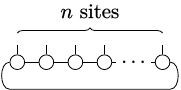}} \in \mathcal{H}^{\otimes n}.
  \end{equation}
  Contracting the two outermost legs is referred to as \emph{periodic boundary conditions}.
  We will not consider other choices of boundary conditions in this paper.
  Choosing a basis for the physical Hilbert space \(\mathcal{H}\),
  the MPS can equivalently be written as
  \begin{equation}
    \label{eq:79}
   \ket{\Psi(A)}_n = \sum_{i_1,\ldots, i_n} \operatorname{tr}\left( A^{i_1}\cdots A^{i_n} \right)\ket{i_1,\ldots, i_n}.
  \end{equation}
  When there is no possibility of confusion,
  we will usually omit the index \(n\) in \(\ket{\Psi(A)}_n\) in the following sections.
\end{definition}
\begin{remark}
  We will often use the notation \(\ket{\Psi(A,B)}\) for the MPS \(\ket{\Psi(AB)}\),
  where \(AB\) denotes the MPS tensor constructed from the tensors \(A\) and \(B\)
  by contracting over \(\mathcal{V}_{\mathrm{t}}^A = \mathcal{V}_{\mathrm{o}}^B\).
\end{remark}
\begin{definition}[Transfer operator]\label{def:transfer-operator}
  Let \(A\) and \(B\) be MPS tensors with the same physical Hilbert space.
  The \emph{(mixed) transfer operator}
  \(\mathbb{T}(A,B)\) of \(A\) and \(B\) is defined to be the tensor
  \begin{equation}
    \label{eq:110}
    \mathbb{T}(A,B)\coloneq\raisebox{-24.2663pt}{\includegraphics{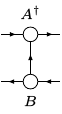}}
    \in \operatorname{End}(\mathcal{V}^{*}_A\otimes \mathcal{V}_B),
  \end{equation}
  where we view the diagram
  (and this is the only exception to the general rule above)
  as a map going from right to left.
  After choosing a basis for the physical Hilbert space, the transfer operator is just the superoperator
  \begin{equation}
    \label{eq:111}
    Y \longmapsto \sum_i B^i Y {A^i}^{\dag}.
  \end{equation}
\end{definition}
The transfer operator is often used to compute the scalar product between translation invariant
matrix product states, which is easily seen to take the form
\begin{equation}
  \label{eq:112}
  \braket{\Psi(A)|\Psi(B)}_n = \operatorname{tr} \mathbb{T}(A,B)^n.
\end{equation}
Crucially, for large \(n\) this expression is dominated by the leading eigenvalues of \(\mathbb{T}\).

Different MPS tensors can give rise to the same MPS,
e.g., the substitution
\begin{equation}
  \label{eq:86}
  A^i\rightarrow X^{-1}A^iX
\end{equation}
will leave the state \(\ket{\Psi(A)}\) invariant,
for any invertible \(d_{\mathcal{V}}\times d_{\mathcal{V}}\) matrix \(X\). 
The transformation \eqref{eq:86} is usually referred to as a \emph{gauge transformation},
and in graphical notation it takes the form
\begin{align*}
  \raisebox{-12.88516pt}{\includegraphics{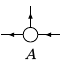}} \; \rightarrow
  \raisebox{-12.88516pt}{\includegraphics{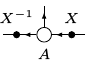}}.
\end{align*}
While unrelated MPS tensors can generally generate the same states,
injective MPS tensors generating the same states are necessarily related via a gauge transformation,
as the following theorem shows.
\begin{theorem-no-proof}[Fundamental theorem of MPSs~\cite{molnarNormalProjectedEntangled2018}]
  \label{thm:fundamental-thm}
  If the injective MPS tensors \(A\) and \(B\) generate the same states
  \begin{equation}
    \label{eq:87}
    \ket{\Psi(A)}_n =
    \ket{\Psi(B)}_n,
  \end{equation}
  for some \(n \geq 3\),
  then \(A\) and \(B\) are related via a gauge transformation.
\end{theorem-no-proof}

\begin{lemma}[Canonical form~\cite{perez-garciaMatrixProductState2007}]
  \label{lem:canonical-form}
  Any MPS \(\ket{\Psi(A)}\) with periodic boundary conditions
  admits a set of normal tensors \(A_1, \ldots, A_m\),
  such that
  \(\ket{\Psi(A)}_n = \ket{\Psi(\widetilde{A})}_n \)
  for all \(n\in \mathbb{N}\), where
  \begin{equation}
    \label{eq:88}
    \widetilde{A}^i \coloneq \bigoplus_{j=1}^m A^i_j
  \end{equation}
  is called the \emph{canonical form} of \(A\).
  (Note that the canonical form, as introduced here, is not unique)
\end{lemma}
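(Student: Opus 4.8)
The plan is to reformulate the claim in terms of traces and then induct on the virtual dimension $d_{\mathcal{V}}$. Since block-diagonal matrices multiply block-wise, \eqref{eq:79} applied to $\widetilde{A}^i = \bigoplus_j A_j^i$ gives $\operatorname{tr}(\widetilde{A}^{i_1}\cdots \widetilde{A}^{i_n}) = \sum_j \operatorname{tr}(A_j^{i_1}\cdots A_j^{i_n})$, so the asserted identity $\ket{\Psi(A)}_n = \ket{\Psi(\widetilde{A})}_n$ is equivalent to
\[
  \operatorname{tr}\!\bigl(A^{i_1}\cdots A^{i_n}\bigr) \;=\; \sum_{j=1}^{m} \operatorname{tr}\!\bigl(A_j^{i_1}\cdots A_j^{i_n}\bigr) \qquad \text{for all } n \text{ and all } i_1,\dots,i_n,
\]
and it is this family of identities I would prove.

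First, suppose the matrices $\{A^i\}$ admit a common invariant subspace $0 \neq W \subsetneq \mathcal{V}$. Choosing a basis of $\mathcal{V}$ adapted to a vector-space splitting $W \oplus W'$ is a gauge transformation of the form \eqref{eq:86}, after which every $A^i$ becomes block upper triangular with diagonal blocks $B^i$ on $W$ and $D^i$ on $W'$. A product of block upper triangular matrices is again of this form with diagonal blocks the corresponding products, so the off-diagonal parts never reach the trace and $\operatorname{tr}(A^{i_1}\cdots A^{i_n}) = \operatorname{tr}(B^{i_1}\cdots B^{i_n}) + \operatorname{tr}(D^{i_1}\cdots D^{i_n})$ for every $n$; equivalently $\ket{\Psi(A)}_n = \ket{\Psi(B)}_n + \ket{\Psi(D)}_n$. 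Both $B$ and $D$ have strictly smaller virtual dimension, so applying the induction hypothesis to each and concatenating the resulting lists of normal tensors --- discarding any tensor that generates the zero state --- completes the inductive step.

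This leaves the base case of a tensor whose matrices $\{A^i\}$ have no proper nonzero common invariant subspace. By Burnside's theorem the $\{A^i\}$ then generate all of $\operatorname{End}(\mathcal{V})$ as an algebra, which is exactly irreducibility of the transfer operator $\mathbb{T}(A,A)\colon Y \mapsto \sum_i A^i Y (A^i)^\dagger$ as a completely positive map. If this map is moreover \emph{primitive} --- that is, has a unique eigenvalue of maximal modulus --- then some power of $\mathbb{T}(A,A)$ is strictly positive, which is equivalent to the length-$k$ products $\{A^{i_1}\cdots A^{i_k}\}$ spanning $\operatorname{End}(\mathcal{V})$ for all large $k$, i.e.\ to the blocked tensor \eqref{eq:85} being injective; hence $A$ is normal and $m = 1$ suffices. (That ``normal'' and ``primitive transfer operator'' are equivalent, and that blocking can only preserve injectivity, are standard; see~\cite{molnarNormalProjectedEntangled2018,perez-garciaMatrixProductState2007}.)

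The remaining, and I expect main, subtlety is the case where $\mathbb{T}(A,A)$ is irreducible but \emph{imprimitive}, with $p > 1$ eigenvalues of maximal modulus. Then $\mathcal{V}$ carries a $\mathbb{Z}_p$-grading $\mathcal{V} = \bigoplus_{r \in \mathbb{Z}_p} \mathcal{V}_r$ with $A^i(\mathcal{V}_r) \subseteq \mathcal{V}_{r+1}$; consequently $\operatorname{tr}(A^{i_1}\cdots A^{i_n})$ vanishes identically whenever $p \nmid n$, while blocking $p$ physical sites turns $A$ into a genuine direct sum of $p$ primitive --- hence, by the previous step, normal --- tensors. This is the point at which the identity ``for all $n$'' must be understood up to a preliminary blocking by $p$ (and explains, in part, the non-uniqueness noted in the statement); on the generic locus one has $p = 1$ and no such blocking is needed. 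Everything else is routine linear algebra together with the Perron--Frobenius theory of irreducible completely positive maps, as developed in the cited works.
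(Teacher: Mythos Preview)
The paper does not prove this lemma; it is stated as a background result with a citation to~\cite{perez-garciaMatrixProductState2007}. Your sketch --- reduce to the trace identity, induct on bond dimension by splitting off a common invariant subspace (block upper-triangularisation kills the off-diagonal contribution to every trace), and in the base case invoke Burnside together with the Perron--Frobenius\,/\,quantum Wielandt theory of the transfer CP map --- is exactly the standard argument from that reference and its refinements, so in that sense you match the ``paper's proof'' by matching its source.

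Your flag on the irreducible \emph{imprimitive} case ($p>1$ peripheral eigenvalues) is the right place to be careful. Blocking $p$ sites does produce a direct sum of primitive tensors, but those tensors live on $\mathcal{H}^{\otimes p}$, not on $\mathcal{H}$; so what you actually prove is the identity for system sizes divisible by $p$ (harmless, since $\ket{\Psi(A)}_n=0$ otherwise), with the $A_j$ sitting on the blocked site. This is precisely the looseness hidden in the lemma as phrased here: the original reference states its canonical form with \emph{irreducible} (not necessarily primitive/normal) diagonal blocks, and the upgrade to genuinely normal blocks does in general require a preliminary blocking --- which is also one source of the non-uniqueness the paper parenthetically notes. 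For everything this paper does afterwards (injective or already-normal MPS) the distinction never bites, so your caveat is the honest way to close the argument.
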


\begin{definition}[Global symmetry]
  \label{def:global-symmetry}
  We say that the MPS \(\ket{\Psi(A)}\) has a global on-site symmetry
  generated by the unitary group representation \((\mathcal{H}, U) \in \Rep G\),
  if (for any system size \(n\))
  \begin{equation*}
    U(g)^{\otimes n }\ket{\Psi(A)}_n = \ket{\Psi(A)}_n.
  \end{equation*}
\end{definition}

Matrix product operators (MPOs) can be seen as a generalization of MPSs to describe operators.
While MPSs are used to describe quantum states,
MPOs describe quantum operators, such as Hamiltonians, density matrices or global symmetries.

\begin{definition}[MPO tensor]\label{def:mpo}
  An \emph{MPO tensor} is an order 4 tensor 
  \(T \in \mathcal{V}_{\mathrm{o}}\otimes \mathcal{H}_1 \otimes \mathcal{V}_{\mathrm{t}}^{*} \otimes \mathcal{H}_2^{*}\).
  Again, Hilbert spaces denoted as \(\mathcal{H}\) we shall refer to as \emph{physical},
  while Hilbert spaces denoted as \(\mathcal{V}\) we shall refer to as \emph{virtual}.
  Graphically, we will denote the MPO tensor \(T\) as
  \begin{equation}
    \label{eq:84}
    T \equiv  \raisebox{-11.69995pt}{\includegraphics{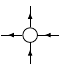}} \equiv  \raisebox{-24.72456pt}{\includegraphics{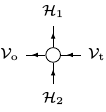}}
  \end{equation}
  and choosing bases for \(\mathcal{H}_1, \mathcal{H}_2, \mathcal{V}_{\mathrm{o}}\) and \(\mathcal{V}_{\mathrm{t}}\),
  the tensor \(T\) can be viewed as a set of matrices
  \(\big\{ T^i_j \mid i = 1, \ldots, d_{\mathcal{H}_1}\text{ and } j=1, \ldots, d_{\mathcal{H}_2} \big\}\)
  of size \(d_{\mathcal{V}_{\mathrm{o}}}\times d_{\mathcal{V}_{\mathrm{t}}}\).
\end{definition}

\begin{definition}[MPO]
  Every MPO tensor \(T\),
  with \(\mathcal{V}_{\mathrm{o}} = \mathcal{V}_{\mathrm{t}}\),
  generates a family of translationally invariant operators
  \(\{\mathcal{Q}_n(T)\mid n\in \mathbb{N}\}\),
  according to
  \begin{equation}
    \label{eq:82}
    \mathcal{Q}_n(T) \coloneq \raisebox{-12.7pt}{\includegraphics{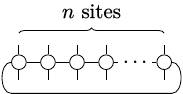}}
    \in \left( \mathcal{H}_1 \otimes \mathcal{H}_2^{*} \right)^{\otimes n}.
  \end{equation}
  Again, contracting the two outermost legs is referred to as \emph{periodic boundary conditions}.
  Choosing bases for the physical Hilbert spaces \(\mathcal{H}_1, \mathcal{H}_2\),
  the MPO can equivalently be written as
  \begin{equation*}
    \mathcal{Q}_n(T) = \sum_{i,j} \operatorname{tr} \left(T^{i_1}_{j_1}\cdots T^{i_n}_{j_n} \right)
    \dyad{i_1, \ldots, i_n}{j_1, \ldots, j_n}.
  \end{equation*}
  As we do with translation-invariant MPSs,
  when there is no possibility of confusion, we will omit the index \(n\) in the following sections.
\end{definition}

\subsection[Spontaneous Symmetry Breaking and Symmetry Protected Topological Phases]{\texorpdfstring{Spontaneous Symmetry Breaking and \\Symmetry Protected Topological Phases}{Spontaneous Symmetry Breaking and Symmetry Protected Topological Phases}}
\label{sec:spt-ssb-phases}

Two gapped, symmetric Hamiltonians are said to be in the same quantum phase
if there exists a smooth path of gapped, symmetric Hamiltonians interpolating between them.

For the case of one-dimensional, \(G\)-symmetric spin systems
a complete classification of the quantum phase space exists
\cite{schuchClassifyingQuantumPhases2011, chenClassificationGappedSymmetric2011, ogataClassificationSymmetryProtected2020, rubioClassifyingSymmetricSymmetrybroken2024},
where distinct phases are uniquely labeled by the tuple of invariants\footnote{A phase invariant is a property or quantity that is shared by all members of the phase.}
\((G_{\mathrm{u}}, [\alpha])\),
with \(G_{\mathrm{u}}\trianglelefteq G\) a normal subgroup 
and \([\alpha] \in \operatorname{H}^2(G_{\mathrm{u}}, \operatorname{U}(1))\)
an equivalence class of 2-cocycles, which is explained in more detail below.
The first (coarser) invariant, referred to as \emph{unbroken} symmetry group,
is related to spontaneous symmetry breaking and gives the ground state degeneracy
through the order \(|G_{\mathrm{b}}|\) of the \emph{broken} symmetry group
\(G_{\mathrm{b}}\coloneq G/G_{\mathrm{u}}\).
The latter invariant discriminates between models of equal ground state degeneracy and
represents the SPT part of the phase.

\begin{example}
  The Ising model can be in the symmetric (disordered) phase \((\mathbb{Z}_2 \leq \mathbb{Z}_2, 1)\),
  or in the symmetry broken (ordered) phase \((1 \leq \mathbb{Z}_2, 1)\),
  featuring unique ground state and two-fold degenerate ground states, respectively.
\end{example}
\begin{example}
  The spin-1 AKLT model belongs to the phase \((\mathbb{Z}_2\times \mathbb{Z}_2, [\alpha]\neq 1)\),
  dubbed Haldane phase.%
\end{example}

Since every injective MPS is the unique ground state of a local, frustration free, gapped Hamiltonian~\cite{ciracMatrixProductStates2021} (dubbed \emph{parent} Hamiltonian),
restricting to Hamiltonians admitting exact MPS representations of their ground states
allows the phase classification to be carried out at the level of ground states~\cite{schuchClassifyingQuantumPhases2011}.
In the following we explain how the invariants introduced above
are encoded in the local degrees of freedom (the generating tensors) of matrix product states.

The MPS tensor generating a state in the phase \((G_{\mathrm{u}}, [\alpha])\)
admits a canonical form, cf.~Lemma~\ref{lem:canonical-form}, that is given by
\begin{equation}\label{eq:91}
  A^i =
  \begin{pmatrix}
    A^i_1&&\\
    &\ddots&\\
    &&A^i_{n_{\mathrm{b}}}
  \end{pmatrix}
  \equiv \bigoplus_{k=1}^{n_{\mathrm{b}}} A_k^i,
\end{equation}
where \(n_{\mathrm{b}} \coloneq |G_{\mathrm{b}}|\),
and each block in the decomposition of \(A\) corresponds to a different non-symmetric ground state.
Since the broken symmetry group permutes between ground states,
in this picture it acts by permuting the normal blocks of \(A\).
We assume that no non-trivial partition of the set of blocks,
invariant under the the full action of \(G\), exists.
Otherwise, the state would exhibit long range entanglement not protected by \(G\),
contradicting robustness, a property usually required of quantum phases,
cf.~\cite{schuchClassifyingQuantumPhases2011}.

Each ground state \(\ket{\Psi(A_k)}, k=1, \dots, n_{\mathrm{b}}\), has a global \(G_{\mathrm{u}}\)-symmetry,
cf.\ Definition~\ref{def:global-symmetry}, hence there exist unitary projective representations
\((\mathcal{V}_k, V_k) \in \Rep_{\mathbb{C}}^{\alpha}G_{\mathrm{u}}\),
cf.\ Definition~\ref{def:projective-reps},
with \([\alpha]\in \mathrm{H}^2(G_{\mathrm{u}}, \mathrm{U}(1))\), such that
(up to a gauge transformation of \(A_k\))
\begin{align}\label{eq:0}
  \raisebox{-13.94629pt}{\includegraphics{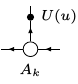}}
  = \raisebox{-13.94629pt}{\includegraphics{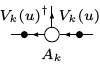}}
\end{align}
for all \(k\) and \(u \in G_{\mathrm{u}}\).
This follows directly from the Fundamental Theorem of MPSs, cf.\ Theorem~\ref{thm:fundamental-thm}.
In other words, we can view the normal blocks of the MPS tensor as intertwiners
\(A_k\in \operatorname{Hom}_{\Rep_{\mathbb{C}}^{\alpha}G_{\mathrm{u}}}(\mathcal{V}_k, \mathcal{V}_k\otimes \mathcal{H})\)
and we shall refer to Equation~\eqref{eq:0} as the \emph{intertwiner property} of \(A_k\).

\begin{definition}[Projective representations]\label{def:projective-reps}
  A \emph{projective representation} of the group \(G\) on the complex vector space \(W\)
  is a collection of linear maps
  \(\{\rho(g)\in \operatorname{GL}(W)\mid g \in G\}\),
  that satisfy
  \begin{equation}
    \label{eq:27}
    \rho(g)\rho(h) = \alpha(g,h) \rho(gh),
  \end{equation}
  with \(\alpha\in \mathrm{Z}^2(G, \mathrm{U}(1))\), the set of complex valued \emph{2-cocycles},
  meaning \(\alpha(g,h) \in \operatorname{U}(1)\subset \mathbb{C}\) and \(\alpha\) satisfies the \emph{2-cocycle condition}
  \begin{equation}
    \label{eq:6}
    \alpha(g, h)\alpha(gh, k) = \alpha(g, hk)\alpha(h,k)
  \end{equation}
  for all \(g,h,k\in G\).
  A projective representation can be thought of as a linear (i.e.\ non-projective) representation
  that satisfies the composition property only up to a scalar.
  Redefining \(\rho(g)\rightarrow \widetilde{\rho}\coloneq c(g)\rho(g)\),
  where \(c(g) \in \mathbb{C}^{\times}\) for all \(g\in G\),
  which implies
  \begin{equation}
    \label{eq:35}
    \alpha(g,h) \rightarrow \widetilde{\alpha}(g,h) \coloneq \frac{c(g)c(h)}{c(gh)}\alpha(g,h),
  \end{equation}
  should give an equivalent projective representation, cf.\ Equation~\eqref{eq:0},
  leading us to identify \(\alpha \sim \widetilde{\alpha}\).
  The set of equivalence classes in \(\mathrm{Z}^2(G, \mathrm{U}(1))\) under this relation
  forms a group, denoted \(\mathrm{H}^2(G, \mathrm{U}(1))\),
  and is usually referred to as the \emph{second cohomology group of \(G\)}.
  We let \([\alpha]\) denote the equivalence class represented by the 2-cocycle \(\alpha\).
  The category of all projective representations of \(G\) with 2-cocycle \(\alpha\) is denoted as
  \(\Rep ^{\alpha}_{\mathbb{C}}G\).
  We have collected a more detailed exposition and all relevant facts concerning projective representations
  of finite groups in Appendix~\ref{sec:proj-repr}.
\end{definition}
\begin{remark}
  Without loss of generality we assume that all 2-cocycles are normalized,
  i.e.\ \(\alpha(g, e) = \alpha(e, g) = 1\) for all \(g\in G\),
  and since we only consider finite groups in this paper we assume that all representations
  and cocycles are unitary.
  We will sometimes refer to the 2-cocycle \(\alpha\) when in fact we mean its equivalence class
  \([\alpha]\in \mathrm{H}^2(G, \mathrm{U}(1))\).
  Furthermore, we will use the term \emph{2-cocycle} also for \([\alpha]\).
  In practice, this abuse of notation should not cause any confusion.
  Similarly, we will often simply use the collection of linear maps \(\rho\)
  to refer to the representation \((W, \rho)\),
  suppressing any explicit mention of the underlying vector space \(W\).
  On the other hand,
  the projective representation above can equivalently be viewed as a
  \(\mathbb{C}^{\alpha}G\)-module \(W\),
  leaving the linear maps \(\rho\) implicit.
  Depending on the context we shall employ the more convenient description.
\end{remark}

\begin{definition}[Slant product]\label{def:slant-product-i}
  We define the \emph{slant product} of \([\alpha]\in \operatorname{H}^2(G, \operatorname{U}(1))\),
  in an abelian group \(G\),
  to be the group homomorphism (cf.\ Appendix~\ref{sec:proj-rep-abelian-groups})
  \begin{align}
    \imath\alpha \colon
    \eqmakebox[i1]{\(G\)} & \longrightarrow \eqmakebox[i2]{\(\widehat{G}\)} \\\nonumber
    \eqmakebox[i1]{\(g\)} & \longmapsto \eqmakebox[i2]{\(\imath_g\alpha\)} \coloneq
    \left( h\mapsto \frac{\alpha(g,h)}{\alpha(h,g)} \right),
  \end{align}
  where \(\widehat{G} \coloneq \operatorname{Hom}(G, \operatorname{U}(1))\)
  is the \emph{Pontryagin dual} of \(G\).
\end{definition}
\begin{definition}[SPT degeneracy]\label{def:degeneracy1}
  Let \(G\) be an abelian group and \([\alpha]\in \mathrm{H}^2(G, \mathrm{U}(1))\).
  We call the subgroup
  \begin{equation}
    \label{eq:12}
    K_{\alpha} \coloneq \left\{ g\in G \mid \forall h\in G \colon \alpha(g,h) = \alpha(h, g)\right\}
  \end{equation}
  the \emph{degeneracy} of \([\alpha]\).
  Note that this group is just the kernel of the slant product of \([\alpha]\),
  cf.\ Definitions~\ref{def:slant-product} and~\ref{def:degeneracy}.
  If \(K_{\alpha}\) is trivial \(\imath\alpha\) is an isomorphism 
  and we call \(\alpha\) and \([\alpha]\) \emph{non-degenerate}.
  The degeneracy of an SPT phase is just the degeneracy of the 2-cocycle characterizing it.
\end{definition}
\begin{definition}[MNC phase]
  An SPT phase,
  protected by an abelian group,
  is called \emph{maximally non-commutative} (MNC)
  if it is characterized by a non-degenerate 2-cocycle.
\end{definition}
Every irreducible projective representation associated to a non-degenerate 2-cocycle is faithful,
hence, the virtual representations of an MPS in an MNC phase have to be faithful.
From a more physical perspective,
MNC phases have maximal inaccessible entanglement, maximal SPT complexity, and
maximal topological edge mode degeneracy \cite{deGrootPhDthesis}.
In practice, these phases are of particular interest because
they can be detected using string order parameters~\cite{pollmannDetectionSymmetryprotectedTopological2012}
and they can act as universal resources for
measurement based quantum computation~\cite{miyakeQuantumComputationEdge2010, elseSymmetryProtectedPhasesMeasurementBased2012, millerResourceQualitySymmetryProtected2015, stephenComputationalPowerSymmetryProtected2017}.

\section{Gauging Procedure}
\label{sec:gauging-procedure}

Gauging is a procedure that transforms globally symmetric quantum states and operators into locally symmetric ones, while preserving their expectation values.
Hence, gauging maps ground states to ground states and preserves the gap of gapped Hamiltonians~\cite{williamsonMatrixProductOperators2016}.
In this section we review parts of the gauging procedure established in \cite{haegemanGaugingQuantumStates2015},
in the context of one-dimensional spin chains,
and introduce a slight generalization thereof,
called \emph{twisted gauging},
which takes as an additional input a 2-cocycle
\(\tau\in \operatorname{Z}^2(G, \operatorname{U}(1))\)
and reduces to the original untwisted procedure for trivial \emph{twist} \(\tau\).
We emphasize that the twisted gauging procedure includes the untwisted one as a special case,
which is why we shall state general results (whenever possible) in the context of twisted gauging.
We review how to formulate the gauging procedure as the application of an MPO
and derive a convenient MPS representation of the state produced by gauging an MPS.
Both the twisted and untwisted procedures aim to produce a locally symmetric state,
from a given globally symmetric one,
by adding auxiliary degrees of freedom in some fixed initial configuration
and then symmetrizing the product state,
where the untwisted version uses linear representations for the new degrees of freedom,
while the twisted version generally uses projective representations.
In both cases the gauged state will exhibit a global symmetry,
dubbed the \emph{dual} symmetry,
in addition to the local symmetry it has by construction,
as we will see below.
In this paper we focus solely on the gauging of states
and do not discuss the definition of gauged operators or subtleties arising therein. 
\subsection{Untwisted Gauging}
\label{sec:untwisted-gauging}
Let \(G\) be a finite group and
consider a globally \(G\)-symmetric state \(\ket{\psi}\) on a one-dimensional lattice.
The lattice can be infinite,
in which case we label the lattice sites by integers
\(i \in \Lambda \coloneq \mathbb{Z}\),
or it can be of finite length \(n \in \mathbb{N}\),
with periodic boundary conditions,
in which case we label the lattice sites by integers modulo \(n\), i.e.\
\(i \in \Lambda \coloneq \mathbb{Z}_n \equiv \mathbb{Z}/n\mathbb{Z}\).
To each lattice site \(i\) we will associate local spin degrees of freedom,
described by the Hilbert space \(\mathcal{H}_i\cong \mathbb{C}^{d_i}\).
We will collectively refer to these spaces
as \emph{matter} degrees of freedom.
We demand that
\((\mathcal{H}_i,U) \in \Rep G \equiv \Rep_{\mathbb{C}}G\),
i.e.\ the Hilbert space \(\mathcal{H}_i\) comes with a
linear representation
\(U_i\colon G\rightarrow \operatorname{GL}(\mathcal{H}_i)\)
of the group \(G\),
which we assume to be unitary.
The state being globally symmetric means it is invariant under
\begin{equation}
  \label{eq:13}
  U_{\Lambda}(g) \coloneq \bigotimes_{i\in \Lambda} U_i(g),
\end{equation}
i.e.\ \(U_{\Lambda}(g)\ket{\psi} = \ket{\psi}\) for all \(g \in G\).
In the case where \(U_i = U\) for all \(i\in \Lambda\),
this reduces to \(U_{\Lambda} = U^{\otimes |\Lambda |}\), cf.\ Definition~\ref{def:global-symmetry}.
To produce a locally symmetric state, we begin by introducing additional
(so called \emph{gauge}) degrees of freedom,
which we think of as being associated to the edges of the lattice,
i.e.\ sitting in between the matter degrees of freedom, cf.\ Figure~\ref{fig:gauge-dof}.

\begin{figure}[H]
  \centering\raisebox{-0pt}{\includegraphics{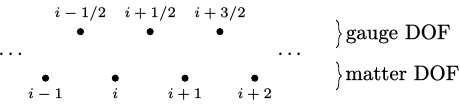}}
  \caption{
    We imagine the original degrees of freedom, now referred to as \emph{matter},
    as sitting on the vertices of a (periodic or infinite) 1D lattice.
    The new degrees of freedom, referred to as \emph{gauge},
    are then each placed between two matter degrees of freedom, i.e.\ on the edges of the lattice.
  }
  \label{fig:gauge-dof}
\end{figure}
Accordingly, we label the corresponding local Hilbert spaces \(\mathcal{H}_e\)
by half-integers \(e \in \Lambda+\frac{1}{2}\).
On the new degrees of freedom we have to choose representations 
\((\mathcal{H}_e ,L_e), (\mathcal{H}_e, R_e) \in \Rep G\),
which we require to commute,
i.e.\ their operators have to satisfy 
\begin{equation}\label{eq:43}
  L_e(g)R_e(h) = R_e(h)L_e(g)
\end{equation}
for all \(g, h\in G\) and \(e\in \Lambda+\frac{1}{2}\).
The prototypical example, which inspires the notation and on which we will focus in the sections below,
is the left and right regular representations on \(\mathbb{C}G\),
but for now we will continue to consider the general case.
The local \(G\)-symmetry, i.e. the gauge symmetry,
shall be implemented by the three-site operators
\(R_{i-\frac{1}{2}}(g) \otimes U_i(g) \otimes L_{i+\frac{1}{2}}(g)\),
so for each \(i\in \Lambda\) we define a projector \(P_i\),
acting on
\(\mathcal{H}_{i-\frac{1}{2}}\otimes \mathcal{H}_i\otimes \mathcal{H}_{i+\frac{1}{2}}\),
given by
\begin{equation}
  \label{eq:16}
  P_i \coloneq \frac{1}{|G|}\sum_{g\in G}
  R_{i-\frac{1}{2}}(g) \otimes U_i(g) \otimes L_{i+\frac{1}{2}}(g),
\end{equation}
which can be interpreted as enforcing a local Gauss law on the lattice site \(i\).
Notice that \(P_i^2 = P_i\) and Equation~\eqref{eq:43} implies
\([P_i,P_j] = 0\) for all \(i,j\in \Lambda\).
The projector to the locally symmetric subspace is now simply given as
\begin{equation}
  \label{eq:19}
  P \coloneq \bigotimes_{i\in \Lambda} P_i.
\end{equation}
We justify this statement by noting that \(P\) is left 
absorbing with respect to the local representations
\(R_{i-\frac{1}{2}}\otimes U_i \otimes L_{i+\frac{1}{2}}\), i.e.
\begin{align}
  P &= \left( R_{i-\frac{1}{2}}(g)\otimes U_i(g)\otimes L_{i+\frac{1}{2}}(g) \right) \circ P
\end{align}
for all \(g\in G\) and \(i \in \Lambda\).
Crucially, a state in the image of \(P\) will be symmetric
with respect to all representations 
\(R_{i-\frac{1}{2}}\otimes U_i \otimes L_{i+\frac{1}{2}}\).
In particular, given an arbitrary auxiliary state
\begin{equation}
  \ket{\Omega} \in \bigotimes_{i\in \Lambda} \mathcal{H}_{i+\frac{1}{2}},
\end{equation}
the state \(P(\ket{\psi}\otimes \ket{\Omega})\) will be gauge invariant.
A common choice, which we shall also make below,
is to use a product auxiliary state.
\begin{definition}[Untwisted Gauging]\label{def:untwisted-gauging}
  We define \emph{untwisted gauging}, or simply \emph{gauging}, to be the map
  \begin{equation}
    \label{eq:52}
    \mathcal{G}\colon \ket{\psi} \longmapsto P(\ket{\psi}\otimes \ket{\Omega}),
  \end{equation}
  where \(P\) denotes the projector introduced above, 
  for the specific choice of gauge degrees of freedom given by
  \begin{equation}
    (\mathcal{H}_e, L_e, R_e) \coloneq (\mathbb{C}G, L, R)
  \end{equation}
  for all \(e\in \Lambda+\frac{1}{2}\),
  where \(\mathbb{C}G\) denotes the group algebra of \(G\) and
  \(L\) and \(R\) are the left and right regular representations of \(G\),
  cf.~Definition~\ref{def:regular-projective-main} below for \(\alpha = 1\),
  and auxiliary state
  \begin{equation}
    \ket{\Omega}
    \coloneq \bigotimes_{i\in \Lambda} \ket{1}_{i+\frac{1}{2}}
    \equiv \bigotimes_{i\in \Lambda} \ket{1},
  \end{equation}
  where \(1\in G\) denotes the unit element of the group.
\end{definition}

\subsection{Twisted Gauging}
\label{sec:twisted-gauging}
We can generalize the gauging procedure by equipping
the gauge degrees of freedom with projective representations
\((\mathcal{H}_e,\tw{\tau}{R}_e)\in \Rep^{\tau}_{\mathbb{C}}G\)
and
\((\mathcal{H}_e,\tw{\overline{\tau}}{L}_e)\in \Rep^{\overline{\tau}}_{\mathbb{C}}G\),
where \([\tau]\in \mathrm{H}^2(G, \operatorname{U}(1))\),
instead of the linear representations above.
Since \([\overline{\tau}] = [\tau]^{-1}\), the representation
\(\tw{\tau}{R}_{i-\frac{1}{2}} \otimes U_i \otimes \tw{\overline{\tau}}{L}_{i+\frac{1}{2}}\),
which defines the gauge symmetry,
remains linear also in the twisted case.
We extend the notation for the projections \eqref{eq:16} and \eqref{eq:19},
to reflect the presence of a twist:
\begin{equation}
  \tw{\tau}{P}_i \coloneq \frac{1}{|G|}\sum_{g\in G}
  \tw{\tau}{R}_{i-\frac{1}{2}}(g) \otimes U_i(g) \otimes \tw{\overline{\tau}}{L}_{i+\frac{1}{2}}(g),
  \tag*{\(\tw{\tau}{(\ref{eq:16})}\)}
\end{equation}
\begin{equation}\label{eq:46}
  \tw{\tau}{P} \coloneq \bigotimes_{i\in \Lambda} \tw{\tau}{P}_i.
  \tag*{\(\tw{\tau}{(\ref{eq:19})}\)}
\end{equation}
Everything discussed in the previous section continues to hold also in the twisted version.
Of course, for trivial \([\tau]\) the twisted gauging reduces to the untwisted procedure,
as can be checked easily.
Hence, we shall (whenever possible) only consider the twisted case going forward.
\begin{remark}
  We use the notation \(\tw{\tau}{(-)}\) to indicate the twisted version of \((-)\).
  In particular, \(\tw{\tau}{(-)}\) always reduces to \((-)\) for trivial \([\tau]\).
\end{remark}
\begin{definition}[Twisted Gauging]\label{def:twisted-gauging}
  Given an element of the second cohomology group \([\tau]\in \operatorname{H}^2(G, \operatorname{U}(1))\)
  we define \emph{twisted gauging},
  or \(\tw{\tau}{\text{gauging}}\), to be the map
  \begin{equation}
    \label{eq:55}
    \tw{\tau}{\mathcal{G}}\colon \ket{\psi} \longmapsto \tw{\tau}{P}(\ket{\psi}\otimes \ket{\Omega}),
  \end{equation}
  described above, 
  for the specific choice of gauge degrees of freedom given by
  \begin{equation}
    (\mathcal{H}_e, \tw{\overline{\tau}}L_e, \tw{\tau}R_e)
    \coloneq (\mathbb{C}G, \tw{\overline{\tau}}L, \tw{\tau}R),
  \end{equation}
  for all \(e\in \Lambda+\frac{1}{2}\),
  where \(\tw{\overline{\tau}}{L}\) and \(\tw{\tau}{R}\)
  are regular projective representations of \(G\) (defined below) and auxiliary state
  \begin{equation}
    \ket{\Omega}
    \coloneq \bigotimes_{i\in \Lambda} \ket{1}_{i+\frac{1}{2}}
    \equiv \bigotimes_{i\in \Lambda} \ket{1}.
  \end{equation}
  For trivial \([\tau]\) this reduces to Definition~\ref{def:untwisted-gauging}, as required.
\end{definition}
\begin{definition}[Regular projective representations]\label{def:regular-projective-main}
  For \(\alpha\in \mathrm{Z}^2(G, \mathrm{U}(1))\)
  one defines on \(\mathbb{C}G\) the \(\alpha\)-projective representations
  \begin{align}
    \tw{\alpha}{L}(g) &:= \sum_{h\in G} \alpha(g, h) \dyad{gh}{h},\\
    \tw{\alpha}{R}(g) &:= \sum_{h\in G} \alpha(h, g) \dyad{h}{hg},
  \end{align}
  referred to as \emph{left} and \emph{right regular \(\alpha\)-projective representations},
  respectively.
  \(\tw{\alpha}{L}\) commutes with \(\tw{\overline{\alpha}}{R}\).
  For more details we refer the reader to Appendix~\ref{sec:proj-repr}.
\end{definition}
\subsection{Gauging as an MPO}
\label{sec:gauging-MPO}

The gauging procedure just described turns out to have a straightforward interpretation
as a matrix product operator,
the explicit construction of which is most naturally carried out using graphical notation.
For the sake of brevity, we only construct the MPO representation
of the specific gauging map given in Definition~\ref{def:twisted-gauging},
at the risk of obscuring some insights regarding the general structure of the constituent tensors.
For instance, the virtual bonds of the MPO will be given by the group algebra,
irrespective of the choice of \(\mathcal{H}_e\).

Let us denote by \(\ket{g}\) the canonical basis vectors of
the group algebra \(\mathbb{C}G\),
where in tensor network diagrams we use the depictions
\({\raisebox{-2.5pt}{\includegraphics{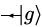}}\colon \mathbb{C}\rightarrow\mathbb{C}G}\) and
\({\raisebox{-2.5pt}{\includegraphics{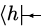}}\colon \mathbb{C}G\rightarrow\mathbb{C}}\)
in order to emphasize domain (incoming leg) and codomain (outgoing leg).
The main building blocks of the gauging procedure, and therefore its MPO description,
are the projectors \(\tw{\tau}{P}_i\), cf.\ Equation~\ref{eq:46},
which we can depict graphically as
\begin{equation}
  \tw{\tau}{P}_i = \frac{1}{|G|} \raisebox{-11.92636pt}{\includegraphics{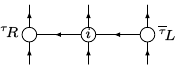}},
\end{equation}
where
\begin{subequations}
  \label{eq:31}
  \begin{align}
    \eqmakebox[eq31]{\(\raisebox{-11.92636pt}{\includegraphics{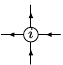}}\)}
    &\coloneq \sum_{g\in G} \raisebox{-2.5pt}{\includegraphics{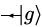}}\; \raisebox{-11.92636pt}{\includegraphics{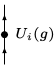}}
      \raisebox{-2.5pt}{\includegraphics{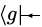}},\\
    \label{eq:32}
    \eqmakebox[eq31][l]{\(\raisebox{-11.92636pt}{\includegraphics{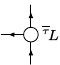}}\)}
    &\coloneq \sum_{g\in G} \raisebox{-2.5pt}{\includegraphics{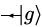}}\; \raisebox{-11.92636pt}{\includegraphics{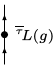}},\\
    \label{eq:33}
    \eqmakebox[eq31][r]{\(\raisebox{-11.92636pt}{\includegraphics{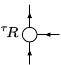}}\)}
    &\coloneq \sum_{g\in G} \raisebox{-11.92636pt}{\includegraphics{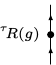}}\;
      \raisebox{-2.5pt}{\includegraphics{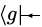}}.
  \end{align}
\end{subequations}
\begin{lemma}[Gauging MPO]\label{lem:gauging-mpo}
  Twisted gauging,
  as it is defined in Definition~\ref{def:twisted-gauging},
  admits an MPO representation 
  \begin{equation*}
    \tw{\tau}{\mathcal{G}} = \raisebox{-18.75801pt}{\includegraphics{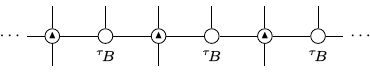}},
  \end{equation*}
  where we use {\small\(\blacktriangle\)} as a placeholder for the appropriate vertex label and
  \begin{equation}\label{eq:92}
    \tw{\tau}{B} \coloneq
    \frac{1}{|G|} \sum_{g \in G}
    \Stk{\raisebox{-10.61665pt}{\includegraphics{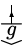}}}
        {\raisebox{-13.31592pt}{\includegraphics{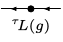}}}.
  \end{equation}
\end{lemma}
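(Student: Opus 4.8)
The plan is to verify directly that the MPO obtained by tiling copies of the tensor $\tw{\tau}{B}$ with periodic boundary conditions reproduces the operator $\tw{\tau}{P}\bigl(-\otimes\ket{\Omega}\bigr)$ from Definition~\ref{def:twisted-gauging}. First I would unpack the definition of the MPO: contracting $n$ copies of $\tw{\tau}{B}$ along the virtual $\mathbb{C}G$ legs produces, by the very definition~\eqref{eq:92}, a sum $\frac{1}{|G|^n}\sum_{g_1,\ldots,g_n\in G}(\cdots)$, where the group elements $g_i$ are exactly the indices appearing in each local projector $\tw{\tau}{P}_i$ of~$\tw{\tau}{(\ref{eq:16})}$. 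The key observation is that the virtual leg between site $i$ and site $i+1$ carries the product structure so that the matter leg at site $i$ sees $U_i(g_i)$, while the gauge leg at edge $i+\tfrac12$ receives $\tw{\overline{\tau}}{L}(g_i)$ from the tensor on its left and $\tw{\tau}{R}(g_{i+1})$ from the tensor on its right — i.e.\ the gauge degree of freedom at edge $e=i+\tfrac12$ ends up acted on by $\tw{\tau}{R}_{i+\frac12}(g_{i+1})\tw{\overline{\tau}}{L}_{i+\frac12}(g_i)$, matching the pattern of $\tw{\tau}{P}_{i}$ and $\tw{\tau}{P}_{i+1}$ in the tensor product $\tw{\tau}{P}=\bigotimes_i\tw{\tau}{P}_i$.

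Concretely, the steps I would carry out are: (i) expand $\tw{\tau}{B}$ using~\eqref{eq:31}, identifying which summand contributes $U_i(g)$ on the physical matter leg, which contributes the auxiliary-state insertion $\ket{1}$ on the gauge leg together with the appropriate regular projective representation operators, and which handle the two virtual $\mathbb{C}G$ legs; (ii) contract the chain and relabel the virtual-leg sums as independent group sums $g_1,\dots,g_n$; (iii) recognize, using Definition~\ref{def:regular-projective-main}, that the composite acting on the $e$-th gauge space $\mathbb{C}G$ sends $\ket{1}$ to the correct vector so that, after the group sums, the operator on $\mathcal{H}_{i-\frac12}\otimes\mathcal{H}_i\otimes\mathcal{H}_{i+\frac12}$ reassembles into $\prod_i \tw{\tau}{P}_i$ applied to $\ket{\psi}\otimes\ket{\Omega}$; (iv) check the normalization $\tfrac{1}{|G|^n}$ against $\prod_i\tfrac{1}{|G|}$ and confirm the periodic trace closure at the two outermost virtual legs. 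Throughout, the fact that $\tw{\tau}{R}$ and $\tw{\overline{\tau}}{L}$ commute (Definition~\ref{def:regular-projective-main}) and that $[\overline{\tau}]=[\tau]^{-1}$, so that $\tw{\tau}{R}_{i-\frac12}\otimes U_i\otimes\tw{\overline{\tau}}{L}_{i+\frac12}$ is an honest linear representation, is what makes the bookkeeping consistent.

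The main obstacle I anticipate is purely combinatorial/graphical rather than conceptual: keeping careful track of which group-element label attaches to which leg of $\tw{\tau}{B}$ — in particular the left/right asymmetry built into $\tw{\tau}{L}$ versus $\tw{\tau}{R}$ (one weighted by $\alpha(g,h)$, the other by $\alpha(h,g)$) — and making sure the cocycle factors $\tau$ picked up along the virtual bond from the two neighboring $\tw{\tau}{B}$ tensors are exactly what is needed so that no spurious phase survives and the projective pieces combine into the linear gauge representation. A clean way to manage this is to first prove the lemma for a single site (showing one $\tw{\tau}{B}$ with its two virtual legs left open, when sandwiched appropriately, equals $\tw{\tau}{P}_i$ up to the virtual-leg contractions), and then argue by induction on $n$ that concatenation reproduces $\bigotimes_{i}\tw{\tau}{P}_i$, with the periodic boundary handled by the final trace. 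Once the single-site statement is pinned down, the rest is a routine contraction argument and the untwisted case $[\tau]=1$ follows immediately as a special case.
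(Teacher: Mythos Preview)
Your proposal is correct and follows essentially the same route as the paper: reduce the global statement to a local identity showing that two adjacent copies of $\tw{\tau}{B}$, with the auxiliary $\ket{1}$ inserted on the shared gauge leg, reproduce the projector pattern $\tw{\tau}{R}\otimes U\otimes\tw{\overline{\tau}}{L}$; the paper isolates exactly this as Equation~\eqref{eq:18} and then verifies it by the explicit computation $\tw{\overline{\tau}}{L}(g)\tw{\tau}{R}(h)\ket{1}=\tfrac{\tau(h^{-1},h)}{\tau(g,h^{-1})}\ket{gh^{-1}}$ together with the cocycle condition, which is precisely the ``spurious phase'' bookkeeping you flag as the main obstacle. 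One small clarification: the tensor $\tw{\tau}{B}$ itself is built from $\tw{\tau}{L}$ rather than from separate $\tw{\overline{\tau}}{L}$ and $\tw{\tau}{R}$ pieces, so the fact that the contraction reproduces the $\tw{\tau}{R}\otimes U\otimes\tw{\overline{\tau}}{L}$ form of $\tw{\tau}{P}_i$ is the actual content of the local check, not an input to it.
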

\begin{proof}
  If we can show that
  \begin{equation} \label{eq:18}
    \raisebox{-15.27237pt}{\includegraphics{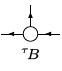}}
    = \frac{1}{|G|}\, \raisebox{-23.8988pt}{\includegraphics{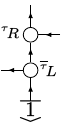}}
    \equiv \frac{1}{|G|}\,\raisebox{-23.8988pt}{\includegraphics{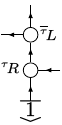}}\,,
  \end{equation}
  then it is clear that the above MPO corresponds to the map 
  \begin{equation}
    \ket{\psi}\mapsto\bigotimes_{i\in\Lambda}\tw{\tau}{P}_i\left(\ket{\psi}\otimes\ket{\Omega}\right).
  \end{equation}
  The equivalence depicted in Equation~\eqref{eq:18}
  follows directly from the commutativity of the involved regular representations and inserting
  \begin{equation}
    \tw{\overline{\tau}}{L}(g) \tw{\tau}{R}(h) \ket{1}
    = \frac{\tau(h^{-1}, h)}{\tau(g, h^{-1})} \ket{gh^{-1}}
  \end{equation}
  yields
  \begin{align*}
    \raisebox{-23.8988pt}{\includegraphics{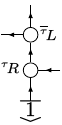}}\,
    &= \sum_{g,h\in G}
    \frac{\tau(h^{-1}, h)}{\tau(g, h^{-1})}
    \Stk{\raisebox{-18.61769pt}{\includegraphics{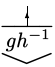}}}
        {\raisebox{-2.5pt}{\includegraphics{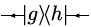}}}\\
    &= \sum_{g,h\in G}
    \tau(g,h)
    \Stk{\raisebox{-10.61665pt}{\includegraphics{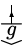}}}
        {\raisebox{-2.5pt}{\includegraphics{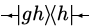}}}
    = \sum_{g \in G}
    \Stk{\raisebox{-10.61665pt}{\includegraphics{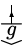}}}
        {\raisebox{-13.31592pt}{\includegraphics{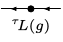}}},
  \end{align*}
  which is what we had to show. Going from the first to the second line we used the
  cocyle condition of \(\tau\).
\end{proof}
Alternatively, we could use Lemma~\ref{lem:gauging-mpo} to define twisted gauging
as the application of the MPO constructed there.
This is convenient because the defining properties of the gauging map can easily be derived
from the intertwiner properties of the constituent tensors.
We can check by explicit computation that the tensors defined in \eqref{eq:31} satisfy
the intertwiner relations
\begin{align*}
  \raisebox{-14.7716pt}{\includegraphics{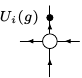}} = 
  \raisebox{-11.92636pt}{\includegraphics{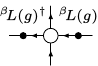}}, &&
  \raisebox{-15.71405pt}{\includegraphics{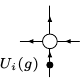}} = 
  \raisebox{-11.92636pt}{\includegraphics{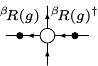}}
\end{align*}
for any \(\beta \in \operatorname{H}^2(G, \operatorname{U}(1))\), and
\begin{align*}
  \raisebox{-14.7716pt}{\includegraphics{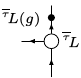}} =
  \raisebox{-11.92636pt}{\includegraphics{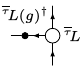}},&&
  \raisebox{-14.7716pt}{\includegraphics{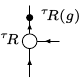}} =
  \raisebox{-11.92636pt}{\includegraphics{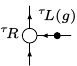}},
\end{align*}
the latter of which imply in particular that
\begin{align*}
  \raisebox{-15.27237pt}{\includegraphics{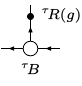}} =
  \raisebox{-15.27237pt}{\includegraphics{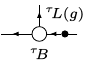}}, &&
  \raisebox{-15.27237pt}{\includegraphics{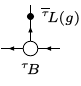}} =
  \raisebox{-15.27237pt}{\includegraphics{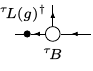}}.
\end{align*}
\subsection{Emergent Dual Symmetry}
\label{sec:repg-sym}
It is well known~\cite{haegemanGaugingQuantumStates2015, thorngrenFusionCategorySymmetry2024, lootensDualitiesOneDimensionalQuantum2023, garre-rubioEmergent2+1DTopological2024}
that the gauged state has,
in addition to the local symmetry arising by construction,
a global \(\Rep G\) symmetry, i.e.\ a categorical symmetry, which is said to be \emph{emergent}.
For any \(\sigma \in \Rep G\) consider the following MPO tensor
\begin{equation}
  \label{eq:75}
  \raisebox{-11.92636pt}{\includegraphics{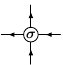}}
  \coloneq \sum_{g\in G}
    \mStk[2.1em]{{\raisebox{-10.61665pt}{\includegraphics{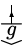}}}
               {\raisebox{-13.06595pt}{\includegraphics{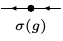}}}
               {\raisebox{-11.72636pt}{\includegraphics{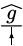}}}},
\end{equation}
the application of which to \(\tw{\tau}{B}\) yields
\begin{equation}
  \raisebox{-25.27237pt}{\includegraphics{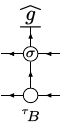}}=
  \Stk{\raisebox{-13.06595pt}{\includegraphics{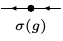}}}
      {\raisebox{-13.31592pt}{\includegraphics{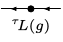}}} \cong
  \Stk{\raisebox{-13.31592pt}{\includegraphics{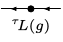}}}
      {\raisebox{-0pt}{\includegraphics{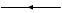}}} = 
  \raisebox{-8.625pt}{\includegraphics{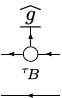}}
\end{equation}
for all \(g\in G\),
where the isomorphism is given by conjugation with
\(\sum_h\sigma(h)\otimes\dyad{h}{h}\) and an exchange of legs.
Note that in diagrams such as above, we order tensor product factors from top to bottom.
We recognize that the corresponding (properly normalized) MPO
\begin{equation}
  \label{eq:68}
  \mathcal{Q}_{\sigma} \coloneq \frac{1}{d_{\sigma}}\cdots \raisebox{-11.67966pt}{\includegraphics{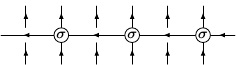}} \cdots
\end{equation}
leaves any gauged state invariant.
In terms of the gauging MPO we have
\(\mathcal{Q}_{\sigma}\circ \tw{\tau}{\mathcal{G}} = \tw{\tau}{\mathcal{G}}\)
and in the language of
\cite{lootensDualitiesOneDimensionalQuantum2023,lootensDualitiesOnedimensionalQuantum2024}
the gauging operator implements a duality between 
\(\cat{Vec}_G\)- and \(\Rep G\)-symmetric systems.

From Equation~\eqref{eq:75} it is easy to see that \(\mathcal{Q}\) constitutes a representation of \(\Rep G\),
since
\begin{align*}
  \raisebox{-21.92636pt}{\includegraphics{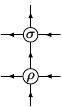}}
  = \sum_{g\in G}
  \mStk[2em]{{\raisebox{-10.61665pt}{\includegraphics{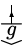}}}
             {\raisebox{-13.06595pt}{\includegraphics{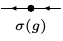}}}
             {\raisebox{-13.06595pt}{\includegraphics{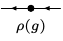}}}
             {\raisebox{-11.72636pt}{\includegraphics{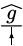}}}}
  =\raisebox{-11.92636pt}{\includegraphics{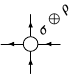}} \hspace{-.6em} \cong
  \bigoplus_{\pi} m_{\pi}^{\sigma\otimes \rho}\raisebox{-11.92636pt}{\includegraphics{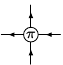}},
\end{align*}
where the sum goes over all irreducible \(\pi\in \Rep G\) and \(m^{\sigma\otimes \rho}_{\pi}\)
denotes the multiplicity of \(\pi\) in \(\sigma\otimes \rho\).
Hence,
\begin{align}
  \label{eq:73}
  \mathcal{Q}_{\sigma_1} \circ \mathcal{Q}_{\sigma_2}
  = \mathcal{Q}_{\sigma_1\otimes \sigma_2}
  = \sum_{\sigma} m^{\sigma_1\otimes \sigma_2}_{\sigma} \mathcal{Q}_{\sigma}.
\end{align}

Notice that the choice of dual symmetry, i.e.\ Equation~\eqref{eq:68}, is not canonical,
since for any isomorphism \(f\colon G\xrightarrow{\sim} H\) of groups we can construct
\begin{equation}
  \label{eq:70}
  \widehat{\mathcal{Q}}_{s} \coloneq \mathcal{Q}_{f^{\star}s},
\end{equation}
exhibiting the gauged MPS as a \(\Rep H\)--symmetric state.
While this observation might appear trivial at the moment,
it will become relevant when we analyze the mapping of phases under gauging in Section~\ref{sec:mapping-phases}.
For now, let us agree to fix the dual symmetry to be given by Equation~\eqref{eq:68}
and revisit this issue later.
\subsection{Gauged MPS}
\label{sec:gauged-mps}
Since the gauging operator can be written as an MPO, it is clear, cf.\ Figure \ref{fig:gauged-mps},
that gauging an MPS will yield again an MPS.
In this section we derive a convenient choice of generating tensors for this \emph{gauged MPS}.
\begin{figure}[H]
  \centering
  \raisebox{-44.8734pt}{\includegraphics{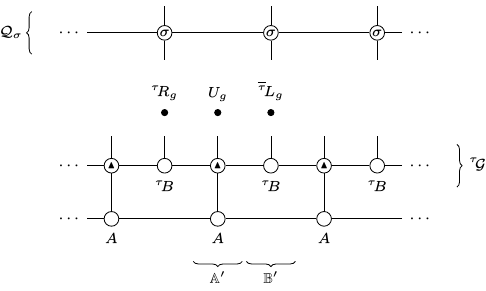}}
  \caption{
    Applying the gauging MPO from Lemma~\ref{lem:gauging-mpo} to a translation invariant MPS
    yields a two-site translation invariant MPS generated by the tensors
    \(\mathbb{A}^{\prime}\) and \(\mathbb{B}^{\prime}\) as indicated in the figure.
    We also depict schematically on which degrees of freedom the local symmetry
    \(\tw{\tau}{R}\otimes U\otimes \tw{\overline{\tau}}{L}\)
    and the emergent dual symmetry \(\mathcal{Q}\) act.
    Generalizing the figure to the non--translation invariant case is straightforward.
  }
  \label{fig:gauged-mps}
\end{figure}
\begin{lemma}\label{lem:gauged-mps}
  Let \(\ket{\Psi(A)}\) be a globally symmetric MPS, generated by the tensor \(A\),
  satisfying the intertwiner relation,
  \begin{align}\label{eq:49}
    \raisebox{-12.88516pt}{\includegraphics{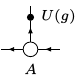}}
    = \raisebox{-12.88516pt}{\includegraphics{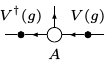}}
  \end{align}
  for all \(g \in G\), where \(V \in \Rep^{\alpha}_{\mathbb{C}} G\).
  Applying the gauging MPO of Lemma~\ref{lem:gauging-mpo} yields an MPS generated by the tensors
  \begin{align*}
    \raisebox{-15.03618pt}{\includegraphics{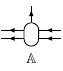}} 
    \coloneq \mStk[2em]{{} {\raisebox{-12.88516pt}{\includegraphics{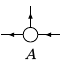}}} {\raisebox{-0pt}{\includegraphics{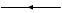}}}},&&
    \raisebox{-17.3845pt}{\includegraphics{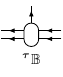}}
    \coloneq \frac{1}{|G|} \sum_{g\in G}
    \mStk[2em]{{\raisebox{-10.61665pt}{\includegraphics{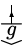}}}
               {\raisebox{-13.06595pt}{\includegraphics{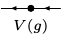}}}
               {\raisebox{-15.68393pt}{\includegraphics{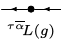}}}}.
  \end{align*}
  Put succinctly, \(\tw{\tau}{\mathcal{G}} \ket{\Psi(A)} = \ket{\Psi(\mathbb{A}, \twb{\tau}{\mathbb{B}})}\).
\end{lemma}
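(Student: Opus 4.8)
The plan is to apply the MPO of Lemma~\ref{lem:gauging-mpo} to $\ket{\Psi(A)}$ and then regroup the resulting tensor network into MPS form. Since the gauging operator is an MPO and $\ket{\Psi(A)}$ is an MPS, the image $\tw{\tau}{\mathcal{G}}\ket{\Psi(A)}$ is automatically again an MPS, and because the gauging MPO has a two-site unit cell --- an edge tensor $\tw{\tau}{B}$ of \eqref{eq:92} on every gauge site and a vertex tensor (the placeholder $\blacktriangle$) on every matter site --- the output is a two-site translation-invariant MPS, as sketched in Figure~\ref{fig:gauged-mps}. Everything then reduces to reading off the two generating tensors of the overlaid network and matching them with the expressions in the statement.

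For the gauge sites this is immediate: the MPO contributes the edge tensor $\tw{\tau}{B}$, and, following the prescription for $\ket{\Omega}$ in Definition~\ref{def:twisted-gauging}, its matter-input leg is contracted with the fixed vector $\ket{1}\in\mathbb{C}G$; plugging $\ket{1}$ into \eqref{eq:92} reproduces $\twb{\tau}{\mathbb{B}}$ leg for leg.

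For the matter sites one has to contract the vertex tensor of the gauging MPO (fixed through Equation~\eqref{eq:18}) with the physical leg of $A$. I would expand the vertex tensor, which carries the group average $\tfrac1{|G|}\sum_{g\in G}$ inherited from the Gauss-law projector $\tw{\tau}{P}_i$ together with $U(g)$ acting on the matter physical leg and the basis manipulations of $\mathbb{C}G$ on the two MPO virtual legs, and then invoke the intertwiner relation \eqref{eq:49} to move $U(g)$ off the physical leg of $A$ and onto its virtual legs as $V(g)$ and $V(g)^{-1}$. Threading the $\mathbb{C}G$-valued label along the new virtual bond and attaching the average $\tfrac1{|G|}\sum_g$ to the neighbouring gauge tensor (a matter of convention, matching where the sum sits in the lemma), what remains at the matter site is exactly $\mathbb{A}$.

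I expect the only real difficulty to be bookkeeping: keeping track of leg orientations, of how the $\mathbb{C}G$ labels run through the new virtual bonds, and of how the cocycle phases produced by $\tw{\overline{\tau}}{L}(g)\,\tw{\tau}{R}(h)\ket{1}$ (as computed inside the proof of Lemma~\ref{lem:gauging-mpo}) get split between $\mathbb{A}$ and $\twb{\tau}{\mathbb{B}}$ so as to reproduce their stated graphical definitions exactly. Because the intertwiner relation \eqref{eq:49} is used only site-locally, translation invariance of the output and its closing up under periodic boundary conditions require no separate argument.
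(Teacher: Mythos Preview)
Your outline misses the one non-trivial step in the paper's proof: a gauge transformation on the virtual legs. Reading off the MPO--MPS overlay directly gives tensors \(\mathbb{A}'',\mathbb{B}''\) on the enlarged virtual space \(\mathcal{V}\otimes\mathbb{C}G\), and these are \emph{not} the claimed \(\mathbb{A},\twb{\tau}{\mathbb{B}}\). In particular, your assertion that the gauge-site tensor reproduces \(\twb{\tau}{\mathbb{B}}\) ``leg for leg'' is wrong: at the gauge site the MPS virtual leg simply passes through, so the raw tensor is \(\tw{\tau}{L}(g)\otimes\mathbbm{1}_{\mathcal{V}}\), not \(V(g)\otimes\tw{\tau\overline{\alpha}}{L}(g)\). (Also, \(\tw{\tau}{B}\) in \eqref{eq:92} has no matter-input leg left to contract \(\ket{1}\) into; the auxiliary state was already absorbed in Lemma~\ref{lem:gauging-mpo}.)

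What the paper does is conjugate both raw tensors by \(S=\sum_{g}V(g)\otimes\dyad{g}{g}\). On the matter site, after using the intertwiner relation~\eqref{eq:49} one has \((\mathbb{A}'')^{\mu}=\sum_g\dyad{g}{g}\otimes V(g)^{-1}A^{\mu}V(g)=S^{-1}(\mathbbm{1}\otimes A^{\mu})S\). On the gauge site, the same conjugation turns \(\tw{\tau}{L}(g)\otimes\mathbbm{1}\) into \(\tw{\tau\overline{\alpha}}{L}(g)\otimes V(g)\); this is where the cocycle shift \(\tau\mapsto\tau\overline{\alpha}\) appears, and it comes from \(V(gh)V(h)^{-1}=\alpha(g,h)^{-1}V(g)\). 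Your phrase ``threading the \(\mathbb{C}G\)-valued label along the new virtual bond'' gestures at this, but the \(V(g)^{\pm1}\) you produce on the virtual legs of \(A\) depend on the summation variable \(g\) local to that site; passing them to the neighbour only makes sense as a simultaneous basis change by \(S\), and without doing that computation you will not see the \(\overline{\alpha}\) correction that distinguishes \(\twb{\tau}{\mathbb{B}}\) from \(\tw{\tau}{B}\otimes\mathbbm{1}\).
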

\begin{proof}
  It is clear, e.g., from Figure~\ref{fig:gauged-mps}, that
  \(\tw{\tau}{\mathcal{G}} \ket{\Psi(A)} = \ket{\Psi(\mathbb{A}^{\prime}, \mathbb{B}^{\prime})}
  = \ket{\Psi(\mathbb{A}^{\prime\prime}, \mathbb{B}^{\prime\prime})}\),
  where
  \begin{equation}
    \raisebox{-16.00954pt}{\includegraphics{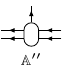}}
    \coloneq \sum_{g\in G}
    \Stk{\raisebox{-12.88516pt}{\includegraphics{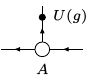}}}
        {\raisebox{-2.5pt}{\includegraphics{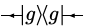}}\hfill}
    = \sum_{g\in G}
    \Stk{\hspace{-.2em}\raisebox{-12.88516pt}{\includegraphics{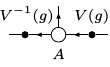}}}
        {\raisebox{-2.5pt}{\includegraphics{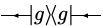}}},
  \end{equation}
  \begin{equation}
    \raisebox{-16.00954pt}{\includegraphics{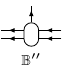}}
    \coloneq \sum_{g\in G}
    \mStk[2em]{{\raisebox{-10.61665pt}{\includegraphics{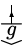}}}
               {\raisebox{-0pt}{\includegraphics{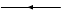}}}
               {\raisebox{-0pt}{\includegraphics{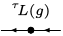}}}}.
  \end{equation}
  We have introduced \(\mathbb{A}^{\prime\prime}\) and \(\mathbb{B}^{\prime\prime}\),
  which are just the tensors \(\mathbb{A}^{\prime}\) and \(\mathbb{B}^{\prime}\), respectively,
  with the order of their virtual legs exchanged,
  to make the tensors easier to draw.
  On the virtual Hilbert space of \(\mathbb{A}^{\prime\prime}\) and \(\mathbb{B}^{\prime\prime}\)
  we define the basis transformation
  \begin{equation}
    S \coloneq \sum_{g\in G} V(g) \otimes \dyad{g}{g}.
  \end{equation}
  Then one finds (see Appendix~\ref{sec:basis-transf} for the explicit computations) that
  \begin{subequations}
   \label{eq:3}
    \begin{align}\label{eq:3a}
      (\mathbb{A}^{\prime\prime})^{\mu}
      &= S^{-1}\big(\eqmakebox[S]{\(A^{\mu} \otimes \mathbbm{1}_n\)}\big)S,\\\label{eq:3b}
      (\mathbb{B}^{\prime\prime})^g
      &= S^{-1}\big(\eqmakebox[S]{\,\(V(g)\otimes \tw{\tau\overline{\alpha}}{L}(g)\,\)}\big)S,
    \end{align}
  \end{subequations}
  hence, the tensors \(\{ \mathbb{A}, \twb{\tau}{\mathbb{B}} \}\)
  generate the same MPS as the tensors
  \(\left\{ \mathbb{A}^{\prime\prime}, \mathbb{B}^{\prime\prime} \right\}\).
\end{proof}
Notice that the tensor \(\tw{\tau}{B}\), cf.\ Equation~\eqref{eq:92},
has a virtual symmetry, given by the representation \(\tw{\overline{\tau}}{R}\).
Hence, the gauged MPS has a virtual symmetry, given by \(\tw{\overline{\tau}}{R}(g) \otimes V(g)\),
which, after exchanging the order of the Hilbert spaces, transforms to
\begin{equation}
  V(g)\otimes\tw{\overline{\tau}}{R}(g) 
  = S^{-1}\left(\eqmakebox[basis]{\(\mathbbm{1} \otimes \tw{\alpha\overline{\tau}}{R}(g) \)}\right)S.
\end{equation}
\begin{remark}
  The only reason why we restrict to translation invariant states in Lemma~\ref{lem:gauged-mps},
  is to simplify notation.
  Generally, if the state \(\ket{\Psi(A)}\) is generated by the tensors \(A_i\), \(i \in \Lambda\),
  satisfying the intertwiner relations,
  \begin{align}\label{eq:56}
    \raisebox{-13.88516pt}{\includegraphics{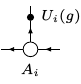}}
    = \raisebox{-13.88516pt}{\includegraphics{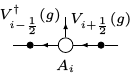}}
  \end{align}
  for all \(g \in G\), where \(V_e \in \Rep^{\alpha}_{\mathbb{C}} G\), for all \(e \in \Lambda+\frac{1}{2}\),
  the gauged MPS is then generated by the tensors
  \begin{align*}
    \raisebox{-16.03618pt}{\includegraphics{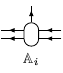}} 
    \coloneq \mStk[2em]{{} {\raisebox{-13.88516pt}{\includegraphics{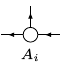}}} {\raisebox{-0pt}{\includegraphics{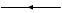}}}},&&
    \raisebox{-17.3845pt}{\includegraphics{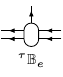}}
    \coloneq \frac{1}{|G|} \sum_{g\in G}
    \mStk[2em]{{\raisebox{-10.61665pt}{\includegraphics{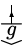}}}
               {\raisebox{-13.06595pt}{\includegraphics{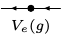}}}
               {\raisebox{-15.68393pt}{\includegraphics{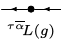}}}}.
  \end{align*}
  The proof proceeds analogously to the one above.
  Also, everything discussed in the remaining sections of the paper continues to hold
  in the non--translation invariant case.
\end{remark}
\subsection{GHZ Structure After Gauging}
\label{sec:ghz-structure}
Conventional wisdom \cite{haegemanGaugingQuantumStates2015} says that
the application of the non-twisted gauging procedure described above
to an injective MPS results in a non-injective state.
However, we will see that the structure of the gauged state
depends intimately on the phase of the initial state.
For instance, gauging an MPS that is initially in an MNC phase
will result in a gauged MPS that is also injective, as we show below.
In this section we analyze the case of a general finite group \(G\).
Once we restrict our attention to abelian groups,
we shall repeat some of the arguments here in more detail.
Hence, this section should also serve as less technical introduction to the themes of later sections.

Consider for now a translation invariant, injective MPS \(\ket{\Psi(A)}\) in the phase given by
\([\alpha]\in \operatorname{H}^2(G, \operatorname{U}(1))\)
and denote by \(\rho_1, \ldots, \rho_r\) the distinct, i.e.\ pairwise linearly inequivalent,
irreducible projective representations of \(G\) with 2-cocycle \(\overline{\alpha}\).
We let \(d_j \coloneq d_{\rho_j} \equiv |\rho_j|\) denote the degree (or dimension) of \(\rho_j\).
Just as in the linear case, the left regular \(\overline{\alpha}\)-projective representation
\(\tw{\overline{\alpha}}{L}\), cf.\ Definition~\ref{def:regular-projective-main}, decomposes as
\begin{equation}\label{eq:2}
  \tw{\overline{\alpha}}{L} \cong
  \bigoplus_{j=1}^r \mathbbm{1}_{d_j}\otimes \rho_j,
\end{equation}
which we show in Corollary~\ref{cor:regular-decomp}.
Notice that the basis transformation
implementing Equation~\eqref{eq:2} for \(\mathbb{B}\)
commutes with all \(\mathbb{A}^{\mu}\),
hence it block diagonalizes the tensors generating the gauged MPS
\(\ket{\Psi(\mathbb{A}, \mathbb{B})}\).
Furthermore, since Equation~\eqref{eq:2} implies \(|G| = d_1^2 + \cdots + d_r^2\),
we can find a common basis transformation such that 
\begin{subequations}\label{eq:1}
  \begin{align}
    \mathbb{A}^{\mu} &\cong \bigoplus_{j=1}^r \mathbbm{1}_{d_j}\otimes
      \underbrace{\left(\mathbbm{1}_{d_j}\otimes A^{\mu}\right)}_{=\mathbb{A}^{\mu}_{\rho_j}},\\
    \mathbb{B}^g &\cong \bigoplus_{j=1}^r \mathbbm{1}_{d_j}\otimes
      \underbrace{\left(\rho_j(g)\otimes V(g)\right)}_{=\mathbb{B}^g_{\rho_j}},
  \end{align}
\end{subequations}
where, here and below, we use \(\cong\) to indicate equality up to conjugation
by the same invertible matrix
and we introduced the notation
\begin{align}\label{eq:4}
  \mathbb{A}^{\mu}_{\lambda} \coloneq \mathbbm{1}_{|\lambda|}\otimes A^{\mu},&&
  \mathbb{B}^g_{\lambda} \coloneq \lambda(g)\otimes V(g),
\end{align}
for any \(\lambda\in\Rep_{\mathbb{C}}^{\overline{\alpha}}G\).
If \(A\) is injective (as we have assumed here) and \(\lambda\) is irreducible then the two site tensors
\(\mathbb{A}_{\lambda}\mathbb{B}_{\lambda}\) and \(\mathbb{B}_{\lambda}\mathbb{A}_{\lambda}\)
are also injective,
hence the decomposition \eqref{eq:1} represents a decomposition of the MPS tensors into injective blocks.
The case when \(G\) has only one projective irrep with 2-cocycle \(\alpha\)
is particularly simple to analyze.
\begin{example}[MNC phases]\label{ex:mnc-phases}
  If \(G\) is abelian and \(\alpha\) is MNC,
  there exists only a single irreducible \(\overline{\alpha}\)-projective representation
  up to linear equivalence and its degree is given by \(d\coloneq \sqrt{|G|}\).
  In this case Equation~\eqref{eq:2} reduces to
  \(\tw{\overline{\alpha}}{L} \cong \mathbbm{1}_d\otimes \rho\)
  and Equations \eqref{eq:1} reduce to
  \begin{align}
    \mathbb{A}^{\mu} \cong \mathbbm{1}_d\otimes \mathbb{A}^{\mu}_{\rho},&&
    \mathbb{B}^g \cong \mathbbm{1}_d\otimes \mathbb{B}^g_{\rho},
  \end{align}
  implying that \(\mathbb{A}\) and \(\mathbb{B}\) each consists of \(m\) identical blocks.
  We can discard the duplicate blocks (they only contribute an overall factor to the state)
  to arrive at \(\mathbb{A}^{\mu} \simeq \mathbbm{1}_d\otimes A^{\mu}\)
  and \(\mathbb{B}^g \simeq \rho(g)\otimes V(g)\),
  up to normalization,\footnote{
    We use the symbol \(\simeq\) to denote equivalence up to normalization of the generated MPS.
  } or more formally
  \begin{equation}
    \label{eq:53}
    \ket{\Psi(\mathbb{A}, \mathbb{B})} = 
    \sqrt{|G|} \ket{\Psi(\mathbb{A}_{\rho}, \mathbb{B}_{\rho})}.
  \end{equation}
  Crucially \(\ket{\Psi(\mathbb{A}_{\rho}, \mathbb{B}_{\rho})}\) is an injective MPS.
  Incidentally, MNC phases, e.g.\ the Haldane phase for
  \(G = \mathbb{Z}_2\times \mathbb{Z}_2\), the Klein four-group,
  saturate the upper bound on inaccessible entanglement
  of SPT phases in one dimension~\cite{degrootInaccessibleEntanglementSymmetry2020},
  hinting at a possible connection.
\end{example}
The general case of multiple inequivalent projective irreps is more interesting.
To emphasize the role of the irreps in labeling the injective blocks of the MPS
(and to reduce clutter)
let us define
\begin{equation}
  \label{eq:51}
  \ket{\lambda} \coloneq \ket{\Psi(\mathbb{A}_{\lambda}, \mathbb{B}_{\lambda})}
\end{equation}
for any \(\lambda\in\Rep_{\mathbb{C}}^{\overline{\alpha}}G\).
Notice that in this notation the gauged MPS can be written as \(\ket{\tw{\overline{\alpha}}{L}}\)
and decomposes into
\begin{equation}
  \ket{\Psi(\mathbb{A}, \mathbb{B})} \equiv \ket{\tw{\overline{\alpha}}{L}} = \sum_{j=1}^r d_j \ket{\rho_j}.
\end{equation}
Here we cannot simply discard duplicate blocks,
since this would change the relative weight of the states
described by the individual blocks.
The states generated by the inequivalent projective irreps are pairwise orthogonal
\begin{equation}
  \label{eq:54}
  \braket{\rho_j|\rho_k} \sim \delta_{jk}
\end{equation}
as \(n\rightarrow\infty\)
and if the MPS generated by \(A\) is an RG--fixed point (zero correlation length), 
the orthogonality already holds at the level of the individual tensors.
The action of the emergent symmetry MPO on the individual blocks
also becomes more transparent in this notation:
\begin{equation}\label{eq:65}
  \mathcal{Q}_{\sigma} \ket{\lambda} = \frac{1}{d_{\sigma}}\ket{\sigma\otimes \lambda},
\end{equation}
indicating that the dual symmetry acts by permuting the injective blocks of the gauged MPS.
The statement that the gauged MPS is invariant under the action of the dual symmetry reads
\begin{align*}
  \mathcal{Q}_{\sigma} \ket{\tw{\overline{\alpha}}{L}}
  = \frac{1}{d_{\sigma}}\ket{\sigma\otimes \tw{\overline{\alpha}}{L}}
  = \frac{1}{d_{\sigma}}\ket{\mathbbm{1}_{d_{\sigma}}\otimes \tw{\overline{\alpha}}{L}}
  = \ket{\tw{\overline{\alpha}}{L}},
\end{align*}
where we used
\(\sigma\otimes \tw{\overline{\alpha}}{L} \cong \mathbbm{1}_{d_{\sigma}}\otimes \tw{\overline{\alpha}}{L}\),
cf.\ Lemma~\ref{lem:a16}.
It follows from the same lemma that the gauged MPS is the only state of the form \(\ket{\lambda}\)
invariant under \(\mathcal{Q}_{\sigma}\) for all \(\sigma \in \Rep G\).

\section{Mapping of Phases for Finite Abelian Groups}
\label{sec:mapping-phases}

In this section we will restrict the finite symmetry group \(G\) to be abelian
and analyze in detail how an MPS in the quantum phase \((G_{\mathrm{u}}, [\alpha])\)
behaves under twisted gauging,
where \(G_{\mathrm{u}}\leq G\) is the unbroken symmetry group
and the SPT part of the phase is characterized by
\([\alpha]\in \operatorname{H}^2(G_{\mathrm{u}}, \operatorname{U}(1))\).
We show (under rather general assumptions) that the phase of the gauged MPS
is independent of the particular MPS used to represent the initial phase
and we conclude that gauging lifts to a mapping on the phases themselves.

\subsection{Dual On-Site Symmetry}
\label{sec:dual-site-symmetry}

When \(G\) is abelian the categorical \(\Rep G\) symmetry,
discussed in Section~\ref{sec:repg-sym},
naturally reduces to a global on-site
\(\widehat{G} \coloneq \operatorname{Hom}(G, \operatorname{U}(1))\)
group symmetry.
Namely, viewing \(\chi\in \widehat{G}\) as an element
(and multiplication given by the tensor product) of \(\Rep G\),
we get
\begin{align}\label{eq:72}
  \mathcal{Q}_{\chi}\coloneq& \cdots\;
  \,\,\raisebox{-9.08113pt}{\includegraphics{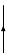}}\hspace{1.7em}
  \raisebox{-9.08113pt}{\includegraphics{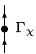}}
  \,\,\raisebox{-9.08113pt}{\includegraphics{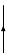}}\hspace{1.7em}
  \raisebox{-9.08113pt}{\includegraphics{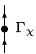}}
  \,\,\raisebox{-9.08113pt}{\includegraphics{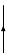}}\hspace{1.7em}
  \raisebox{-9.08113pt}{\includegraphics{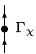}} \,
  \cdots,
\end{align}
equivalently \(\mathcal{Q} = (\operatorname{id}\otimes \Gamma)^{\otimes n}\), with
\begin{equation}
  \label{eq:50}
  \raisebox{-12.86032pt}{\includegraphics{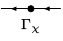}} \coloneq \sum_{g\in G} \chi(g) \,\raisebox{-2.5pt}{\includegraphics{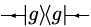}}.
\end{equation}
Hence, the state
\(\tw{\tau}{\mathcal{G}}\ket{\Psi(A)} = \ket{\Psi(\mathbb{A},\twb{\tau}{\mathbb{B}})}\)
is a translation invariant MPS,
generated by the tensor \({\mathbb{A}\!\twb{\tau}{\mathbb{B}}}\),
with local physical Hilbert space \(\mathcal{H}\otimes \mathbb{C}G\)
and on-site \(\widehat{G}\)-symmetry given by the local representation \(\mathrm{id} \otimes \Gamma\).

To characterize the phase after gauging we have to decompose
\(\ket{\Psi(\mathbb{A},\twb{\tau}{\mathbb{B}})}\)
into normal blocks and analyze how the operator \(\mathrm{id} \otimes \Gamma\),
generating the global symmetry \(\mathcal{Q}\),
acts on the tensors generating the blocks.
In detail we have to establish whether acting with \(\mathrm{id}\otimes \Gamma_{\chi}\) permutes the tensors
and/or how this action is represented on the virtual degrees of freedom of the individual tensors,
for any given \(\chi\in \widehat{G}\).

Already in Section~\ref{sec:repg-sym}
we mentioned the subtle issue of choice of dual symmetry,
i.e.\ given an isomorphism \(f\colon G \xrightarrow{\sim} H\),
it is equally valid to view \(\mathcal{Q}\) as a global \(\widehat{H}\)-symmetry.
Therefore, as part of a precise definition of gauging we also have to fix a choice of dual symmetry,
which we can do by specifying the isomorphism \(f\),
relating it to the reference symmetry \(\mathcal{Q}\) defined in Equation~\eqref{eq:72}.
In general, we should therefore analyze the action of \(\operatorname{id}\otimes \Gamma^f\) instead,
where
\begin{equation}
  \label{eq:71}
  \raisebox{-14.6211pt}{\includegraphics{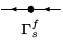}}
  \coloneq \sum_{g\in G} \underbrace{\hat{f}(s)(g)}_{s(f(g))} \,\raisebox{-2.5pt}{\includegraphics{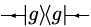}}
\end{equation}
for \(s \in \widehat{H}\) and a fixed but arbitrary isomorphism \(f\).
However, the general result can easily be recovered from the analysis of the specific choice of dual symmetry \(\mathcal{Q}\) made above, as we will discuss below.

For finite groups there always exists a (non-canonical) isomorphism \(G\cong \widehat{G}\).
Hence, we can also view the dual symmetry as a \(G\)-symmetry,
enabling us to compare directly the initial phase to the one resulting from the gauging procedure.
An interesting question to consider in this context is which (if any) phases are invariant
under gauging for certain choices of isomorphism \(f\).

\subsection{Main Theorem}
\label{sec:main-theorem}

In the Main Theorem~\ref{thm:main} we will describe how the phase
\((G_{\mathrm{u}}\leq G,[\alpha]\in\operatorname{H}^2(G_{\mathrm{u}}, \operatorname{U}(1)))\)
behaves under (twisted) gauging, assuming
that \(G_{\mathrm{u}}\) has a \emph{complement} in \(G\),
that is we assume there exists a subgroup \(G_{\mathrm{b}}\leq G\) such that
\begin{align}\label{eq:34}
  G \cong G_{\mathrm{b}} \times G_{\mathrm{u}}.
\end{align}
Note that if \(G_{\mathrm{b}}\) exists, it has to be isomorphic to \(G/G_{\mathrm{u}}\).
We call \(G_{\mathrm{b}}\) the \emph{broken} symmetry group.
Furthermore, in order to be able to analyze the twisted case,
we have to assume that the twist
\([\tau]\in \operatorname{H}^2(G,\operatorname{U}(1))\)
also factorizes according to Equation~\eqref{eq:34}
into \(\tau = \tau_{\mathrm{b}}\cdot \tau_{\mathrm{u}}\).
Stated more precisely, there exist
\(\tau_{\mathrm{b}} \in \operatorname{Z}^2(G_{\mathrm{b}}, \operatorname{U}(1))\)
and \(\tau_{\mathrm{u}} \in \operatorname{Z}^2(G_{\mathrm{u}}, \operatorname{U}(1))\)
such that
\begin{equation}\label{eq:98}
  \tau((b_1, u_1), (b_2, u_2)) = \tau_{\mathrm{b}}(b_1, b_2)\cdot \tau_{\mathrm{u}}(u_1, u_2)
\end{equation}
for all \(b_1, b_2 \in G_{\mathrm{b}}\) and \(u_1, u_2 \in G_{\mathrm{u}}\).

Before proceeding to the main theorem, we introduce two important lemmas,
the first being crucial for extracting the SPT part of the phase after gauging
and the second establishing the structure of a general matrix product state
with well defined phase \((G_{\mathrm{u}}, [\alpha])\),
making it easier to decompose the gauged MPS into normal blocks.
\begin{lemma}\label{lem:18}
  The slant product
  \(\imath\gamma\colon G_{\mathrm{u}}\rightarrow \widehat{G_{\mathrm{u}}}\),
  cf.\ Definition~\ref{def:slant-product-i},
  of \([\gamma] \in \operatorname{H}^2(G_{\mathrm{u}}, \operatorname{U}(1))\)
  induces a canonical isomorphism 
  \begin{equation}
    \label{eq:63}
    \widehat{\imath\gamma}\colon G_{\mathrm{u}}/K_{\gamma}
    \xlongrightarrow{\sim}
    \operatorname{ker} \Res ^{G_{\mathrm{u}}}_{K_{\gamma}},
  \end{equation}
  where \(K_{\gamma} \leq G_{\mathrm{u}}\) denotes the kernel of \(\imath\gamma\)
  and \(\Res \) denotes the canonical restriction\footnote{%
    \(\Res ^G_H \rho\equiv\rho|_H\) for \(\rho\in\Rep^{\alpha}_{\mathbb{C}}G\) and \(H\leq G\).
  } map.
\end{lemma}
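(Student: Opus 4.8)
We want a canonical isomorphism $G_{\mathrm{u}}/K_\gamma \xrightarrow{\sim} \ker\Res^{G_{\mathrm{u}}}_{K_\gamma}$, where the target is the subgroup of characters $\chi\in\widehat{G_{\mathrm{u}}}$ that restrict trivially on $K_\gamma$. The slant product $\imath\gamma\colon G_{\mathrm{u}}\to\widehat{G_{\mathrm{u}}}$, $g\mapsto(h\mapsto \gamma(g,h)/\gamma(h,g))$, is a group homomorphism with kernel exactly $K_\gamma$ by Definition~\ref{def:degeneracy1}, so the first isomorphism theorem gives an injection $G_{\mathrm{u}}/K_\gamma \hookrightarrow \widehat{G_{\mathrm{u}}}$ with image $\operatorname{im}(\imath\gamma)$. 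The whole content of the lemma is therefore the identity $\operatorname{im}(\imath\gamma) = \ker\Res^{G_{\mathrm{u}}}_{K_\gamma}$.

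**Proof plan.** First I would dispatch the easy inclusion $\operatorname{im}(\imath\gamma)\subseteq\ker\Res^{G_{\mathrm{u}}}_{K_\gamma}$: for $g\in G_{\mathrm{u}}$ and $k\in K_\gamma$ we have $(\imath_g\gamma)(k)=\gamma(g,k)/\gamma(k,g)=1$ by the very definition of $K_\gamma$, so $\imath_g\gamma$ kills $K_\gamma$. For the reverse inclusion I would use a counting argument. On the one hand $|\operatorname{im}(\imath\gamma)| = |G_{\mathrm{u}}/K_\gamma| = |G_{\mathrm{u}}|/|K_\gamma|$. On the other hand, $\ker\Res^{G_{\mathrm{u}}}_{K_\gamma}$ is the annihilator of $K_\gamma$ inside $\widehat{G_{\mathrm{u}}}$, which by Pontryagin duality for finite abelian groups is isomorphic to $\widehat{G_{\mathrm{u}}/K_\gamma}$ and hence has order $|G_{\mathrm{u}}|/|K_\gamma|$ as well. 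Since one group is contained in the other and they have the same (finite) cardinality, they coincide, and $\widehat{\imath\gamma}$ is the desired isomorphism. I would remark that it is canonical because $\imath\gamma$ itself is canonically attached to the cohomology class $[\gamma]$ — changing $\gamma$ within its class rescales $\gamma(g,h)$ and $\gamma(h,g)$ by the same factor $c(g)c(h)/c(gh)$, so the ratio, and therefore $\imath\gamma$, is unchanged (this is already implicit in the discussion around Definition~\ref{def:slant-product-i} and Appendix~\ref{sec:proj-rep-abelian-groups}).

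**Main obstacle.** There is no serious difficulty here; the only thing to be careful about is invoking the correct duality fact in the right form — namely that for a finite abelian group $A$ and a subgroup $H\leq A$, the restriction map $\widehat{A}\to\widehat{H}$ is surjective with kernel the annihilator $H^\perp$, and $|H^\perp| = |A|/|H|$. This is standard and presumably collected in Appendix~\ref{sec:proj-rep-abelian-groups}, so I would simply cite it rather than reprove it. An alternative to the counting argument, if one prefers to avoid cardinalities, is to show surjectivity of $G_{\mathrm{u}}/K_\gamma \to \ker\Res^{G_{\mathrm{u}}}_{K_\gamma}$ directly: given $\chi\in\widehat{G_{\mathrm{u}}}$ trivial on $K_\gamma$, one must produce $g$ with $\imath_g\gamma=\chi$. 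This amounts to the non-degeneracy of the induced alternating pairing $(g,h)\mapsto \gamma(g,h)/\gamma(h,g)$ on $G_{\mathrm{u}}/K_\gamma$, which follows because its radical is precisely $K_\gamma/K_\gamma$ by construction; the counting route packages exactly this fact more cheaply, so I would go with it.
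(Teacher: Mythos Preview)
Your proof is correct and follows essentially the same route as the paper: first isomorphism theorem for $\imath\gamma$, the easy inclusion $\operatorname{im}\imath\gamma\subseteq\ker\Res^{G_{\mathrm{u}}}_{K_\gamma}$, and then a cardinality count to upgrade the inclusion to equality. The paper packages the count via Lemma~\ref{lem:13} and Corollary~\ref{cor:can-iso} (using $r_\gamma=|K_\gamma|$ and surjectivity of restriction) rather than citing the annihilator formula from Pontryagin duality directly, but this is the same argument in slightly different clothing.
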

\begin{proof}
  By the first group isomorphism theorem we get the induced isomorphism
  \begin{equation}
    \widehat{\imath\gamma}\colon G_{\mathrm{u}}/K_{\gamma} \equiv
    G_{\mathrm{u}}/ \operatorname{ker}\imath\gamma \xlongrightarrow{\sim} \operatorname{im}\imath\gamma
    \leq \operatorname{ker}\Res ^{G_{\mathrm{u}}}_{K_{\gamma}}
  \end{equation}
  and we show 
  \(\operatorname{im}\imath\gamma = \operatorname{ker}\Res ^{G_{\mathrm{u}}}_{K_{\gamma}}\)
  in Corollary~\ref{cor:can-iso}.
\end{proof}
\begin{lemma}\label{lem:main2}
  Let \(\ket{\Psi(A)}\) be a translation invariant MPS in the phase \((G_{\mathrm{u}}, [\alpha])\),
  with respect to a global on-site symmetry of the abelian group
  \(G = G_{\mathrm{b}}\times G_{\mathrm{u}}\).
  There exists a basis of the virtual Hilbert space \(\mathcal{V}\) such that:
  \begin{enumerate}
  \item The representation \((\mathcal{V}, V)\) of \(G_{\mathrm{b}}\times G_{\mathrm{u}}\)
    factorizes into
    \(\mathcal{V} \cong \mathbb{C}G_{\mathrm{b}}\otimes \mathcal{W}\)
    where \((\mathbb{C}G_{\mathrm{b}}, L) \in \Rep G_{\mathrm{b}}\)
    and \((\mathcal{W}, W)\in \operatorname{Rep}_{\mathbb{C}}^{\alpha} G_{\mathrm{u}}\).
    Graphically we depict the intertwiner property as
    \begin{equation}\label{eq:30}
      \raisebox{-14.9973pt}{\includegraphics{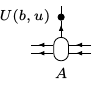}} = 
      \raisebox{-16.29784pt}{\includegraphics{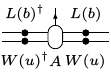}}
    \end{equation}
    for all \((b,u) \in G_{\mathrm{b}} \times G_{\mathrm{u}}\).
  \item The tensor \(A\) is \(G_{\mathrm{b}}\)-injective
    with respect to the representation \(\lambda\otimes\mathbbm{1} \in \Rep G_{\mathrm{b}}\), where
    \begin{equation}
      \lambda(b)
      \coloneq \sum_{c\in G_{\mathrm{b}}} \omega(b,c) \dyad{c}{c},
    \end{equation}
    with \(\omega\) an arbitrary but fixed non-degenerate bicharacter of \(G_{\mathrm{b}}\).
    Explicitly, this means that \(A\) has a purely virtual symmetry
    \begin{align}
      \label{eq:108}
      \raisebox{-14.9973pt}{\includegraphics{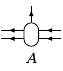}} = 
      \raisebox{-14.9973pt}{\includegraphics{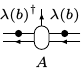}}
    \end{align}
    for all \(b\in G_{\mathrm{b}}\)
    and the transfer operator of \(A\) satisfies for \(n\in \mathbb{N}\)
    \begin{equation*}
      \label{eq:109}
      \left(\!\raisebox{-29.22366pt}{\includegraphics{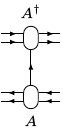}}\!\right)^{\hspace{-.3em}n} \hspace{-.45em}
      = \frac{1}{|G_{\mathrm{b}}|d_{\mathcal{W}}} \sum_{b\in G_{\mathrm{b}}}
      \raisebox{-13.91805pt}{\includegraphics{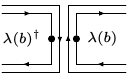}}
      + \mathcal{O}(\epsilon^n),
    \end{equation*}
    where \(0\leq \epsilon < 1\) is an upper bound on the magnitude of the subleading eigenvalues
    of the transfer operator \(\mathbb{T}(A, A)\).
  \end{enumerate}
  \emph{The proof is given in Appendix~\ref{sec:ansatz}.}
\end{lemma}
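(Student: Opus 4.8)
The plan is to construct the claimed basis in two stages, corresponding to the two assertions of the lemma. First I would use the hypothesis that $G = G_{\mathrm{b}}\times G_{\mathrm{u}}$ together with the intertwiner property of $A$. By Definition~\ref{def:global-symmetry} and the Fundamental Theorem of MPSs (Theorem~\ref{thm:fundamental-thm}), the global $G$-symmetry of $\ket{\Psi(A)}$ implies that there is a projective representation $(\mathcal{V}, V) \in \Rep_{\mathbb{C}}^{\beta} G$ (for some cocycle $\beta$ on $G$) satisfying the intertwiner relation $\eqref{eq:49}$ after a gauge transformation of $A$. The phase being $(G_{\mathrm{u}}, [\alpha])$ means precisely that $\beta$ restricted to $G_{\mathrm{u}}$ is cohomologous to $\alpha$, and that the blocks of $A$ are permuted freely and transitively by $G_{\mathrm{b}}$ (this is the content of $\eqref{eq:91}$ and the robustness assumption). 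Restricting $V$ to $G_{\mathrm{b}}$, the free transitive permutation action on the $|G_{\mathrm{b}}|$ normal blocks forces $V|_{G_{\mathrm{b}}}$ to be (conjugate to) the regular representation $L$ of $G_{\mathrm{b}}$, possibly tensored with a multiplicity space. Since $G$ is abelian and the blocks are labelled by $G_{\mathrm{b}}$, I can choose the basis of $\mathcal{V}$ of the form $\{\ket{c}\otimes\ket{w}\}$ with $c\in G_{\mathrm{b}}$ indexing the block and $w$ running over a fixed space $\mathcal{W}$; the $G_{\mathrm{u}}$-action preserves each block and acts identically on each by a projective representation $(\mathcal{W}, W)\in \Rep_{\mathbb{C}}^{\alpha} G_{\mathrm{u}}$ (identical across blocks because $G_{\mathrm{u}}$ is central/normal and the permutation action is by the complement $G_{\mathrm{b}}$). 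This gives $\mathcal{V}\cong \mathbb{C}G_{\mathrm{b}}\otimes\mathcal{W}$ with $V(b,u) = L(b)\otimes W(u)$, which is exactly $\eqref{eq:30}$; the factorization hypothesis $\tau = \tau_{\mathrm{b}}\cdot\tau_{\mathrm{u}}$ is what guarantees the cocycle of $V$ splits compatibly so that $W$ is genuinely $\alpha$-projective with no cross terms.

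For the second assertion, I would exploit the $G_{\mathrm{b}}$-injectivity. Blocking $A$ over enough sites, the blocked tensor is a direct sum of $|G_{\mathrm{b}}|$ injective tensors, one per block of $\mathbb{C}G_{\mathrm{b}}$, all equal up to the $G_{\mathrm{b}}$-permutation; this is the definition of $G_{\mathrm{b}}$-injectivity (the canonical-form statement in Section~\ref{sec:spt-ssb-phases}). The purely virtual symmetry $\eqref{eq:108}$ asserts that conjugating the physical-leg-traced tensor by $\lambda(b)\otimes\mathbbm{1}$ leaves $A$ invariant; this follows because $\lambda(b) = \sum_c \omega(b,c)\dyad{c}{c}$ is diagonal in the block basis and the state $\ket{\Psi(A)}$ only sees relative phases between blocks that cancel in the trace — more precisely, one checks $\sum_\mu (\lambda(b)\otimes\mathbbm{1}) A^\mu (\lambda(b)\otimes\mathbbm{1})^{-1}$ reproduces $A^\mu$ block by block since $A^\mu$ connects block $c$ only to block $c$ (the physical index carries no $G_{\mathrm{b}}$-charge once we are inside a normal block). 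The non-degeneracy of the bicharacter $\omega$ is needed so that $\lambda$ is a faithful (indeed regular-type) representation of $G_{\mathrm{b}}$, which in turn makes the transfer-operator projector below have exactly the right rank.

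The transfer-operator formula is then a standard fixed-point computation: $\mathbb{T}(A,A)$ acting on $\mathcal{V}^*\otimes\mathcal{V} \cong (\mathbb{C}G_{\mathrm{b}}\otimes\mathcal{W})^*\otimes(\mathbb{C}G_{\mathrm{b}}\otimes\mathcal{W})$ has, by injectivity of each normal block, a unique leading eigenvalue $1$ within each of the $|G_{\mathrm{b}}|$ diagonal sub-blocks (pairing block $c$ with block $c$); the off-diagonal sectors decay at rate $\le\epsilon$. Normalizing so the leading eigenvalue is $1$, the leading part of $\mathbb{T}^n$ converges to the projector onto the span of the $|G_{\mathrm{b}}|$ fixed points, which — once one writes down the left and right fixed points explicitly using injectivity and the identification $\mathcal{W}\otimes\mathcal{W}^*$ contains a unique invariant — is exactly $\tfrac{1}{|G_{\mathrm{b}}|d_{\mathcal{W}}}\sum_{b\in G_{\mathrm{b}}}(\text{the depicted rank-one operator built from }\ket{b})$, up to $\mathcal{O}(\epsilon^n)$. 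The main obstacle I anticipate is the first assertion: carefully justifying that the $G_{\mathrm{u}}$-action can be taken \emph{identical} on every $G_{\mathrm{b}}$-block simultaneously with putting $V|_{G_{\mathrm{b}}}$ in regular form — i.e.\ that the two structures are compatible in a single basis — and that the resulting cocycle is exactly $\alpha$ (not merely cohomologous) after a final gauge transformation of $A$; this is where the hypothesis that $G_{\mathrm{u}}$ admits a complement and that $\tau$ factorizes does the real work, and it is the step I would write out in full in the appendix. Everything downstream is bookkeeping with regular representations and geometric series.
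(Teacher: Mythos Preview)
Your overall strategy matches the paper's: establish the canonical block form with virtual representation $V = L\otimes W$, then read off the purely virtual $\lambda$-symmetry from block-diagonality, and finally compute the leading part of the transfer operator via its fixed points in each block sector. Two points, however, need correction.

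First, your invocation of the factorization $\tau = \tau_{\mathrm{b}}\cdot\tau_{\mathrm{u}}$ is a red herring. The twist $\tau$ belongs to the gauging map and plays no role whatsoever in this lemma, which concerns only the structure of the \emph{ungauged} MPS. The hypothesis doing the work is simply $G = G_{\mathrm{b}}\times G_{\mathrm{u}}$: this is what lets the virtual representation split as $L\otimes W$ (the paper cites~\cite{schuchClassifyingQuantumPhases2011} for this step rather than rederiving it), and the cocycle on the $G_{\mathrm{b}}$-factor is automatically trivial because $L$ is the genuine regular representation. Remove $\tau$ from your argument entirely.

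Second, the compatibility issue you flag as the main obstacle is resolved by a different mechanism than you suggest, and it is the one step that is \emph{not} mere bookkeeping. After block-diagonalizing and normalizing so each block transfer operator $\mathbb{T}_k$ has leading eigenvalue $1$, its unique leading eigenvector $v_k$ need not be the identity, and absorbing $\sqrt{v_k}$ into the tensors could spoil the form $V = L\otimes W$. The paper observes that $\mathbb{T}_k$ inherits a $W^\dag\otimes W$ symmetry from the intertwiner property of $A_k$; uniqueness of $v_k$ then forces $\sqrt{v_k}$ to commute with $W(u)$ for all $u$, so the normalizing gauge preserves the factorized virtual representation. You should replace the (incorrect) appeal to $\tau$-factorization with this transfer-operator-symmetry argument. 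Finally, the conversion of the sum over block projectors into $\tfrac{1}{|G_{\mathrm{b}}|}\sum_{b}\lambda(b)^\dag\otimes\lambda(b)$ is a character-orthogonality identity (Lemma~\ref{lem:g-inj-proj} in the paper); it is elementary but is where the non-degeneracy of $\omega$ is actually used, so it deserves one line rather than being absorbed into ``standard fixed-point computation.''
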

Notice that the set \(\left\{\lambda(b) \mid b \in G_{\mathrm{b}}\right\}\)
generates all diagonal matrices of size \(|G_{\mathrm{b}}|\).
This means that the virtual \(G_{\mathrm{b}}\)-symmetry of \(A\) is just a rephrasing of the block structure
introduced in Equation~\eqref{eq:91}.
Hence, we can view the upper horizontal legs of \(A\) as corresponding to the block degrees of freedom,
while the lower legs correspond to the degrees of freedom within the blocks.
\begin{theorem}[Main Theorem]\label{thm:main}
  An MPS in the phase \((G_{\mathrm{u}}, [\alpha])\),
  with respect to a global on-site symmetry of the abelian group
  \(G = G_{\mathrm{b}}\times G_{\mathrm{u}}\),
  will be mapped under (twisted) \(\tw{\tau}{\text{gauging}}\),
  assuming that \([\tau]\in \operatorname{H}^2(G,\operatorname{U}(1))\)
  also factorizes \(\tau = \tau_{\mathrm{b}}\cdot \tau_{\mathrm{u}}\) into broken and unbroken parts,
  cf.~Equation~\eqref{eq:98},
  to an MPS in the phase
  \begin{equation*}
    \left(
      \widehat{G_{\mathrm{b}}}\times \operatorname{ker}\Res ^{ G_{\mathrm{u}}}_{K_{\tau_{\mathrm{u}}/\alpha}}\leq \widehat{G}, \left[1\cdot\varphi^{\star}(\tau_{\mathrm{u}}/\alpha)\right]
    \right),
  \end{equation*}
  with respect to the dual on-site symmetry \(\mathcal{Q}\) of Equation~\eqref{eq:72},
  where \(\varphi^{\star}\) denotes the pullback along the inverse of the isomorphism introduced in Lemma~\ref{lem:18}
  and we view \([\tau_{\mathrm{u}}/\alpha]\) as an element of 
  \(\operatorname{H}^2(G_{\mathrm{u}}/K_{\gamma}, \operatorname{U}(1))\),
  cf.~Equation~\eqref{eq:42}.
  The phase after gauging does not depend on the broken part of the twist \([\tau_{\mathrm{b}}]\) 
  and the broken symmetry group after gauging
  is given by the kernel of the slant product of \([\tau_{\mathrm{u}}/\alpha]\).
\end{theorem}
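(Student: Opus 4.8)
The plan is to combine the normal form for an MPS in the phase $(G_{\mathrm{u}},[\alpha])$ given by Lemma~\ref{lem:main2} with the explicit gauged tensors of Lemma~\ref{lem:gauged-mps}, to decompose the gauged MPS into normal blocks, and then to read off the two invariants of the resulting phase: the unbroken subgroup from the way the dual on-site symmetry $\mathcal{Q}$ of Equation~\eqref{eq:72} permutes the blocks, and the SPT cocycle from the projective action of $\mathcal{Q}$ on the virtual space of a single block.

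First I would put $\ket{\Psi(A)}$ into the form of Lemma~\ref{lem:main2}: $\mathcal{V}\cong\mathbb{C}G_{\mathrm{b}}\otimes\mathcal{W}$ with $V(b,u)=L(b)\otimes W(u)$, $L$ the regular representation of $G_{\mathrm{b}}$ and $W\in\Rep^{\alpha}_{\mathbb{C}}G_{\mathrm{u}}$, and $A$ being $G_{\mathrm{b}}$-injective. Applying Lemma~\ref{lem:gauged-mps}, together with the basis change $S$ of that lemma so that Equation~\eqref{eq:3} holds, the gauged MPS is $\ket{\Psi(\mathbb{A},\twb{\tau}{\mathbb{B}})}$ with $\mathbb{A}^{\mu}\cong A^{\mu}\otimes\mathbbm{1}_{\mathbb{C}G}$ and $\twb{\tau}{\mathbb{B}}^{g}\cong V(g)\otimes\tw{\tau\overline{\alpha}}{L}(g)$ on $\mathcal{V}\otimes\mathbb{C}G$. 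Everything now factorizes over $G=G_{\mathrm{b}}\times G_{\mathrm{u}}$: writing $\mathbb{C}G=\mathbb{C}G_{\mathrm{b}}\otimes\mathbb{C}G_{\mathrm{u}}$ and using $\tau=\tau_{\mathrm{b}}\cdot\tau_{\mathrm{u}}$ together with the fact that $[\alpha]$ inflates trivially to $G_{\mathrm{b}}$, one gets $\tw{\tau\overline{\alpha}}{L}=\tw{\tau_{\mathrm{b}}}{L}\otimes\tw{\gamma}{L}$ with $\gamma\coloneq\tau_{\mathrm{u}}/\alpha$, so after reorganising the virtual legs as $(\mathbb{C}G_{\mathrm{b}}\otimes\mathbb{C}G_{\mathrm{b}})\otimes(\mathcal{W}\otimes\mathbb{C}G_{\mathrm{u}})$ the generators carry $L(b)\otimes\tw{\tau_{\mathrm{b}}}{L}(b)$ on the first factor and $W(u)\otimes\tw{\gamma}{L}(u)$ on the second, with $G_{\mathrm{b}}$-injectivity confined to the first factor.

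Next I would identify the $*$-algebra $\mathcal{B}$ generated by the gauged tensors, which governs the canonical-form decomposition, and show $\mathcal{B}\cong\mathcal{B}_{\mathrm{b}}\otimes\mathcal{B}_{\mathrm{u}}$. On the $G_{\mathrm{b}}$ factor, $G_{\mathrm{b}}$-injectivity of $A$ lets me absorb the $W(u)$-factor of the generators, leaving $\mathcal{B}_{\mathrm{b}}$ generated by the diagonal algebra on $\mathbb{C}G_{\mathrm{b}}$ and by $L(b)\otimes\tw{\tau_{\mathrm{b}}}{L}(b)$; a short commutant computation — in which the scalar $\tw{\tau_{\mathrm{b}}}{L}(b)\tw{\tau_{\mathrm{b}}}{L}(b^{-1})=\tau_{\mathrm{b}}(b,b^{-1})\mathbbm{1}$ renders the commutant constraint vacuous — shows $\mathcal{B}_{\mathrm{b}}$ is simple (a single block, with multiplicity) for any $[\tau_{\mathrm{b}}]$; this is exactly the source of the $\tau_{\mathrm{b}}$-independence and of the $\widehat{G_{\mathrm{b}}}$ factor ending up inside the unbroken group. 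On the $G_{\mathrm{u}}$ factor, decomposing $W\otimes\tw{\gamma}{L}$ and using that $\tw{\gamma}{L}$ contains each of the $|K_{\gamma}|$ inequivalent $\gamma$-projective irreps $\rho_j$ of $G_{\mathrm{u}}$ shows $\mathcal{B}_{\mathrm{u}}$ has exactly $|K_{\gamma}|$ blocks. Hence the gauged MPS has $|K_{\gamma}|$ normal blocks, labeled by the $\rho_j$. Plugging this into the block-permutation rule $\mathcal{Q}_{\chi}\ket{\lambda}\propto\ket{\chi\otimes\lambda}$ of Equation~\eqref{eq:65}, and using that $\chi_{\mathrm{u}}\in\widehat{G_{\mathrm{u}}}$ fixes $\rho_j$ under tensoring iff $\chi_{\mathrm{u}}$ restricts trivially to $K_{\gamma}$ (the projective character of $\rho_j$ being supported on $K_{\gamma}$), I get that $\widehat{G}=\widehat{G_{\mathrm{b}}}\times\widehat{G_{\mathrm{u}}}$ acts transitively on the blocks with stabilizer $\widehat{G_{\mathrm{b}}}\times\operatorname{ker}\Res^{G_{\mathrm{u}}}_{K_{\gamma}}$; this is the unbroken group, and the broken group is $\widehat{G}/(\widehat{G_{\mathrm{b}}}\times\operatorname{ker}\Res^{G_{\mathrm{u}}}_{K_{\gamma}})\cong\widehat{K_{\gamma}}\cong K_{\gamma}$, the kernel of the slant product of $[\gamma]$, via the isomorphism of Lemma~\ref{lem:18}.

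Finally, for the cocycle I would fix one block $\rho\coloneq\rho_j$ and compute the projective representation of $\widehat{G_{\mathrm{b}}}\times\operatorname{ker}\Res^{G_{\mathrm{u}}}_{K_{\gamma}}$ induced on its virtual space, using the explicit form of $\Gamma_{\chi}$ from Equation~\eqref{eq:50} and the intertwiner of Section~\ref{sec:repg-sym} (conjugation by $\sum_{h}\sigma(h)\otimes\dyad{h}{h}$ and a leg exchange). The $\widehat{G_{\mathrm{b}}}$-part acts through the honest (linear) representation $L$ carried by the original virtual space, hence contributes the trivial cocycle — the ``$1\cdot$'' in the statement; the $\operatorname{ker}\Res^{G_{\mathrm{u}}}_{K_{\gamma}}$-part acts by Schur intertwiners $\rho\xrightarrow{\sim}\chi_{\mathrm{u}}\otimes\rho$, whose obstruction to composing is by construction the cocycle obtained from $[\gamma]\in\operatorname{H}^2(G_{\mathrm{u}}/K_{\gamma},\operatorname{U}(1))$ through the slant product, i.e.\ $\varphi^{\star}(\tau_{\mathrm{u}}/\alpha)$ with $\varphi^{\star}$ the pullback along the inverse of the isomorphism of Lemma~\ref{lem:18}. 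I expect this last step to be the main obstacle: tracking the virtual action of the on-site dual symmetry through all the basis changes ($S$, the leg reorganisation, the block diagonalisations) and matching the resulting projective phase with $\varphi^{\star}(\tau_{\mathrm{u}}/\alpha)$ rather than with the naive $\tau_{\mathrm{u}}=\alpha\cdot\gamma$ — the distinction relying on carefully separating the genuinely virtual (gauge) symmetry $\tw{\overline{\tau}}{R}$ of the gauged tensor, noted after Lemma~\ref{lem:gauged-mps}, from the physical dual symmetry $\mathcal{Q}$. A secondary point is that away from RG fixed points the blocks are only asymptotically orthogonal (Equation~\eqref{eq:54}), so one must invoke the transfer-operator estimate in part~(ii) of Lemma~\ref{lem:main2} to see that this is indeed the correct canonical form and that the invariants are well defined.
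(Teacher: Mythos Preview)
Your proposal is correct and follows essentially the same route as the paper: normal form via Lemma~\ref{lem:main2}, gauged tensors via Lemma~\ref{lem:gauged-mps}, factorization $\tw{\tau\overline{\alpha}}{L}\cong\tw{\tau_{\mathrm{b}}}{L}\otimes\tw{\gamma}{L}$, block decomposition by the $\gamma$-irreps, and reading off the invariants from the $\mathcal{Q}$-action on blocks. The only substantive differences are presentational: where you invoke a commutant argument for the $G_{\mathrm{b}}$-factor the paper uses the explicit conjugation $\sum_b\dyad{b}{b}\otimes\tw{\tau_{\mathrm{b}}}{L}(b)$ implementing $L\otimes\tw{\tau_{\mathrm{b}}}{L}\cong\tw{\tau_{\mathrm{b}}}{L}\otimes\mathbbm{1}$, and your ``main obstacle'' dissolves once you note (as the paper does) that the Schur intertwiner is literally $S_\chi=\xi(\varphi(\chi))$ by the defining relation $\xi(h)^{-1}\xi(u)\xi(h)=(\imath_u\gamma)(h)\,\xi(u)$, so the virtual representation is $\varphi^{\star}\xi$ with cocycle $\varphi^{\star}\gamma$ on the nose.
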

Notice in particular that gauging of a phase without symmetry breaking
will produce a (potentially symmetry broken) phase, of which the SPT part is MNC.
Before proceeding to prove the Main Theorem we give some examples
serving to illustrate its applications in several simple scenarios.
\begin{example}\label{ex:trivial}
  Gauging of \((G_{\mathrm{u}}\leq G_{\mathrm{b}}\times G_{\mathrm{u}}, [\alpha] = 1)\)
  will produce \((G_{\mathrm{b}}\leq G_{\mathrm{b}}\times G_{\mathrm{u}}, [\alpha] = 1)\),
  since \(K_{\alpha} = G_{\mathrm{u}}\).
  In particular, gauging the trivial SPT results in the maximally symmetry broken phase and vice versa.
\end{example}
\begin{example}
  Let \(G_{\mathrm{u}}=G=\mathbb{Z}_2\times \mathbb{Z}_2\).
  The phase corresponding to 
  \(1 \neq [\alpha]\in \operatorname{H}^2(\mathbb{Z}_2\times \mathbb{Z}_2,\operatorname{U}(1))\),
  dubbed the Haldane phase, is MNC, hence
  \(\widehat{G}_{\mathrm{u}} = \operatorname{ker}\Res ^G_{\left\{ e \right\}} = \widehat{G}\).
  Since the Haldane phase is the only MNC phase of \(\mathbb{Z}_2\times \mathbb{Z}_2\),
  it has to be invariant under gauging.
\end{example}
The previous examples are well known.
What follows are two elementary but novel examples,
demonstrating that there is a non-trivial interplay between symmetry breaking and non-MNC SPT phases under gauging.
Notice that both \(\beta\) and \(\alpha^G\) will be non-MNC.
\begin{example}\label{ex:first-non-mnc}
  Let \(G_{\mathrm{u}}=G=\mathbb{Z}_2\times \mathbb{Z}_4\)
  and consider the (unique) non-trivial
  \([\beta]\in \operatorname{H}^2(\mathbb{Z}_2\times \mathbb{Z}_4,\operatorname{U}(1))\).
  Gauging of \((\mathbb{Z}_2\times \mathbb{Z}_4, [\beta])\)
  produces a phase with unbroken group
  \(\widehat{G}_{\mathrm{u}} = \operatorname{ker}\Res^G_{K_{\alpha}} \cong G/K_{\alpha} = \mathbb{Z}_2\times \mathbb{Z}_2\)
  and, as above, the SPT part given by 
  \(1 \neq [\alpha]\in \operatorname{H}^2(\mathbb{Z}_2\times \mathbb{Z}_2,\operatorname{U}(1))\).
\end{example}
\begin{example}
  Let \(G = \mathbb{Z}_2\times\mathbb{Z}_2\times\mathbb{Z}_2\) with one of the factors broken
  and \(1 \neq [\alpha]\in \operatorname{H}^2(\mathbb{Z}_2\times \mathbb{Z}_2,\operatorname{U}(1))\).
  Gauging of \((\mathbb{Z}_2\times\mathbb{Z}_2, [\alpha])\) results in the unbroken SPT phase
  described by \(\alpha^G\in \operatorname{Z}^2(G,\operatorname{U}(1))\),
  which extends \(\alpha\) trivially to the whole group.
\end{example}
The final example will be an important ingredient in generalizing the Kennedy-Tasaki transformation
to non-MNC SPT phases:
\begin{example}\label{ex:fsb}
  Let \(G_{\mathrm{u}}=G\) and \([\alpha] \in \operatorname{H}^2(G, \operatorname{U}(1))\).
  Twisted \(\tw{\alpha}{\text{gauging}}\) of \((G,[\alpha])\) yields the fully symmetry broken phase,
  since \(\operatorname{ker}\Res^G_G\) is the trivial group.
\end{example}
\begin{proof}[Proof of Main Theorem \ref{thm:main}]
  Let \(\ket{\Psi(A)}\) be a translation invariant MPS in the phase \((G_{\mathrm{u}}, [\alpha])\).
  We adopt the notation from Lemma~\ref{lem:main2} and
  begin by decomposing the gauged MPS into its injective blocks,
  similar to what we did in Section~\ref{sec:ghz-structure}.
  Since \(V\) is a projective representation of \(G\)
  with 2-cocycle \(\alpha^G\in \operatorname{H}^2(G,U(1))\), given by
  \(\alpha^G((b_1,u_1), (b_2, u_2)) \coloneq \alpha(u_1, u_2)\)
  for all \(b_1, b_2 \in G_{\mathrm{b}}\) and \(u_1, u_2 \in G_{\mathrm{u}}\),
  Lemma~\ref{lem:gauged-mps} says that the gauged state
  \(\tw{\tau}{\mathcal{G}} \ket{\Psi(A)}\)
  is an MPS generated by the tensors
  \begin{align}
    \mathbb{A}^{\mu} \coloneq \Stk{\raisebox{-11.23651pt}{\includegraphics{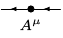}}}{\raisebox{-0pt}{\includegraphics{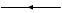}}},&&
    \twb{\tau}{\mathbb{B}}^{(b,u)} \equiv
    \twb{\tau}{\mathbb{B}}^g \coloneq \frac{1}{|G|}
    \Stk{\raisebox{-13.06595pt}{\includegraphics{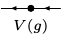}}}
        {\raisebox{-13.54901pt}{\includegraphics{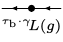}}},
  \end{align}
  where, to declutter the notation in the computations below, we introduced the abbreviation
  \(\gamma \coloneq \tau_{\mathrm{u}}\overline{\alpha} \in \operatorname{Z}^2(G_{\mathrm{u}}, \operatorname{U}(1))\).
  According to Lemma~\ref{lem:a15} the left regular representations satisfy
  \(\tw{\tau_{\mathrm{b}}\cdot \gamma}{L} \cong
  \tw{\tau_{\mathrm{b}}}{L}\otimes \tw{\gamma}{L}\),
  hence we can block decompose \(\twb{\tau}{\mathbb{B}}\) into
  \begin{equation}\label{eq:59}
    |G| \twb{\tau}{\mathbb{B}}^{(b,u)} \cong
    \mStk[.7cm]{{\raisebox{-13.06595pt}{\includegraphics{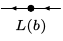}}}
                {\raisebox{-13.06595pt}{\includegraphics{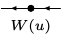}}}
                {\raisebox{-13.31592pt}{\includegraphics{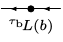}}}
                {\raisebox{-13.31592pt}{\includegraphics{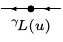}}}}
    \cong
    \mStk[.7cm]{{\raisebox{-13.31592pt}{\includegraphics{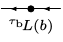}}}
                {\raisebox{-13.06595pt}{\includegraphics{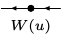}}}
                {\raisebox{-0pt}{\includegraphics{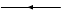}}}
                {\raisebox{-13.31592pt}{\includegraphics{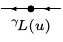}}}}
    \simeq \bigoplus_{\sigma}
    \sunderbrace{
    \mStk[.7cm]{{\raisebox{-13.31592pt}{\includegraphics{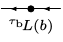}}}
                {\raisebox{-13.06595pt}{\includegraphics{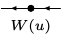}}}
                {\raisebox{-13.06595pt}{\includegraphics{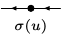}}}}
    }{\eqcolon |G|\mathbb{B}_{\sigma}},
  \end{equation}
  where we sum over all linearly inequivalent irreducible representations
  \(\sigma\) of \(G_{\mathrm{u}}\) with 2-cocycle \(\gamma\).
  Note that all \(\gamma\)-irreps share the same dimension \(d_{\gamma}\)
  and appear with equal multiplicity \(m_{\gamma} = d_{\gamma}\) in \(\tw{\gamma}{L}\),
  cf.~Lemma~\ref{lem:11}.
  Since the isomorphism
  \(L\otimes \tw{\tau_{\mathrm{b}}}{L} \cong \tw{\tau_{\mathrm{b}}}{L} \otimes \mathbbm{1}\),
  which, as we show in Appendix~\ref{sec:basis-transf},
  is explicitly given by conjugation with \(\sum_b \dyad{b}{b} \otimes \tw{\tau_{\mathrm{b}}}{L}(b)\),
  commutes with \(\mathbb{A}\),
  the corresponding (i.e.\ under the same basis transformation and up to the same multiplicities)
  block structure of \(\mathbb{A}\) reads
  \begin{equation}
    \label{eq:58}
    \mathbb{A} \simeq \bigoplus_{\sigma}
    \sunderbrace{
    \Stk{\raisebox{-14.9973pt}{\includegraphics{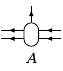}}}{\raisebox{-13.61035pt}{\includegraphics{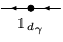}}}
    }{\eqcolon\mathbb{A}_{\sigma}}
  \end{equation}
  and we shall label the blocks of \(\mathbb{A}\) and \(\twb{\tau}{\mathbb{B}}\) by
  \(\mathbb{A}_{\sigma}\) and \(\mathbb{B}_{\sigma}\), respectively.
  To summarize, what we have shown thus far
  is that we can simultaneously block diagonalize \(\mathbb{A}\) and \(\twb{\tau}{\mathbb{B}}\),
  giving the decomposition
  \begin{equation}
    \label{eq:60}
    \ket{\Psi(\mathbb{A}, \tw{\tau}{\mathbb{B}})}
    =d_{\gamma}\sum_{\sigma} 
    \ket{\Psi(\mathbb{A}_{\sigma}, \mathbb{B}_{\sigma})}.
  \end{equation}
  We claim that this is a decomposition into injective and pairwise orthogonal
  globally symmetric matrix product states.
  The symmetry statement is clear and to show injectivity and orthogonality
  (cf.\ the discussion following Definition~\ref{def:transfer-operator})
  we compute the \(n^{\mathrm{th}}\) power of the mixed transfer operator,
  \begin{align*}
    \left(\!\raisebox{-38.18973pt}{\includegraphics{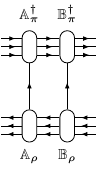}}\!\right)^{\hspace{-.3em}n} \hspace{-.45em}
    \stackrel{\scriptstyle\mathrm{(a)}}{=} \frac{1}{|G|}\sum_{\substack{b\in G_{\mathrm{b}}\\ u\in G_{\mathrm{u}}}}
    \mStk[1.5cm]{%
      {\raisebox{-0pt}{\includegraphics{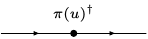}}}
      {\left(\!\raisebox{-32.06891pt}{\includegraphics{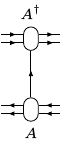}}\!\right)^{\hspace{-.3em}n}\hspace{-.3em}%
       \mStk[1.2cm]{%
         {\mStk[.14cm]{%
           {\raisebox{-0pt}{\includegraphics{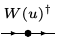}}}
           {\raisebox{-14.55594pt}{\includegraphics{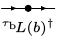}}}}}
         {\mStk[.14cm]{%
           {\raisebox{-0pt}{\includegraphics{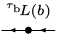}}}
           {\raisebox{-13.06595pt}{\includegraphics{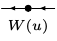}}}}}}}
      {\raisebox{-13.06595pt}{\includegraphics{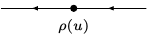}}}}\hspace{-.31em}&\\
    \stackrel{\scriptstyle\mathrm{(b)}}{=} \eqmakebox[op1]{\(\displaystyle\frac{1}{|G||G_{\mathrm{b}}|d_{W}}\!\!\sum_{\substack{b\in G_{\mathrm{b}}\\c\in G_{\mathrm{b}}\\ u\in G_{\mathrm{u}}}}\)\,}
    \mStk[.8cm]{%
      {\raisebox{-0pt}{\includegraphics{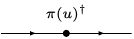}}}
      {\raisebox{-16.76329pt}{\includegraphics{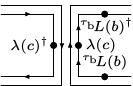}}}
      {\raisebox{-13.06595pt}{\includegraphics{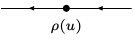}}}}
    &+ \mathcal{O}(\epsilon^n)\\
    \stackrel{\scriptstyle\mathrm{(c)}}{=} \eqmakebox[op1]{\(\displaystyle\frac{1}{|G|d_{W}}\sum_{u\in G_{\mathrm{u}}}\)}
    \mStk[.8cm]{%
      {\raisebox{-0pt}{\includegraphics{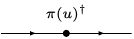}}}
      {\raisebox{-16.76329pt}{\includegraphics{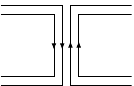}}}
      {\raisebox{-13.06595pt}{\includegraphics{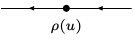}}}}
    &+ \mathcal{O}(\epsilon^n)\\
    \stackrel{\scriptstyle\mathrm{(d)}}{=} \eqmakebox[op1]{\(\displaystyle\frac{\delta_{\pi\rho}}{|G_{\mathrm{b}}|d_{\gamma}d_{W}}\)}
      \raisebox{-20.46228pt}{\includegraphics{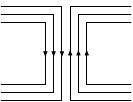}}
    &+ \mathcal{O}(\epsilon^n).
  \end{align*}
  The intertwiner property of \(A\) implies Equality (a),
  Lemmas~\ref{lem:main2} and~\ref{lem:bicharacter-com} yield (b) and (c), respectively,
  and Equality (d) is the Great Orthogonality Theorem for irreducible projective representations,
  cf.\ Theorem~\ref{thm:great-orthogonality}.

  Having established the block structure,
  we move to analyze the \(\widehat{G}\)-action on and within the blocks.
  The group \(G\) factoring into \(G_{\mathrm{b}}\times G_{\mathrm{u}}\) induces a canonical factorization
  \begin{equation}\label{eq:40}
    \widehat{G} \cong
    \widehat{G_{\mathrm{b}}} \times \widehat{G_{\mathrm{u}}},
  \end{equation}
  meaning we can analyze the action of
  \(\widehat{G_{\mathrm{b}}}\) and \(\widehat{G_{\mathrm{u}}}\) independently.
  Let us emphasize that \(\widehat{G_{\mathrm{u}}}\) denotes the dual group of \(G_{\mathrm{u}}\)
  and \emph{not} the unbroken part of the dual symmetry,
  which we would denote as \(\widehat{G}_{\mathrm{u}}\).
  
  Let \(\chi\in \widehat{G_{\mathrm{b}}} \leq \widehat{G}\)
  (with slight abuse of notation \(\chi(b,u) \equiv \chi(b)\))
  and choose \(c\in \widehat{G_{\mathrm{b}}}\) such that \(\chi = \omega(c, -)^{-1}\).
  Then
  \(\lambda(c)^{\dag} \tw{\tau_{\mathrm{b}}}{L}(b) \lambda(c) = \chi(b) \tw{\tau_{\mathrm{b}}}{L}(b)\)
  for all \(b\in G_{\mathrm{b}}\), cf.\ Lemma~\ref{lem:bicharacter-com}, and so
  \begin{equation}\label{eq:57}
    (\Gamma_{\chi}\twb{\tau}{\mathbb{B}}_{\sigma})^{(b,u)} = 
    \mStk[.7cm]{{\raisebox{-13.31592pt}{\includegraphics{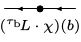}}}
                {\raisebox{-13.06595pt}{\includegraphics{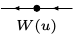}}}
                {\raisebox{-13.06595pt}{\includegraphics{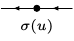}}}} = 
    \mStk[.7cm]{{\raisebox{-13.31592pt}{\includegraphics{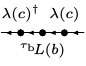}}}
                {\raisebox{-13.06595pt}{\includegraphics{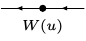}}}
                {\raisebox{-13.06595pt}{\includegraphics{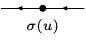}}}}
  \end{equation}
  for all \((b,u)\in G\).
  Since \(\lambda\) is a virtual symmetry of \(A\),
  we recognize that \(\mathcal{Q}_{\chi}\) does not
  permute between the injective blocks,
  hence \(\widehat{G_{\mathrm{b}}} \leq \widehat{G}_{\mathrm{u}}\),
  and since \(\chi \mapsto \lambda(c(\chi))\) is a linear representation of
  \(\widehat{G_{\mathrm{b}}}\), independent of the choice of \(\omega\),
  \(\widehat{G_{\mathrm{b}}}\) is contained in the kernel of the 2-cocycle
  describing the SPT part of the phase after gauging.
  
  Let \(\chi \in \widehat{G_{\mathrm{u}}} \leq \widehat{G}\).
  It is clear, immediately from the definition of \(\mathbb{B}_{\sigma}\),
  that the dual symmetry acts on the tensors representing the injective blocks as
  \begin{equation}
    \label{eq:64}
    \Gamma_{\chi}\mathbb{B}_{\sigma}
    = \mathbb{B}_{\chi\cdot\sigma},
  \end{equation}
  which mirrors what we derived in Equation~\eqref{eq:65}.
  This means that the \(\widehat{G_{\mathrm{u}}}\) action on blocks
  labeled by \(\gamma\)\nobreakdash-irreps is effectively given by
  the tensor product of representations modulo linear equivalence.
  Therefore, by Lemma~\ref{lem:12}, the unbroken part of \(\widehat{G_{\mathrm{u}}}\) 
  is given by the subgroup \(\operatorname{ker}\Res ^{G_{\mathrm{u}}}_{K_{\gamma}}\),
  where \(K_{\gamma}\leq G_{\mathrm{u}}\) denotes the kernel of the slant product \(\imath\gamma\),
  and the SPT part of the phase is determined by the 2-cocycle of 
  the representation \(\chi \mapsto S_{\chi}\) of \(\widehat{G_{\mathrm{u}}}\),
  appearing in the linear equivalence
  \begin{equation}
    \label{eq:67}
    (\chi\cdot\sigma)(u) = S_{\chi}^{\dag} \sigma(u) S_{\chi}, u\in G_{\mathrm{u}}.
  \end{equation}

  To disentangle notationally the virtual symmetry action within the injective blocks
  from the permutation action on the blocks,
  we can choose a slightly more explicit description
  of the irreducible representations appearing in Equation~\eqref{eq:59}.
  To this end, choose irreducible characters
  \(\chi_1, \ldots, \chi_{r_{\gamma}}\in \widehat{G_{\mathrm{u}}}\), such that
  \begin{equation}
    \label{eq:61}
    \widehat{G_{\mathrm{u}}}/
    \operatorname{ker}\Res ^{G_{\mathrm{u}}}_{K_{\gamma}}
    = \left\{ [\chi_1], \ldots, [\chi_{r_{\gamma}}] \right\}
  \end{equation}
  and fix an arbitrary irreducible \(\xi\in \Rep^{\gamma}_{\mathbb{C}}G_{\mathrm{u}}\).
  Lemma~\ref{lem:17} now readily implies that the block decomposition given in Equation~\eqref{eq:59}
  can equivalently be written as
  \begin{equation}\label{eq:62}
    \twb{\tau}{\mathbb{B}}^{(b,u)} \cong
    d_{\gamma}|G_{\mathrm{b}}| \bigoplus_{j = 1}^{r_{\gamma}} 
    \frac{1}{|G|}
    \mStk[.7cm]{{\raisebox{-13.31592pt}{\includegraphics{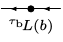}}}
                {\raisebox{-13.06595pt}{\includegraphics{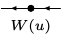}}}
                {\raisebox{-13.28815pt}{\includegraphics{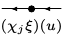}}}}.
  \end{equation}
  In other words, we consider characters of \(\widehat{G_{\mathrm{u}}}\) to
  be equivalent if they agree on \(K_{\gamma}\),
  the subgroup of \(G_{\mathrm{u}}\) where \(\gamma\) is trivial,
  and label the injective blocks in terms of representatives of these equivalence classes.
  
  To extract the SPT phase information, we shall determine how the action of
  \(\operatorname{ker} \Res ^{G_{\mathrm{u}}}_{K_{\gamma}}\)
  is represented on the virtual Hilbert space and assign the appropriate 2-cocycle.
  Let \(\chi\in \operatorname{ker} \Res ^{G_{\mathrm{u}}}_{K_{\gamma}}
  \leq \widehat{G_{\mathrm{u}}}\).
  Rewriting Equation~\eqref{eq:7},
  which is just the defining equation of the 2-cocycle \(\gamma\),
  in terms of \(\varphi \coloneq \widehat{\imath\gamma}^{-1}\),
  the inverse of the isomorphism introduced in Lemma~\ref{lem:18},
  yields
  \begin{equation}
    \label{eq:66}
    (\chi\cdot\xi)(u)
    = \xi(\varphi(\chi))^{-1}\,
      \xi(u)\,
      \xi(\varphi(\chi)),
  \end{equation}
  where, since \(\xi|_{K_{\gamma}}\) is proportional to the identity,
  we can view \(\xi\) as a representation of \(G_{\mathrm{u}}/K_{\gamma}\).
  Furthermore, since \(G\) is finite and abelian there is a canonical isomorphism
  \begin{equation}\label{eq:42}
    \operatorname{H}^2(G_{\mathrm{u}}/K_{\gamma}, \operatorname{U}(1)) \cong 
    \operatorname{H}^2(G_{\mathrm{u}}, \operatorname{U}(1))^{K_{\gamma}},
  \end{equation}
  induced by the quotient map, where the latter group denotes the subgroup
  of \(\operatorname{H}^2(G_{\mathrm{u}}, \operatorname{U}(1))\)
  whose elements have kernels containing \(K_{\gamma}\).
  This follows for instance from Theorem~28 in \cite[Chapter 6]{berkovicCharactersFiniteGroups1998},
  together with the fact that every linear character of \(K_{\gamma}\) can be extended to \(G_{\mathrm{u}}\).
  Hence, we can view \(\gamma\) to be a non-degenerate 2-cocycle on \(G_{\mathrm{u}}/K_{\gamma}\).
  
  Rewriting Equation~\eqref{eq:66}, under the above identifications,
  in terms of the MPS tensor \(\twb{\tau}{\mathbb{B}}\) yields the intertwiner relation
  \begin{equation}
    \raisebox{-17.3845pt}{\includegraphics{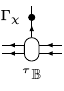}} = 
    \raisebox{-17.3845pt}{\includegraphics{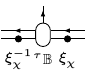}},
  \end{equation}
  where we abbreviated \(\xi(\varphi(\chi))\) as \(\xi_{\chi}\).
  Hence, we have shown that
  \(\operatorname{ker} \Res ^{G_{\mathrm{u}}}_{K_{\gamma}}\) is represented by
  \(\xi\circ\varphi \equiv \varphi^{\star}\xi\) on the virtual Hilbert space 
  and \(\varphi^{\star}\gamma\) is the corresponding 2-cocycle defining the SPT phase.
  
  Now we can combine the unbroken parts of both factors, giving us the unbroken symmetry group after gauging
  \begin{equation}
    \label{eq:39}
    \widehat{G}_{\mathrm{u}}
    = \widehat{G_{\mathrm{b}}}\times \operatorname{ker} \Res ^{G_{\mathrm{u}}}_{K_{\gamma}},
  \end{equation}
  with the SPT part of the phase given by the 2-cocycle \(\varphi^{\star}\gamma\), trivially extended from
  \(\operatorname{ker} \Res ^{G_{\mathrm{u}}}_{K_{\gamma}}\) to \(\widehat{G}_{\mathrm{u}}\).
  The broken part of the dual symmetry is of course defined to be 
  \(\widehat{G}_{\mathrm{b}} \coloneq \widehat{G}/\widehat{G}_{\mathrm{u}}\),
  which can be evaluated explicitly to yield
  \begin{align}
    \widehat{G}/\widehat{G}_{\mathrm{u}}
    \cong\frac{\widehat{G_{\mathrm{b}}}\times \widehat{G_{\mathrm{u}}}}
              {\widehat{G_{\mathrm{b}}}\times \operatorname{ker}\Res ^{G_{\mathrm{u}}}_{K_{\gamma}}}
    \cong\frac{\widehat{G_{\mathrm{u}}}}{\operatorname{ker} \Res ^{G_{\mathrm{u}}}_{K_{\gamma}}}
    \cong \widehat{K_{\gamma}},
  \end{align}
  where the first isomorphism is (basically) the isomorphism \eqref{eq:40},
  applied to both groups in the quotient,
  and the last isomorphism is just the statement of the first isomorphism theorem
  applied to the surjective restriction map
  \(\Res \colon \widehat{G_{\mathrm{u}}}\rightarrow \widehat{K_{\gamma}}\).
\end{proof}

\subsection{On the Choice of Dual Symmetry and Applications}
\label{sec:choice-dual-symmetry}

As we alluded to earlier, generalizing the results to different choices of dual symmetry,
generically of the form \(\Gamma^f_s\), cf.~Equation~\eqref{eq:71},
for a specific choice of isomorphism \(f \colon G\xrightarrow{\sim} H\), is trivial.
Broken and unbroken parts of the symmetry are given by
\begin{align}
  \label{eq:99}
  \widehat{H}_{\mathrm{b}} \coloneq \hat{f}^{-1}\widehat{G}_{\mathrm{b}},&&
  \widehat{H}_{\mathrm{u}} \coloneq \hat{f}^{-1}\widehat{G}_{\mathrm{u}},
\end{align}
where by \(\hat{f}^{-1}\) we denote the preimage under the dual map,
and the 2-cocyle is given by the pullback under the dual map,
appropriately restricted to the unbroken symmetry group:
\((\hat{f}|_{\widehat{H}_{\mathrm{u}}})^{\star}(1\cdot \varphi^{\star}(\tau/\alpha))\).
The relevance of the choice of dual symmetry is demonstrated in Example~\ref{ex:choice-of-iso}.
\begin{remark}
    Going from \(G_{\mathrm{u}}\leq G\) to \(G\cong G_{\mathrm{b}}\times G_{\mathrm{u}}\)
    involves a choice of complement, possibly influencing the phase before gauging,
    which has to be taken into account.
    Having fixed the isomorphism \(G\cong G_{\mathrm{b}}\times G_{\mathrm{u}}\), the isomorphism
    \(\widehat{G}\cong\widehat{G_{\mathrm{b}}}\times \widehat{G_{\mathrm{u}}}\)
    is canonical.
\end{remark}
The following example is meant to demonstrate how the specific choice of dual symmetry,
i.e.\ the choice of isomorphism \(f\) in Equation~\eqref{eq:71},
determines the permutation of MNC phases under gauging.
\begin{example}\label{ex:choice-of-iso}
  Let \(G_{\mathrm{u}}=G=\mathbb{Z}_3\times \mathbb{Z}_3\).
  Since both non-trivial 2-cocycles of \(\mathbb{Z}_3\times \mathbb{Z}_3\) are MNC,
  we know that gauging can only permute between the two, possibly trivially.
  We fix \([\alpha]\) by specifying its slant product
  \begin{align}
    \label{eq:76}
    (\imath_{(a_1, a_2)}\alpha)(b_1, b_2)=  \zeta^{a_1b_2-a_2b_1},
    &&\zeta \coloneq \mathrm{e}^{\frac{2\pi}{3}\mathrm{i}},
  \end{align}
  and write
  \(\operatorname{H}^2(\mathbb{Z}_3\times \mathbb{Z}_3,\operatorname{U}(1))
  = \left\{ 1, [\alpha], [\overline{\alpha}] \right\}\).
  To compare directly the phase before gauging with the phase after gauging,
  we want to view the gauged state as being \(G\)-symmetric.
  Hence, let us fix an isomorphism
  \(\chi\colon \mathbb{Z}_3\times \mathbb{Z}_3 \rightarrow \widehat{\mathbb{Z}_3} \times \widehat{\mathbb{Z}_3}\),
  defined by
  \begin{equation}
    \label{eq:78}
    \chi_{m_1, m_2}(b_1, b_2) \coloneq \zeta^{m_1b_2 + m_2 b_1},
  \end{equation}
  and choose the dual \(G\)-symmetry to be given by Equation~\eqref{eq:71}, for \(f\coloneq\chi^{-1}\).

  First we analyze the mapping of the phase \(\alpha\).
  Using \((\imath\alpha)^{-1}(\chi_{m_1, m_2}) = (m_1, -m_2)\),
  we can compute the slant product of the gauged 2-cocycle,
  which we recall is given by the pullback of
  \(\overline{\alpha}\) along \(\varphi \coloneq (\imath\alpha)^{-1}\),
  yielding
  \begin{equation}
    \label{eq:80}
    (\imath_{\chi_{m_1, m_2}}\varphi^{\star}\overline{\alpha})(\chi_{n_1, n_2}) = \zeta^{m_1 n_2 - m_2n_1}.
  \end{equation}
  Comparing the result with Equation~\eqref{eq:76},
  we recognize that under the isomorphism \eqref{eq:78} we can identify \([\varphi^{\star}\overline{\alpha}]\)
  with \([\alpha]\), i.e.\ \([\alpha]\) is invariant under gauging.

  Second, for the other phase \([\overline{\alpha}]\) one checks that
  \((\imath\overline{\alpha})^{-1}(\chi_{m_1, m_2}) = (-m_1, m_2)\),
  yielding for the computation of the pullback along \(\psi \coloneq {\imath\overline{\alpha}}^{-1}\) the result
  \begin{equation}
    \label{eq:81}
    (\imath_{\chi_{m_1, m_2}}\psi^{\star}\alpha)(\chi_{n_1, n_2}) = \zeta^{-m_1 n_2 + m_2n_1},
  \end{equation}
  which we interpret as \([\overline{\alpha}]\) getting mapped to \([\overline{\alpha}]\) under gauging.
  To summarize, for this choice of dual symmetry,
  both \([\alpha]\) and \([\overline{\alpha}]\) are invariant under gauging.

  On the other hand, if we choose the isomorphism 
  \(\vartheta\colon \mathbb{Z}_3\times \mathbb{Z}_3 \rightarrow \widehat{\mathbb{Z}_3} \times \widehat{\mathbb{Z}_3}\),
  where
  \begin{equation}
    \label{eq:83}
    \vartheta_{m_1, m_2}(b_1, b_2) \coloneq \zeta^{m_1b_2 - m_2b_1},
  \end{equation}
  instead of \(\chi\),
  we will find (directly from \(\vartheta = \imath\alpha\))
  that gauging maps \([\alpha]\) to \([\overline{\alpha}]\) and vice versa,
  i.e.\ gauging acts as the inverse on the two non-trivial SPT phases.
\end{example}
Having understood the effect a different choice of dual symmetry has on the phase after gauging,
we shall now discuss some more involved scenarios.
Example~\ref{ex:first-non-mnc} admits a straightforward generalization to larger prime numbers:
\begin{example}\label{ex:non-mnc-prime}
  Let \(G_{\mathrm{u}} = G = \mathbb{Z}_p\times \mathbb{Z}_{p^2}\) for \(p\) a prime number
  and \(1\neq[\alpha]\in \operatorname{H}^2(\mathbb{Z}_p\times \mathbb{Z}_{p^2},\operatorname{U}(1))\).
  The phase after gauging exhibits symmetry breaking
  with unbroken part \(\widehat{G}_{\mathrm{u}}\cong G/K_{\alpha} \cong \mathbb{Z}_p\times \mathbb{Z}_p\).
  Let \([\check{\alpha}]\in \operatorname{H}^2(G/K_{\alpha}, \operatorname{U}(1))\)
  denote the image of \(\alpha\) under the isomorphism \eqref{eq:42}
  and let the dual symmetry be given by 
  \(f = (\widehat{\imath\alpha})^{-1} = (\imath\check{\alpha})^{-1}\) in Equation~\eqref{eq:71}.
  In this sense, gauging maps \((G_{\mathrm{u}}, [\alpha])\)
  to \((\mathbb{Z}_p\times \mathbb{Z}_p, [\overline{\check{\alpha}}])\) and,
  since \(\operatorname{H}^2(\mathbb{Z}_p\times \mathbb{Z}_p,\operatorname{U}(1))\) is cyclic,
  any phase \((\mathbb{Z}_p\times \mathbb{Z}_{p^2}, [\beta]\neq 1)\) is mapped to 
  \((\mathbb{Z}_p\times \mathbb{Z}_p, [\overline{\check{\beta}}])\).
  Notice that any other choice of \(f\) that is still of the form \(\widehat{\imath\gamma}\)
  will yield the same result.
  However, as we saw already in the previous example,
  there are choices of \(f\) for which this does not hold.
\end{example}
\begin{example}
  When \(G_{\mathrm{u}} = G\), then for any non-degenerate \([\alpha]\) there exists a choice of dual symmetry,
  such that the phase \((G,[\alpha])\) gets mapped to \((G,[\overline{\alpha}])\)
  under gauging, namely \(f = (\imath\alpha)^{-1}\) in Equation~\eqref{eq:71}.
  In the case of degenerate \([\alpha]\) the same statement holds if \(G_{\mathrm{b}}\cong K_{\alpha}\).
\end{example}
\begin{example}[Generalized Kennedy-Tasaki transformation]
  For any 2-cocycle \(\alpha\),
  twisted \(\tw{\alpha}{\text{gauging}}\) of the trivial SPT phase yields,
  for an appropriately chosen dual symmetry,
  the phase \((G/K_{\alpha}, [\overline{\alpha}])\)
  and untwisted gauging of this phase in turn produces the SPT phase \((G,[\alpha])\).
  Combining this with Examples~\ref{ex:trivial} and~\ref{ex:fsb} we can construct,
  on the level of phases, the maps
  \begin{align*}
    \tw{\alpha}{\mathcal{G}}\colon \text{SPT}_{\alpha} &\longmapsto \text{FSB},\\
    \mathcal{G}\circ \tw{\alpha}{\mathcal{G}}\circ\mathcal{G}\colon \text{FSB}_{\phantom{\alpha}} &\longmapsto \text{SPT}_{\alpha},
  \end{align*}
  which have the expected behavior of a generalized Kennedy-Tasaki transformation and its inverse,
  also for non-MNC phases.
  For the spin-1 AKLT model the operator \(\mathcal{G}^{\alpha}\circ \mathcal{G}\)
  coincides with the Kennedy-Tasaki transformation \cite{lootensDualitiesOneDimensionalQuantum2023},
  which is consistent with our definition
  since the Haldane phase is invariant under \(\mathcal{G}\).
\end{example}

The following example goes beyond the assumptions of the Main Theorem.
It considers the case where the twist does not factor into broken and unbroken parts.
For this specific choice of group it can still be analyzed using the techniques we introduced above.
\begin{example}
  Let \(G_{\mathrm{u}} = \mathbb{Z}_n < \mathbb{Z}_n\times\mathbb{Z}_n = G\)
  and fix a twist \([\tau]\in \operatorname{H}^2(\mathbb{Z}_n\times \mathbb{Z}_n, \operatorname{U}(1))\)
  by specifying its slant product
  \begin{equation}
    \label{eq:11}
    \imath\tau(b_1, u_1)(b_2, u_2) = \zeta^{u_1b_2 - u_2b_1},
  \end{equation}
  where \(\zeta\) is an \(n^{\mathrm{th}}\) root of unity.
  Twisted \(\tw{\tau}{\text{gauging}}\) of an MPS \(\ket{\Psi(A)}\) in the phase
  \((\mathbb{Z}_n, 1)\)
  produces an MPS that, up to normalization, decomposes according to
  \begin{equation}
    \label{eq:21}
    \tw{\tau}{\mathcal{G}} \ket{\Psi(A)}
    = \sum_{k=1}^n \ket{\Psi(X^k\mathbb{A}X^{n-k}, \mathbb{B})}
  \end{equation}
  into injective and pairwise orthogonal MPSs, where
  \begin{align}
    \label{eq:22}
    X^k \mathbb{A} X^{n-k} \coloneq
    \raisebox{-15.03618pt}{\includegraphics{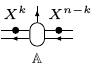}},&&
    \mathbb{B}^{(b,u)} \coloneq
    \Stk{\raisebox{-14.30594pt}{\includegraphics{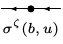}}}
        {\raisebox{-13.06595pt}{\includegraphics{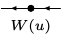}}}
  \end{align}
  and \(\sigma^{\zeta}(b,u) \coloneq X^b Z^u \) is the projective representation
  of Example~\ref{ex:sylvester},
  with \(X\) and \(Z\) the Sylvester shift and clock matrices,
  satisfying \(XZ = \zeta ZX\).
  To show this, just apply the basis transformation
  \begin{equation}
    \label{eq:41}
    S = \sum_{b\in \mathbb{Z}_n} X^b \otimes \operatorname{id}_{\mathbb{CZ}_n}\otimes \dyad{b}{b} 
  \end{equation}
  to the MPS tensors one gets after applying Lemma~\ref{lem:gauged-mps},
  with \(V(b,u) = X^b \otimes W(u)\).
  It is easy to check explicitly that the unbroken part of
  \(\widehat{G} = \widehat{\mathbb{Z}_n} \times \widehat{\mathbb{Z}_n}\)
  is given by elements of the form \((b,u)\mapsto \zeta^{kb}\) with \(k\in \mathbb{Z}_n\),
  while elements of the broken part are of the form \((b,u)\mapsto \zeta^{ku}\).
  Hence, the gauged MPS is still in the phase \((\mathbb{Z}_n, 1)\) and
  if the first factor of \(\mathbb{Z}_n\times \mathbb{Z}_n\) was initially broken,
  the second factor of the dual group will be broken after gauging.
\end{example}
\section{Discussion}
\label{sec:discussion}

In the Main Theorem~\ref{thm:main}, we comprehensively analyzed the behavior of
symmetry protected and symmetry broken phases under twisted gauging,
provided the symmetry group factors into unbroken and broken parts.

We observe that in the untwisted case the dual group of the kernel \(K_{\gamma} \equiv K_\alpha\),
which characterizes the degeneracy of the original 2-cocycle,
corresponds to the broken part of the symmetry group after gauging.
Conversely, since we have extended the induced 2-cocycle,
which is non-degenerate on \(\operatorname{ker} \Res ^{G_{\mathrm{u}}}_{K_{\gamma}}\)
by construction, trivially to
\(\widehat{G_{\mathrm{b}}}\times \operatorname{ker} \Res ^{G_{\mathrm{u}}}_{K_{\gamma}}\),
its kernel is just \(\widehat{G_{\mathrm{b}}}\),
i.e.\ the broken part of the original symmetry gives the degeneracy of the SPT after gauging.
This leads us to make the following conjecture.
\begin{conjecture}
  For abelian symmetry groups,
  degenerate SPT phases are dual to SSB phases under untwisted gauging.
\end{conjecture}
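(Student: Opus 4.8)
The plan is to promote the remark preceding the conjecture to a theorem: for \emph{every} finite abelian $G$ and every phase $(G_{\mathrm u}\le G,[\alpha])$ realizable by a $G$-symmetric MPS, untwisted gauging should produce a phase whose broken symmetry group is $\widehat{K_\alpha}$, where $K_\alpha\le G_{\mathrm u}$ is the degeneracy of $[\alpha]\in\operatorname{H}^2(G_{\mathrm u},\operatorname{U}(1))$. Granting this, the conjecture is immediate, since $[\alpha]$ is degenerate exactly when $K_\alpha\neq\{1\}$, which for a finite group is equivalent to $\widehat{K_\alpha}\neq\{1\}$, i.e.\ to the gauged phase exhibiting SSB. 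The whole task thus reduces to removing the splitting hypothesis $G\cong G_{\mathrm b}\times G_{\mathrm u}$ from the Main Theorem~\ref{thm:main}, and I would do this by retracing its proof step by step: block-decompose the gauged MPS, then identify the broken group as $\widehat G$ modulo the stabilizer of a block, replacing every appeal to the product structure by an argument valid for an arbitrary (automatically normal) subgroup $G_{\mathrm u}\le G$.

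The first ingredient is a substitute for Lemma~\ref{lem:main2}. I would show that an MPS in the phase $(G_{\mathrm u}\le G,[\alpha])$ admits a canonical form whose virtual representation is gauge-equivalent to the induced representation $\Ind_{G_{\mathrm u}}^{G}W$ of a suitable $W\in\Rep_{\mathbb C}^{\alpha}G_{\mathrm u}$, equipped with the natural projective $G$-action. This should follow from the Fundamental Theorem of MPSs together with the block structure in Equation~\eqref{eq:91}: the $|G/G_{\mathrm u}|$ normal blocks of $A$ are permuted, and the robustness condition forbidding $G$-invariant partitions forces this permutation to be \emph{regular} under $G/G_{\mathrm u}$, which is exactly the hallmark of an induced representation once the intertwiners gluing the blocks have been gauge-fixed. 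A subtlety already appears here: the $2$-cocycle $[\beta]\in\operatorname{H}^2(G,\operatorname{U}(1))$ carried by $\Ind_{G_{\mathrm u}}^{G}W$ depends on a choice of transversal and, unlike the trivial extension $\alpha^{G}$ available when a complement exists, is generally not of product form; one must check --- or sidestep, using that phases are classified by the pair $(G_{\mathrm u},[\alpha])$, so that the gauged phase may be computed with any convenient realization --- that the result is insensitive to this choice.

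With this in hand, Lemma~\ref{lem:gauged-mps} gives gauged tensors $\mathbb A^{\mu}\cong A^{\mu}\otimes\mathbbm 1$ and $\mathbb B^{g}\cong V(g)\otimes\tw{\overline{\beta}}{L}(g)$ on $\mathcal V\otimes\mathbb C G$, and decomposing the regular projective representation $\tw{\overline{\beta}}{L}$ into irreducibles of $G$ yields a first, coarse splitting into $|K_\beta|$ pieces. The crux --- and the step I expect to be the genuine obstacle --- is to show that these pieces collapse to exactly $|K_\alpha|$ inequivalent injective blocks, that $\widehat G$ permutes them regularly, so that the broken group is $\widehat G/\operatorname{Stab}\cong\widehat{K_\alpha}$, and that the stabilizer of a block acts within it by a \emph{non-degenerate} $2$-cocycle on a group of order $|G|/|K_\alpha|$, generalizing Lemma~\ref{lem:18}. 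In the Main Theorem this is transparent because $\tw{\overline{\alpha^{G}}}{L}$ tensor-factorizes as $\tw{\tau_{\mathrm b}}{L}\otimes\tw{\gamma}{L}$ and the $G_{\mathrm b}$-factor is absorbed via a non-degenerate bicharacter; without a complement there is no such factorization, and one must instead carry out a Mackey/Clifford-theoretic analysis of $\Ind_{G_{\mathrm u}}^{G}W$ under tensoring with characters $\chi\in\widehat G$ --- showing that $\chi_1\otimes\Ind_{G_{\mathrm u}}^{G}W$ and $\chi_2\otimes\Ind_{G_{\mathrm u}}^{G}W$ generate the same injective MPS precisely when $\chi_1\chi_2^{-1}$ restricts trivially to $K_\alpha$, while the residual action within a block reproduces the non-degenerate quotient cocycle on $G_{\mathrm u}/K_\alpha$.

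Finally I would assemble these pieces into the expected formula $\bigl(\widehat{G_{\mathrm b}}\times\operatorname{ker}\Res^{G_{\mathrm u}}_{K_\alpha}\le\widehat G,\,[1\cdot\varphi^{\star}\overline\alpha]\bigr)$ --- with $\widehat{G_{\mathrm b}}$ now read off intrinsically rather than from a complement --- and extract the conjecture, cross-checking against the non-complemented example $G_{\mathrm u}=\mathbb Z_n<\mathbb Z_n\times\mathbb Z_n$ already worked out in the text, where the collapse indeed behaves as predicted. A softer route --- deriving the general case from the complemented one by embedding $G$ in a group in which $G_{\mathrm u}$ splits off and tracking gauging along the covering map --- looks tempting but appears circular, since the natural such embeddings reintroduce precisely the splitting one wants to avoid; I therefore expect the direct argument above, with the block-collapse step carrying essentially all of the weight, to be the viable path.
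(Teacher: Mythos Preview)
The statement you are trying to prove is left as a \emph{conjecture} in the paper; the paper does not prove it. What the paper actually establishes is the restricted theorem immediately following the conjecture, for \emph{complemented} abelian groups, and that case is an immediate corollary of the Main Theorem~\ref{thm:main}. Beyond this the paper only offers the worked example $\mathbb{Z}_2<\mathbb{Z}_4$ (with trivial $\alpha$) as supporting evidence and explicitly flags the remaining cases as open. So your proposal is not a reconstruction of the paper's argument but an attempt to go strictly beyond it.

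The strategy you sketch --- replace the product decomposition by an induced-representation description of the virtual $G$-action and run a Mackey/Clifford analysis --- is natural, and the paper itself gestures at it in the closing paragraphs of Section~\ref{sec:discussion}. But there is a concrete obstruction, identified in the paper, that your plan does not overcome: $\alpha\in\operatorname{Z}^2(G_{\mathrm u},\operatorname{U}(1))$ need not extend to any $2$-cocycle on $G$. The paper exhibits $G_{\mathrm u}=\mathbb{Z}_2\times\mathbb{Z}_2<\mathbb{Z}_2\times\mathbb{Z}_4=G$ with the nontrivial $[\alpha]$ as a case where no extension exists. In that situation your ``$\Ind_{G_{\mathrm u}}^{G}W$ with the natural projective $G$-action'' is not a projective representation of $G$ carrying a single cocycle $\beta$ at all, the hypothesis of Lemma~\ref{lem:gauged-mps} fails, and the gauged tensors in the form $\mathbb{B}^{g}\cong V(g)\otimes\tw{\overline{\beta}}{L}(g)$ you want to block-decompose are simply unavailable. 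You flag a ``subtlety'' about $\beta$ but treat it as a transversal-dependence issue; the real problem is existence, not uniqueness, and your sidestep (``compute with any convenient realization of the phase'') does not help, since every MPS realizing that phase faces the same obstruction --- this is exactly what the paper means by ``an MPS tensor describing a state in such a phase cannot be an intertwiner between virtual and physical representations''. A minor additional point: the example you invoke as a non-complemented cross-check, $\mathbb{Z}_n<\mathbb{Z}_n\times\mathbb{Z}_n$, is in fact complemented; that example in the paper probes a non-factorizing \emph{twist}, not a non-complemented $G_{\mathrm u}$.
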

More precisely we mean that,
the degeneracy of the original SPT will become the broken symmetry after gauging,
while the degeneracy of the SPT after gauging will be given by the broken part of the original symmetry.

How restrictive is it to assume \(G\cong G_{\mathrm{b}}\times G_{\mathrm{u}}\)?
Clearly this can not hold in general, because already for \(\mathbb{Z}_2 \leq \mathbb{Z}_4\)
we have \(\mathbb{Z}_4 \ncong \mathbb{Z}_2 \times \mathbb{Z}_2\)
and similarly for \(p\) any prime number,
\(\mathbb{Z}_{p^k}\leq \mathbb{Z}_{p^l}\) gives a counterexample whenever \(k<l\).
Generally, the existence of a complement will depend on the specific subgroup of \(G\)
remaining unbroken, but there is a class of groups,
so called \emph{complemented} groups,
which admit a complement for every choice of subgroup.
A finite abelian group \(G\) is complemented if and only if
all of its Sylow subgroups are elementary, i.e.\ it is of the form
\begin{equation}
  G \cong \prod_{i=1}^n \mathbb{Z}_{p_i},
\end{equation}
where the \(p_i\) are (not necessarily distinct) prime numbers.
This is well known and follows directly from the fundamental theorem of finite abelian groups.
We see that complemented groups are precisely those groups,
which don't allow for the type of counterexample given above.
Hence, we have already proved the conjecture to hold for all complemented groups,
which we state as the following theorem:
\begin{theorem}
  For complemented abelian symmetry groups,
  degenerate SPT phases are dual to SSB phases under gauging.
\end{theorem}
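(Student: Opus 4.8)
The plan is to obtain this as a corollary of the Main Theorem~\ref{thm:main} together with the structural characterization of complemented abelian groups recalled above; the precise statement to be established is the refined one spelled out after the conjecture, namely that untwisted gauging interchanges the degeneracy $K_\alpha$ of an SPT phase with the broken part of the symmetry group.

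First I would note that a finite abelian group is complemented exactly when it is isomorphic to a product $\prod_{i=1}^n \mathbb{Z}_{p_i}$ of prime cyclic groups --- equivalently, all its Sylow subgroups are elementary abelian --- and that for such $G$ every subgroup $G_{\mathrm{u}}\leq G$ splits off a complement, $G\cong G_{\mathrm{b}}\times G_{\mathrm{u}}$. Hence for any phase $(G_{\mathrm{u}},[\alpha])$ of a complemented $G$ the hypothesis of the Main Theorem is automatically satisfied. Taking the twist trivial ($[\tau]=1$), the abbreviation $\gamma=\tau_{\mathrm{u}}\overline{\alpha}$ of that theorem reduces to $\gamma=\overline{\alpha}$, and since $\imath_g\overline{\alpha}=(\imath_g\alpha)^{-1}$ pointwise, the degeneracy is unchanged: $K_\gamma=K_{\overline\alpha}=K_\alpha$.

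Next I would read the two halves of the correspondence directly off the conclusion of the Main Theorem, which in this case gives the gauged phase as
\begin{equation*}
  \bigl(\widehat{G_{\mathrm{b}}}\times \ker\Res^{G_{\mathrm{u}}}_{K_\alpha}\leq \widehat{G},\ [1\cdot\varphi^{\star}\overline{\alpha}]\bigr).
\end{equation*}
For the broken symmetry after gauging, the final computation in the proof of the Main Theorem yields $\widehat{G}_{\mathrm{b}}\cong\widehat{K_\gamma}=\widehat{K_\alpha}\cong K_\alpha$, the last step being the (non-canonical) self-duality of finite abelian groups; thus the degeneracy of the original SPT is exactly the broken symmetry produced by gauging, and in particular a degenerate $[\alpha]$ is sent to a genuinely symmetry-broken phase. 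For the SPT part, the cocycle $\varphi^{\star}\overline{\alpha}$ on $\widehat{G}_{\mathrm{u}}=\widehat{G_{\mathrm{b}}}\times\ker\Res^{G_{\mathrm{u}}}_{K_\alpha}$ is, by the construction inside the Main Theorem's proof together with Lemma~\ref{lem:18}, non-degenerate on the second factor (it is the pullback, along the isomorphism $\varphi=\widehat{\imath\overline{\alpha}}^{-1}$, of the non-degenerate descent of $\overline{\alpha}$ to $G_{\mathrm{u}}/K_\alpha$) and trivial on $\widehat{G_{\mathrm{b}}}$; hence the kernel of its slant product --- the degeneracy of the gauged SPT --- is precisely $\widehat{G_{\mathrm{b}}}\cong G_{\mathrm{b}}$, the broken part of the original symmetry.

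Putting these together gives the announced duality, and restricting to complemented $G$ --- where the Main Theorem always applies --- proves the theorem. I do not expect a genuine obstacle: all the substantive work is already contained in the Main Theorem, and what remains is the bookkeeping of degeneracies and broken subgroups, plus the observation that for complemented groups the factorization hypothesis $G\cong G_{\mathrm{b}}\times G_{\mathrm{u}}$ is never violated. The one point deserving an explicit sentence is the non-degeneracy of $\varphi^{\star}\overline{\alpha}$ on $\ker\Res^{G_{\mathrm{u}}}_{K_\alpha}$, which follows immediately from the fact, established inside the proof of the Main Theorem, that $\overline{\alpha}$ descends to a non-degenerate cocycle on $G_{\mathrm{u}}/K_\alpha$ and that $\widehat{\imath\overline{\alpha}}$ is an isomorphism onto $\ker\Res^{G_{\mathrm{u}}}_{K_\alpha}$.
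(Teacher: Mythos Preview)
Your proposal is correct and follows essentially the same route as the paper: the theorem is stated immediately after the paper has argued, from the Main Theorem, that the degeneracy $K_\alpha$ becomes the broken symmetry after gauging while $\widehat{G_{\mathrm{b}}}$ becomes the degeneracy of the gauged cocycle, and then observed that for complemented abelian groups the factorization hypothesis $G\cong G_{\mathrm{b}}\times G_{\mathrm{u}}$ always holds. Your write-up simply makes the bookkeeping (in particular $K_{\overline{\alpha}}=K_\alpha$ and the non-degeneracy of $\varphi^{\star}\overline{\alpha}$ on the second factor) more explicit than the paper does.
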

However, since we expect the conjecture to hold more generally,
let us now briefly discuss the case where no complement to \(G_{\mathrm{u}}\) exists in \(G\).
The following example considers the very simplest such case.
\begin{example}
  Let \(G_{\mathrm{u}} = \mathbb{Z}_2 < \mathbb{Z}_4 = G\).
  Any representation of \(\mathbb{Z}_2\) is given by an involutory matrix~\(W\),
  hence, according to \cite{schuchClassifyingQuantumPhases2011}, the virtual representation
  \((\mathcal{V}, V)\in \Rep \mathbb{Z}_4\) of an MPS \(\ket{\Psi(A)}\) in the phase \((\mathbb{Z}_2, 1)\)
  has to be of the form
  \begin{equation}
    \label{eq:5}
    V(g) =
    \begin{pmatrix}
      0&W\\
      \mathbbm{1}&0
    \end{pmatrix}^g
  \end{equation}
  for \(g \in \mathbb{Z}_4\),
  and the MPS tensor \(A\) is \(\mathbb{Z}_2\)-injective
  (\(G_{\mathrm{b}}= \mathbb{Z}_2\)),
  with respect to the representation
  \begin{equation}
    \label{eq:10}
    \lambda(h) \otimes \mathbbm{1} = 
    \begin{pmatrix}
      \mathbbm{1}&0\\
      0&-\mathbbm{1}
    \end{pmatrix}^h
  \end{equation}
  for \(h \in \mathbb{Z}_2\),
  the proof of which can easily be adapted from the proof of Lemma~\ref{lem:main2}.
  Following the same arguments as in the proof of the Main Theorem,
  it is clear that the decomposition of \(L \in \Rep \mathbb{Z}_4\)
  into irreducible characters gives the decomposition of
  \(\mathcal{G}\ket{\Psi(A)}\) into injective blocks,
  of which only two are linearly independent.
  The action of the dual group permutes between the two blocks,
  hence, the gauged MPS remains in the phase \((\mathbb{Z}_2, 1)\).
\end{example}
This example can easily be generalized to the case 
\(\mathbb{Z}_{p^k} < \mathbb{Z}_{p^l}\),
which provides further support for our conjecture.

Finally, we make some general observations,
which could serve as a starting point for future investigations.
Note that a necessary condition for our MPS based approach to apply,
independent of the existence of a complement to \(G_{\mathrm{u}}\),
appears to be that the 2-cocycle
\(\alpha\in \operatorname{Z}^2(G_{\mathrm{u}}, \operatorname{U}(1))\),
describing the SPT part of the phase,
can by extended to a 2-cocycle of \(G\), i.e.\ there exists 
\(\alpha^G\in \operatorname{Z}^2(G, \operatorname{U}(1))\),
which we call \emph{induced} 2-cocycle,
such that the restriction of \([\alpha^G]\) to \(G_{\mathrm{u}}\) is equal to \([\alpha]\).
An MPS tensor describing a state in such a phase
cannot be an intertwiner between virtual and physical representations if no induced 2-cocycle exists,
which in turn prevents the derivation of Equations \eqref{eq:3}.
When a complement to \(G_{\mathrm{u}}\) exists, one can just extend the 2-cocycle trivially,
as we did in the previous section, however, in general not every \([\alpha]\in \operatorname{H}^2(G_{\mathrm{u}}, \operatorname{U}(1))\)
can be extended to the whole group \(G\), as the following example demonstrates.
\begin{example}
  Let \(G_{\mathrm{u}} =\mathbb{Z}_2\times \mathbb{Z}_2 < \mathbb{Z}_2\times \mathbb{Z}_4 = G\)
  and \(1\neq[\alpha]\in \operatorname{H}^2(\mathbb{Z}_2\times \mathbb{Z}_2,\operatorname{U}(1))\).
  Consider the projective representation
  \begin{align}
    \label{eq:69}
    \rho\colon\eqmakebox[eq:69:1]{\(\mathbb{Z}_2\times \mathbb{Z}_4 \)} &\longrightarrow \eqmakebox[eq:69:2]{\(\operatorname{GL}(2, \mathbb{R})\)}\\\nonumber
    \eqmakebox[eq:69:1]{\((a, b)\)} &\longmapsto\eqmakebox[eq:69:2]{\(\sigma_{\mathrm{x}}^a\sigma_{\mathrm{z}}^b\),}
  \end{align}
  where \(\sigma_{\mathrm{x}}\) and \(\sigma_{\mathrm{z}}\) are the usual Pauli matrices,
  and let \(\beta\in \operatorname{Z}^2(\mathbb{Z}_2\times \mathbb{Z}_4,\operatorname{U}(1))\)
  denote its 2-cocycle.
  It is straightforward to check, e.g., by computing its slant product,
  that \([\beta]\) is the unique non-trivial element of \(\operatorname{H}^2(\mathbb{Z}_2\times \mathbb{Z}_4, \operatorname{U}(1))\),
  but since \(\rho\) restricted to \(\mathbb{Z}_2\times \mathbb{Z}_2\) is linear,
  \([\beta]\) must restrict to \(1 \in \operatorname{H}^2(\mathbb{Z}_2\times \mathbb{Z}_2, \operatorname{U}(1))\),
\end{example}

If we assume that \(\alpha\in \operatorname{Z}^2(G_{\mathrm{u}}, \operatorname{U}(1))\)
can be extended to a 2-cocycle \(\alpha^G\) of the entire group,
then the projective representation on the virtual Hilbert space of any MPS in the phase
\((G_{\mathrm{u}}, [\alpha])\) will be equivalent to an induced representation
\(\Ind  W \in \Rep^{\alpha^G}_{\mathbb{C}}G\)
for some \(W \in \Rep ^{\alpha}_{\mathbb{C}} G_{\mathrm{u}}\).

If \(\alpha\) is non-degenerate, we have
\(\Res ^G_{G_{\mathrm{u}}}\rho\cong \Res ^G_{G_{\mathrm{u}}}\sigma\)
for any irreducible \(\overline{\alpha^G}\)-representations \(\sigma\) and \(\rho\) of G
and hence
\begin{align*}
  \Ind W\otimes \rho
  &\cong \Ind  \left(W\otimes \Res \rho\right)\\
  &\cong \Ind  \left(W\otimes \Res \sigma\right)\\
  &\cong \Ind W\otimes \sigma,
\end{align*}
where \(\Res \equiv \Res ^G_{G_{\mathrm{u}}}\)
and \(\Ind \equiv \Ind ^G_{G_{\mathrm{u}}}\).
It is straightforward to show that this isomorphism commutes with \(\mathbb{A}\),
so that here again the decomposition of the \(\overline{\alpha^G}\)-regular representation
into irreducibles will not contribute to symmetry breaking after the gauging procedure.
Also, the fact that \(\Res \chi\rho = \Res \rho\) for
any \(\chi \in \widehat{G}\) in the kernel of \(\Res ^G_{G_u}\)
suggest that \(G_{\mathrm{b}}\) gives the degeneracy after gauging.
The cases where the initial phase features symmetry breaking or
\(\alpha\) is degenerate are more involved an will be left for future investigations.

\section{Acknowledgements}
D.B.\ thanks Andr\'as Moln\'ar for many enlightening discussions.
This research has been funded in part by the  European Union’s Horizon 2020
research and innovation programme through Grant No.\ 863476 (ERC-CoG SEQUAM),
and the Austrian Science Fund (FWF) through Grants
\href{https://doi.org/10.55776/COE1}{10.55776/COE1},
\href{https://doi.org/10.55776/F71}{10.55776/F71}, and the FWF Quantum Austria
Funding Initiative project ``Entanglement Order Parameters''
(\href{https://doi.org/10.55776/P36305}{10.55776/P36305}), funded through the
European Union (NextGenerationEU).
J.G.R.\ also acknowledges funding by the FWF Erwin Schrödinger Program (\href{https://doi.org/10.55776/J4796}{10.55776/J4796}).
\end{multicols}

\newpage
\begin{appendices}
\smallskip
\begin{multicols}{2}
\section{Projective representations of finite groups}
\label{sec:proj-repr}
We use Appendix \ref{sec:proj-repr} to fix notation
and to remind the reader of several important results
from the theory of projective representations of finite groups.
Everything that is contained in this section
is already well established in the literature and
we make no claim to originality.
\subsection{Twisted group algebra}
While we will exclusively consider representations over the complex numbers \(\mathbb{C}\),
all statements remain valid for any algebraically closed field
of characteristic not dividing the order of the group.
Traditionally, a projective representation of a group \(G\)
on a complex vector space \(W\) is defined to be a homomorphism
\begin{equation}
  G\longrightarrow \operatorname{PGL}(W)= \operatorname{GL}(W)/ \mathbb{C}^{\times},
\end{equation}
or more concretely, a projective representation of \(G\) on \(W\)
can be thought of as a collection of linear maps
\(\{\rho(g)\in \operatorname{GL}(W)\mid g \in G\}\),
that satisfy
\begin{equation}
  \rho(g)\rho(h) = \alpha(g,h) \rho(gh),
\end{equation}
with \(\alpha\in \mathrm{Z}^2(G, \mathrm{U}(1))\) a so called 2-cocycle.
In modern language,
the (projective) representation theory of groups
is formulated in terms of representations of group algebras.
\begin{definition}[Twisted group algebra]
  Given a 2-cocycle \(\alpha\in \mathrm{Z}^2(G, \mathrm{U}(1))\)
  of the finite group \(G\),
  we define the twisted group algebra
  \(\mathbb{C}^{\alpha}G\)
  of \(G\) over \(\mathbb{C}\) with respect to \(\alpha\),
  as the free complex vector space on the basis
  \(\left\{ \ket{g} \right\}_{g\in G}\),
  with multiplication defined on the basis vectors as
  \begin{equation}
    \ket{g}\cdot\ket{h} = \alpha(g,h) \ket{gh}.
  \end{equation}
\end{definition}
\begin{remark}
  The 2-cocycle condition of \(\alpha\) guarantees the associativity of \(\mathbb{C}^{\alpha}G\).
  The algebra structure of \(\mathbb{C}^{\alpha}G\)
  only depends on the cohomology class \([\alpha] \in \operatorname{H}^2(G, \operatorname{U}(1))\)
  of \(\alpha\)
  and not the specific choice of representative.
  Indeed, if \(\alpha^{\prime}\)
  and \(\alpha\) only differ by a coboundary \(d\gamma\), then
  \(\mathbb{C}^{\alpha}G \cong \mathbb{C}^{\alpha^{\prime}}G\)
  as complex algebras,
  since \(\ket{g} \mapsto \gamma(g) \ket{g}\) produces an isomorphism.
\end{remark}
\begin{definition}[Projective representations]
  Let \(\alpha\in \mathrm{Z}^2(G, \mathrm{U}(1))\).
  We define a \emph{projective representation}
  of \(G\) with 2-cocycle \(\alpha\),
  also called  \(\alpha\)-projective representation,
  or \(\alpha\)-representation for short,
  to be a left \(\mathbb{C}^{\alpha}G\)-module.
  We denote the category of all \(\alpha\)-projective representations of \(G\) over \(\mathbb{C}\) as
  \(\Rep^{\alpha}_{\mathbb{C}}G\).
\end{definition}
\begin{remark}
  Clearly, \(\alpha\)-projective representations in the earlier sense
  are in one to one correspondence with linear representations of \(\mathbb{C}^{\alpha}G\),
  i.e.\ \(\mathbb{C}^{\alpha}G\)-modules.
  We will often abuse notation and denote representations
  simply by their corresponding linear maps,
  as is common practice in the physics literature.
\end{remark}
\begin{definition}[Linear and projective equivalence]
  Two \(\alpha\)-projective representations \(\rho\) and \(\rho^{\prime}\)
  are said to be \emph{linearly equivalent},
  denoted \(\rho \cong \rho^{\prime}\),
  if they are isomorphic as \(\mathbb{C}^{\alpha}G\)-modules.
  In terms of linear maps this means that there exists an isomorphism \(V\)
  of the underlying vector spaces,
  such that for all \(g\in G\)
  \begin{equation}
    \rho(g) = V^{-1}\rho^{\prime}(g)V.
  \end{equation}
  We will often abuse notation and write \(\rho(g) \cong \rho^{\prime}(g)\) instead.
  On the level of linear maps there is another notion of equivalence one can define.
  Two projective representations \(\rho\) and \(\sigma\),
  possibly with different 2-cocycles,
  are said to be \emph{projectively equivalent},
  denoted \(\rho \sim \sigma\),
  if there exists a map
  \(\phi\colon G \rightarrow \mathbb{C}^{\times}\),
  such that \(\phi\cdot\rho\) and \(\sigma\) are linearly equivalent.
\end{definition}
\begin{remark}
  Clearly, the 2-cocycles of two projectively equivalent representations
  can only differ by a coboundary, hence share the same equivalence class.
  Since their respective twisted group algebras are thus isomorphic one
  could also formulate projective equivalence in terms of isomorphic modules.
  Notice however that the notion of projective equivalence is not appropriate for
  describing decompositions into subrepresentations,
  as it is not possible to define addition of representations with differing 2-cocycles,
  even when they belong to the same equivalence class.
\end{remark}
\begin{definition}[Regular projective representations]\label{def:regular}
  Let \(\alpha\in \mathrm{Z}^2(G, \mathrm{U}(1))\).
  The left regular \(\mathbb{C}^{\alpha}G\)-module
  induces the \(\alpha\)-projective left regular representation \(L^{\alpha}\),
  which, as linear maps on the underlying vector space \(\mathbb{C}G\),
  is given by
  \begin{equation}\label{eq:15}
    \tw{\alpha}{L}(g) = \sum_{h\in G} \alpha(g, h) \dyad{gh}{h}.
  \end{equation}
  Multiplication by \(\ket{g}^{-1}\) from the right exhibits
  \(\mathbb{C}^{\alpha^{-1}}G\) as a left \(\mathbb{C}^{\alpha}G\)-module,
  which we call the \(\alpha\)-projective right regular representation \(R^{\alpha}\).
  As linear maps on \(\mathbb{C}G\) it is given by
  \begin{equation}\label{eq:17}
    \tw{\alpha}{R}(g) = \sum_{h\in G} \alpha(h, g) \dyad{h}{hg}.
  \end{equation}
\end{definition}
\begin{remark}
  If \(\alpha\) is the trivial 2-cocycle we recover the standard
  left and right regular representations.
  In general, \(\tw{\alpha}{L}\) will commute with \(\tw{\beta}{R}\)
  only if \([\alpha] = [\beta]^{-1}\).
  Note that the right regular \(\mathbb{C}^{\alpha}G\)-module
  is a left \((\mathbb{C}^{\alpha}G)^{\mathrm{op}}\)-module
  and in general not isomorphic to a 
  left \(\mathbb{C}^{\alpha}G\)-module.
\end{remark}
We finish this section by stating several well known and important results,
which are completely analogous to the corresponding results concerning linear representations
and can be proven analogously.
\begin{theorem}[Maschke]\label{thm:maschke}
  \(\mathbb{C}^{\alpha}G\) is semisimple.
\end{theorem}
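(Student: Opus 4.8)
The plan is to mimic Maschke's classical averaging argument in the twisted setting; the only genuinely new point is that the $2$-cocycle phases have to cancel along the way. First I would reduce the statement to complete reducibility of modules: it suffices to show that every finite-dimensional $\mathbb{C}^{\alpha}G$-module $M$ decomposes as a direct sum of irreducibles, since an algebra that is semisimple as a left module over itself is a semisimple ring.

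Then, given a submodule $N \leq M$, I would choose an arbitrary $\mathbb{C}$-linear projection $p\colon M \to M$ with image $N$ and $p|_{N} = \mathrm{id}_{N}$, write $\rho(g)$ for the action of the basis element $\ket{g}$ (so that $\rho(g)\rho(h) = \alpha(g,h)\rho(gh)$ and, since $\alpha$ is normalized, $\rho(g)^{-1} = \alpha(g,g^{-1})^{-1}\rho(g^{-1})$), and form the average
\begin{equation*}
  \widetilde{p} \coloneq \frac{1}{|G|}\sum_{g\in G}\rho(g)\,p\,\rho(g)^{-1}.
\end{equation*}
The heart of the argument is the check that $\widetilde{p}$ is a $\mathbb{C}^{\alpha}G$-module map: using $\rho(h)\rho(g) = \alpha(h,g)\rho(hg)$ together with $\rho(g)^{-1} = \alpha(h,g)^{-1}\rho(hg)^{-1}\rho(h)$ (obtained by inverting the former), the two scalar factors are mutually inverse and cancel, leaving $\rho(h)\,\rho(g)\,p\,\rho(g)^{-1} = \rho(hg)\,p\,\rho(hg)^{-1}\,\rho(h)$; reindexing the sum by $hg \mapsto g$ then gives $\rho(h)\widetilde{p} = \widetilde{p}\,\rho(h)$. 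Since $N$ is a submodule, each summand maps $M$ into $N$ and restricts to the identity on $N$, so $\widetilde{p}$ is a module projection onto $N$ and $M = N \oplus \ker\widetilde{p}$ as $\mathbb{C}^{\alpha}G$-modules, which yields complete reducibility and hence the theorem.

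I expect essentially no real obstacle; the one thing to be careful about is precisely the phase bookkeeping in the averaging step — that is what makes $\widetilde{p}$ intertwine the (only projective) action $\rho$ despite $\rho$ failing to be a genuine homomorphism. As alternative routes one could instead equip $\mathbb{C}^{\alpha}G$ with the involution $\ket{g}^{\ast} \coloneq \overline{\alpha(g,g^{-1})}\,\ket{g^{-1}}$ and the inner product making the basis $\{\ket{g}\}$ orthonormal, exhibiting it as a finite-dimensional $C^{\ast}$-algebra and hence semisimple; or one could lift $[\alpha]$ to a finite central extension $1 \to A \to \widetilde{G} \to G \to 1$ and realize $\mathbb{C}^{\alpha}G$ as a two-sided ideal (a block) of the ordinary group algebra $\mathbb{C}\widetilde{G}$, which is semisimple by classical Maschke, so that its summand is as well.
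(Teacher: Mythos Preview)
Your averaging argument is correct, and the phase bookkeeping is handled cleanly: the identity $\rho(h)\rho(g)\,p\,\rho(g)^{-1} = \rho(hg)\,p\,\rho(hg)^{-1}\rho(h)$ is exactly the point, and it goes through as you wrote. The alternative routes you sketch (the $C^{\ast}$-algebra structure, or passing to a finite central extension and invoking classical Maschke) are also valid.

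That said, the paper does not actually prove this theorem. It appears in an appendix whose stated purpose is to ``fix notation and to remind the reader of several important results from the theory of projective representations of finite groups,'' with the explicit disclaimer that ``everything that is contained in this section is already well established in the literature.'' Maschke's theorem for twisted group algebras is simply asserted as a standard fact; the paper supplies no argument of its own to compare against. Your proposal therefore goes beyond what the paper does, supplying a complete and correct proof where the paper only cites the result.
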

\begin{remark}
  This implies that every projective representation of a finite group decomposes into
  a direct sum of irreducible representation.
\end{remark}
\begin{deflemma}\label{lem:4}
  The set of classes of linearly inequivalent irreducible representations of \(G\)
  with 2-cocycle \(\alpha\) is finite. 
  We let \(r_{\alpha}\) denote the cardinality of this set.
\end{deflemma}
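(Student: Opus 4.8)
The plan is to reduce the statement to the Artin--Wedderburn classification of simple modules over a finite-dimensional semisimple algebra. First I would note that $\mathbb{C}^{\alpha}G$ is finite-dimensional, with $\dim_{\mathbb{C}}\mathbb{C}^{\alpha}G = |G|$, since by construction it carries the basis $\{\ket{g}\}_{g\in G}$; and by Theorem~\ref{thm:maschke} it is semisimple. Recalling that an $\alpha$-projective representation of $G$ is precisely a left $\mathbb{C}^{\alpha}G$-module and that linear equivalence is nothing but isomorphism of such modules, the assertion becomes the claim that a finite-dimensional semisimple $\mathbb{C}$-algebra has only finitely many isomorphism classes of simple modules.

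For this I would invoke Artin--Wedderburn: since $\mathbb{C}$ is algebraically closed, $\mathbb{C}^{\alpha}G \cong \bigoplus_{i=1}^{m} M_{n_i}(\mathbb{C})$ for some finite $m$ and positive integers $n_i$ with $\sum_{i=1}^{m} n_i^{2} = |G|$. The simple modules of this product are, up to isomorphism, exactly the $m$ modules on which the $i$-th matrix block acts tautologically on $\mathbb{C}^{n_i}$ while all other blocks act by zero; modules attached to different blocks are non-isomorphic, being distinguished for instance by which central idempotent acts as the identity. Hence the set of linearly inequivalent irreducible $\alpha$-representations of $G$ is in bijection with $\{1,\dots,m\}$, so it is finite, and we may set $r_{\alpha}\coloneq m$.

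I do not expect a genuine obstacle here: every step is a textbook consequence of finiteness together with Maschke's theorem. The only ingredient requiring more than a one-line justification is the description of the simple modules of $\bigoplus_{i} M_{n_i}(\mathbb{C})$, but this is part of the standard proof of Artin--Wedderburn and can simply be cited; as with the other results in this appendix, the argument is entirely parallel to the linear case $\alpha = 1$.
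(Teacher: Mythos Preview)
Your argument is correct: reducing to Artin--Wedderburn for the finite-dimensional semisimple algebra \(\mathbb{C}^{\alpha}G\) immediately yields finitely many isomorphism classes of simple modules, and your identification \(r_{\alpha}=m\) is exactly right. The paper does not supply its own proof of this lemma but simply cites a textbook (Berkovich--Zhmud, Chapter~6, Corollary~11), so your write-up in fact provides more detail than the paper itself; the Artin--Wedderburn route you chose is the standard one and is precisely what underlies the cited result.
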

\begin{lemma}\label{lem:5}
  \(r_{\alpha} = \operatorname{dim}_{\mathbb{C}} \operatorname{Z}(\mathbb{C}^{\alpha}G)\),
  where \(\operatorname{Z}(-)\) denotes taking the center of an algebra or ring.
\end{lemma}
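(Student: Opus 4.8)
The plan is to deduce this from Wedderburn structure theory, exactly as one does for ordinary group algebras. By Maschke's theorem (Theorem~\ref{thm:maschke}), \(\mathbb{C}^{\alpha}G\) is a finite-dimensional semisimple algebra over the algebraically closed field \(\mathbb{C}\), so the Artin--Wedderburn theorem yields an algebra isomorphism
\begin{equation*}
  \mathbb{C}^{\alpha}G \cong \bigoplus_{i=1}^{s} \operatorname{Mat}_{n_i}(\mathbb{C}),
\end{equation*}
where the factors are in bijection with the isomorphism classes of simple left \(\mathbb{C}^{\alpha}G\)-modules, the \(i\)-th factor acting faithfully on the simple module \(\mathbb{C}^{n_i}\).

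First I would identify \(s\) with \(r_{\alpha}\): simple left \(\mathbb{C}^{\alpha}G\)-modules are by definition the irreducible \(\alpha\)-projective representations of \(G\), and two such modules are isomorphic precisely when the corresponding representations are linearly equivalent, so the number \(s\) of Wedderburn blocks equals \(r_{\alpha}\) as defined in Lemma and Definition~\ref{lem:4}. Then the computation of the dimension of the center is immediate: taking centers commutes with finite direct products, and \(\operatorname{Z}(\operatorname{Mat}_{n_i}(\mathbb{C}))\) consists exactly of the scalar multiples of the identity, hence is one-dimensional. Consequently \(\operatorname{Z}(\mathbb{C}^{\alpha}G) \cong \mathbb{C}^{s}\), and therefore \(\dim_{\mathbb{C}}\operatorname{Z}(\mathbb{C}^{\alpha}G) = s = r_{\alpha}\).

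There is essentially no obstacle here: the statement is a standard corollary of Artin--Wedderburn and the argument is verbatim the one used for untwisted group algebras. The only points requiring (routine) attention are the passage from ``algebra'' to ``module'' language in identifying the Wedderburn blocks with linear-equivalence classes of irreducible \(\alpha\)-representations, and the elementary fact that the center of a full matrix algebra over \(\mathbb{C}\) is the line of scalar matrices.
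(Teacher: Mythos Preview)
Your proposal is correct and is the standard Artin--Wedderburn argument for this fact. The paper does not supply its own proof of this lemma but simply cites a textbook (Chapter~6 of \cite{berkovicCharactersFiniteGroups1998}), where the same Wedderburn-based reasoning is used, so your approach matches what the referenced proof does.
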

Lemmas~\ref{lem:4} and~\ref{lem:5} are stated and proved in, e.g.,
Chapter 6 of the textbook \cite{berkovicCharactersFiniteGroups1998},
as Corollary 11 and Lemma 14 respectively.
\subsection{Character theory}\label{sec:character-theory}
The character theory for projective representations of finite groups
develops mostly analogous to the linear case.
Choosing \(\alpha\) to be the trivial 2-cocycle,
reduces all statements to the character theory for linear representations.
For details see, e.g., \cite{chengCharacterTheoryProjective2015}
or that author's accompanying notes of the same name.
Proofs, which we omit here, can be found there.
\begin{definition}[Unitary 2-cocycles]
  We say that the 2-cocycle \(\alpha\) is \emph{unitary}
  if there exists a number \(N\in \mathbb{N}\),
  such that \(\alpha^N=1\).
\end{definition}
\begin{definition}[Normalized 2-cocycle]
  We say that the 2-cocycle \(\alpha\) of \(G\) is \emph{normalized}
  if for all \(g\in G\)
  \begin{equation}
    \label{eq:29}
    \alpha(g, e) = \alpha(e, g) = 1,
  \end{equation}
  where \(e\) denotes the unit element of \(G\).
\end{definition}
\begin{remark}
  One can show that for any 2-cocycle \(\alpha\),
  there exists a unitary 2-cocycle
  \(\alpha^{\prime}\) with \([\alpha] = [\alpha^{\prime}]\)
  and if \(\alpha\) is normalized, \(\alpha^{\prime}\)
  can be chosen to be normalized as well,
  hence we will restrict our discussion to unitary 2-cocycles.
  This is not strictly necessary, but simplifies some arguments.
\end{remark}
\begin{definition}[Projective characters]
  Let \(\rho\) be an \(\alpha\)-projective representation of \(G\).
  We call \(\chi_{\rho}\coloneq \operatorname{tr}\circ\, \rho\)
  the \emph{character} of \(\rho\).
  We say a function \(\chi\colon G\rightarrow \mathbb{C}\) is an
  (irreducible) \(\alpha\)-character,
  if there exists an (irreducible) \(\alpha\)-representation \(\rho\),
  such that \(\chi = \chi_{\rho}\).
\end{definition}
\begin{lemma}
  Let \(\chi\) be a unitary \(\alpha\)-character of \(G\).
  For all \(g,h\in G\) we have
  \begin{align}
    \chi(hgh^{-1}) &= \frac{\alpha(h, h^{-1})}{\alpha(h, gh^{-1})\alpha(g, h^{-1})} \chi(g),\\
    \chi(g^{-1}) &= \alpha(g, g^{-1}) \overline{\chi}(g).
  \end{align}
\end{lemma}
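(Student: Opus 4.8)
The plan is to work with an actual $\alpha$-projective representation $\rho$ realizing the character, so that $\chi = \chi_{\rho} = \operatorname{tr}\circ\,\rho$, and to reduce both identities to the defining relation $\rho(g)\rho(h)=\alpha(g,h)\rho(gh)$ together with the 2-cocycle condition~\eqref{eq:6}. First I would normalize the setup: by the remark above we may take $\alpha$ normalized, so $\rho(e)=\mathbbm{1}$, and since $\alpha$ is $\operatorname{U}(1)$-valued (and we may arrange it unitary) the representation $\rho$ may be assumed unitary --- average any inner product over $G$, exactly as in the linear case, which works because $|\alpha(g,h)|=1$. Hence $\rho(g)^{-1}=\rho(g)^{\dagger}$, and from $\rho(g)\rho(g^{-1})=\alpha(g,g^{-1})\rho(e)=\alpha(g,g^{-1})\mathbbm{1}$ I obtain the key identity $\rho(g^{-1})=\alpha(g,g^{-1})\,\rho(g)^{-1}$ for all $g\in G$. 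The second formula then follows immediately: taking traces in this identity and using $\operatorname{tr}\rho(g)^{-1}=\operatorname{tr}\rho(g)^{\dagger}=\overline{\operatorname{tr}\rho(g)}=\overline{\chi}(g)$ gives $\chi(g^{-1})=\alpha(g,g^{-1})\,\overline{\chi}(g)$.

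For the first formula I would regroup $hgh^{-1}=(hg)h^{-1}$ and apply the defining relation twice to get
\begin{equation*}
  \rho(hgh^{-1}) = \frac{1}{\alpha(hg,h^{-1})\,\alpha(h,g)}\,\rho(h)\rho(g)\rho(h^{-1}).
\end{equation*}
Substituting $\rho(h^{-1})=\alpha(h,h^{-1})\rho(h)^{-1}$ turns the right-hand side into a scalar times the conjugate $\rho(h)\rho(g)\rho(h)^{-1}$, whose trace is $\chi(g)$, so taking traces yields
\begin{equation*}
  \chi(hgh^{-1}) = \frac{\alpha(h,h^{-1})}{\alpha(hg,h^{-1})\,\alpha(h,g)}\,\chi(g).
\end{equation*}
It then remains only to match this prefactor with the one in the statement, i.e.\ to verify $\alpha(hg,h^{-1})\,\alpha(h,g)=\alpha(h,gh^{-1})\,\alpha(g,h^{-1})$; but this is precisely the 2-cocycle condition~\eqref{eq:6} evaluated at the triple $(h,g,h^{-1})$, and dividing through completes the identification.

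There is no real obstacle here; the argument is short. The only points that warrant a word of care are the reduction to a unitary $\rho$ --- which is standard for finite groups and is used solely to justify the complex conjugation in the second identity --- and the bookkeeping of argument orders inside each occurrence of $\alpha$, so that the single invocation of the cocycle identity lands on the correct pair.
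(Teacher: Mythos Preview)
Your argument is correct. The paper itself omits the proof of this lemma (deferring to \cite{chengCharacterTheoryProjective2015}), and what you wrote is exactly the standard derivation: expand via the defining relation, use $\rho(h^{-1})=\alpha(h,h^{-1})\rho(h)^{-1}$, take traces, and invoke the cocycle identity at $(h,g,h^{-1})$ to match the prefactors.
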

\begin{lemma}[Schur orthogonality]\label{lem:orhtogonal-characters}
  Irreducible unitary \(\alpha\)-projective characters
  form an orthonormal systems with respect to the scalar product
  \begin{equation}
    \label{eq:20}
    \braket{\chi, \eta} \coloneq \frac{1}{|G|}\sum_{g\in G} \overline{\chi(g)} \eta(g).
  \end{equation}
\end{lemma}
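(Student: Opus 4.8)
The plan is to run the standard averaging argument, adapted to the twisted setting, using Maschke's theorem (Theorem~\ref{thm:maschke}) and Schur's lemma for $\mathbb{C}^{\alpha}G$-modules as the only nontrivial inputs. Since every 2-cocycle is cohomologous to a unitary one and the pairing in Equation~\eqref{eq:20} depends only on linear equivalence classes, I would assume $\alpha$ unitary and normalized and fix \emph{unitary} irreducible $\alpha$-representations $\rho$ on $V$ (with $d\coloneq\dim V$) and $\sigma$ on $W$ realizing the two characters in question; then $\rho(g)^{-1}=\rho(g)^{\dag}$ and likewise for $\sigma$, while normalization gives $\rho(e)=\mathbbm{1}$.

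First I would form, for any linear map $M\colon W\to V$, the averaged operator
\begin{equation*}
  \Phi(M)\coloneq\frac{1}{|G|}\sum_{g\in G}\rho(g)^{-1}\,M\,\sigma(g)\colon W\longrightarrow V,
\end{equation*}
and show that $\Phi(M)$ intertwines $\rho$ and $\sigma$. Writing the cocycle relation as $\rho(gh^{-1})\rho(h)=\alpha(gh^{-1},h)\,\rho(g)$ gives $\rho(h)\rho(g)^{-1}=\alpha(gh^{-1},h)\,\rho(gh^{-1})^{-1}$, and the analogous identity for $\sigma$ gives $\sigma(g)=\alpha(gh^{-1},h)^{-1}\,\sigma(gh^{-1})\,\sigma(h)$; the two scalar factors are mutually inverse and cancel, so after substituting $g\mapsto gh^{-1}$ one obtains $\rho(h)\Phi(M)=\Phi(M)\sigma(h)$ for all $h\in G$. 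This cancellation hinges on $\rho$ and $\sigma$ carrying the \emph{same} cocycle, and it is the one place where the projective case genuinely differs from the linear one; I expect this bookkeeping to be the main point requiring care.

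Next, Maschke's theorem makes $\mathbb{C}^{\alpha}G$ semisimple, so Schur's lemma over the algebraically closed field $\mathbb{C}$ applies to the intertwiner $\Phi(M)$: if $\rho\not\cong\sigma$ then $\Phi(M)=0$ for every $M$, and if $\rho$ and $\sigma$ are the same irreducible representation then $\Phi(M)=\lambda(M)\,\mathbbm{1}$, where tracing $\Phi(M)=\frac{1}{|G|}\sum_g\rho(g)^{-1}M\rho(g)$ forces $\lambda(M)=\operatorname{tr}(M)/d$. Specializing $M$ to matrix units and using $(\rho(g)^{-1})_{kl}=\overline{\rho(g)_{lk}}$ then yields the Great Orthogonality relations for matrix elements (cf.\ the Great Orthogonality Theorem, Theorem~\ref{thm:great-orthogonality}). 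Finally, summing over diagonal entries, i.e.\ using $\chi_{\rho}(g)=\sum_i\rho(g)_{ii}$, collapses these to
\begin{equation*}
  \braket{\chi_{\rho},\chi_{\sigma}}=\frac{1}{|G|}\sum_{g\in G}\sum_{i,j}\overline{\rho(g)_{ii}}\,\sigma(g)_{jj}=\begin{cases}0&\text{if }\rho\not\cong\sigma,\\ 1&\text{if }\rho\cong\sigma,\end{cases}
\end{equation*}
and since characters depend only on the linear equivalence class, two distinct irreducible unitary $\alpha$-characters come from inequivalent irreducibles and are orthogonal, while each has unit norm — exactly the asserted orthonormality. (Alternatively, if the matrix-element Great Orthogonality Theorem is proved beforehand, one may quote it directly and only the last trace step remains.)
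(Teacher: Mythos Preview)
Your argument is correct and is precisely the standard one. Note, however, that the paper does not actually supply a proof of this lemma: at the start of the character-theory subsection it explicitly says that proofs are omitted and refers to \cite{chengCharacterTheoryProjective2015}. The averaging--plus--Schur argument you give is essentially the same computation the paper later carries out when it \emph{does} prove the matrix-element version (Theorem~\ref{thm:great-orthogonality}); your alternative of quoting that theorem and tracing would be circular in the paper's ordering (the character orthogonality is stated first and used before Theorem~\ref{thm:great-orthogonality}), but your self-contained derivation avoids that issue.
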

\begin{corollary}\label{cor:character-multiplicity}
  Suppose \((\mathcal{H}, U)\in \Rep^{\alpha}_{\mathbb{C}} G\) decomposes into irreducible representations according to
  \begin{equation}
    \label{eq:23}
    \mathcal{H} \cong \mathcal{W}_1 \oplus \cdots \oplus \mathcal{W}_k.
  \end{equation}
  Then \(\braket{\operatorname{tr}\circ\,W, \operatorname{tr}\circ\, U}\) 
  equals the number of modules \(\mathcal{W}_i\) in the decomposition of \(\mathcal{H}\)
  isomorphic to \(\mathcal{W}\),
  for any irreducible \((\mathcal{W}, W)\in \Rep_{\mathbb{C}}^{\alpha}\).
\end{corollary}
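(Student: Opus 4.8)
The plan is to deduce the statement from Schur orthogonality for projective characters, Lemma~\ref{lem:orhtogonal-characters}, combined with the additivity of characters under direct sums. As a preliminary, invoking the standing convention (justified in the remark following the definition of normalized $2$-cocycles) we may assume $\alpha$ is unitary, so that Lemma~\ref{lem:orhtogonal-characters} applies; recall also that Maschke's Theorem~\ref{thm:maschke} is precisely what guarantees a decomposition as in the hypothesis exists in the first place.

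First I would fix bases compatible with the decomposition $\mathcal{H}\cong\mathcal{W}_1\oplus\cdots\oplus\mathcal{W}_k$, so that $U(g)$ becomes block diagonal with blocks $W_i(g)$, where $W_i$ is the $\alpha$-representation carried by $\mathcal{W}_i$; taking traces gives $\operatorname{tr}\circ\, U=\sum_{i=1}^{k}\operatorname{tr}\circ\, W_i$, where we also use that the trace is conjugation invariant, so isomorphic representations have equal characters. Next, by linearity of the pairing \eqref{eq:20} in its second argument,
\[
  \braket{\operatorname{tr}\circ\, W, \operatorname{tr}\circ\, U}
  = \sum_{i=1}^{k}\braket{\operatorname{tr}\circ\, W, \operatorname{tr}\circ\, W_i}.
\]
Since $\mathcal{W}$ and each $\mathcal{W}_i$ are irreducible, Lemma~\ref{lem:orhtogonal-characters} shows that the $i$-th term equals $1$ if $\mathcal{W}_i\cong\mathcal{W}$ and $0$ otherwise; here the vanishing for non-isomorphic irreducibles uses that distinct linear-equivalence classes of irreducibles have orthogonal, hence distinct, characters. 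Summing, the right-hand side counts exactly the indices $i$ with $\mathcal{W}_i\cong\mathcal{W}$, which is the claimed multiplicity.

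There is essentially no obstacle: the corollary is a formal consequence of Schur orthogonality and semisimplicity. The only point worth a word of care is the reduction to unitary $\alpha$, needed so that the orthonormality statement of Lemma~\ref{lem:orhtogonal-characters} applies verbatim; this is harmless, since the multiplicities in question are module-theoretic invariants, insensitive to the choice of cocycle within its cohomology class, and a unitary representative always exists within $[\alpha]$.
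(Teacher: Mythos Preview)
Your proof is correct and is precisely the standard argument; the paper does not actually supply its own proof of this corollary (the surrounding appendix explicitly omits proofs, deferring to \cite{chengCharacterTheoryProjective2015}), but labels it a corollary of Lemma~\ref{lem:orhtogonal-characters}, which is exactly the dependency you invoke. Your handling of the unitarity reduction and the additivity of characters is fine and matches the paper's standing conventions.
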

\begin{corollary}\label{cor:isomorphic-characters}
  Two unitary \(\alpha\)-projective representations over \(\mathbb{C}\) are isomorphic,
  if and only if they have the same character.
\end{corollary}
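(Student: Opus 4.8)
The plan is to prove the two implications separately: the ``only if'' direction is immediate from invariance of the trace under conjugation, and the ``if'' direction will follow from semisimplicity together with the multiplicity formula established in Corollary~\ref{cor:character-multiplicity}.

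First I would dispatch the easy direction. If the unitary \(\alpha\)-representations \(\rho\) and \(\rho^{\prime}\) are linearly equivalent, then by definition there is an invertible intertwiner \(V\) with \(\rho(g) = V^{-1}\rho^{\prime}(g)V\) for all \(g\in G\), so cyclicity of the trace gives \(\chi_{\rho}(g) = \operatorname{tr}\rho(g) = \operatorname{tr}\rho^{\prime}(g) = \chi_{\rho^{\prime}}(g)\); hence equivalent representations have equal characters.

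For the converse, suppose \(\chi_{\rho} = \chi_{\rho^{\prime}}\). By Maschke's theorem (Theorem~\ref{thm:maschke}) the twisted group algebra \(\mathbb{C}^{\alpha}G\) is semisimple, so both \(\rho\) and \(\rho^{\prime}\) decompose as finite direct sums of irreducible \(\alpha\)-representations. For any irreducible \((\mathcal{W}, W)\in \Rep^{\alpha}_{\mathbb{C}}G\), Corollary~\ref{cor:character-multiplicity} identifies the multiplicity of \(\mathcal{W}\) in \(\rho\) with the scalar product \(\braket{\operatorname{tr}\circ\,W,\, \chi_{\rho}}\), which depends on \(\rho\) only through its character; the same holds for \(\rho^{\prime}\). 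Since \(\chi_{\rho} = \chi_{\rho^{\prime}}\), each irreducible occurs with the same multiplicity in both, and as the two representations share the 2-cocycle \(\alpha\) these isotypic blocks can be added unambiguously, yielding an isomorphism \(\rho \cong \rho^{\prime}\) of \(\mathbb{C}^{\alpha}G\)-modules.

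I do not expect a genuine obstacle: this is a formal consequence of the results stated above, exactly parallel to the classical linear case. The only point deserving a word of care is that the multiplicity formula in Corollary~\ref{cor:character-multiplicity} rests on Schur orthogonality (Lemma~\ref{lem:orhtogonal-characters}), which is why the statement is phrased for \emph{unitary} 2-cocycles and unitary representations; granted that reduction, the argument above is complete.
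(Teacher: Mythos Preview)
Your proposal is correct. The paper does not actually supply a proof of this corollary---it is one of the results in the character-theory section whose proofs are explicitly deferred to the cited reference---so there is nothing to compare against directly; that said, your argument (semisimplicity plus the multiplicity formula of Corollary~\ref{cor:character-multiplicity}) is precisely the standard one and is clearly what the placement of the statement as a corollary of Lemma~\ref{lem:orhtogonal-characters} and Corollary~\ref{cor:character-multiplicity} is meant to suggest.
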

\begin{deflemma}
  Let \(\tw{\alpha}{L}\) be the left regular \(\alpha\)-projective representation 
  (cf.\ Definition~\ref{def:regular})
  and \(\ell^{\alpha}\coloneq \operatorname{tr}\circ \tw{\alpha}{L}\)
  its character.
  We have 
  \begin{equation}
    \ell^{\alpha}(g) = |G| \,\delta_{g, e}.
  \end{equation}
\end{deflemma}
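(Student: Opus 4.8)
The plan is to read off the trace directly from the explicit matrix form of \(\tw{\alpha}{L}\) recorded in Definition~\ref{def:regular}, mimicking the classical computation for the ordinary regular representation. First I would work in the canonical basis \(\{\ket{k}\}_{k\in G}\) of \(\mathbb{C}G\) and isolate the diagonal entries of \(\tw{\alpha}{L}(g) = \sum_{h\in G}\alpha(g,h)\,\dyad{gh}{h}\). The basis vector \(\ket{k}\) enters only through the summand \(h=k\), which sends \(\ket{k}\) to \(\alpha(g,k)\ket{gk}\); hence the diagonal matrix element is \(\bra{k}\tw{\alpha}{L}(g)\ket{k} = \alpha(g,k)\,\delta_{gk,k}\).

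Next I would observe that in a group \(gk = k\) forces \(g = e\). So for \(g \neq e\) every diagonal entry vanishes and \(\ell^{\alpha}(g) = \operatorname{tr}\tw{\alpha}{L}(g) = 0\). For \(g = e\) I would invoke the standing convention that all \(2\)-cocycles are normalized (cf.\ the remark following Definition~\ref{def:projective-reps}, and the notion of a normalized 2-cocycle recalled above), so that \(\alpha(e,k) = 1\) for every \(k\in G\); then each of the \(|G|\) diagonal entries equals \(1\) and \(\ell^{\alpha}(e) = |G|\). Combining the two cases gives \(\ell^{\alpha}(g) = |G|\,\delta_{g,e}\).

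There is no real obstacle here: this is simply the projective analogue of the classical fact that the character of the ordinary left regular representation is \(|G|\) at the identity and \(0\) elsewhere, and the one-line computation goes through verbatim once the scalar \(\alpha(g,k)\) is carried along. The only point worth flagging is the use of normalization of \(\alpha\); if one preferred not to invoke it, one could instead note that \(\tw{\alpha}{L}(e) = \sum_{k\in G} \alpha(e,k)\dyad{k}{k}\) is a diagonal unitary forced to act as a scalar on the regular module, whence its trace has modulus \(|G|\), and passing to a normalized representative of \([\alpha]\) fixes the phase — but invoking the normalization convention directly is cleaner and is what I would do.
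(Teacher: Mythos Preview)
Your proposal is correct and follows essentially the same approach as the paper: compute the diagonal entries of \(\tw{\alpha}{L}(g)\) in the standard basis of \(\mathbb{C}G\), use \(gh\neq h\) for \(g\neq e\) to kill the trace, and invoke normalization of \(\alpha\) to get \(|G|\) at the identity. The paper's proof is the same two-line argument, only stated more tersely.
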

\begin{proof}
  In the standard basis of \(\mathbb{C}G\),
  \(\tw{\alpha}{L}\) is given by Equation~\eqref{eq:15}.
  Since we assume \(\alpha\) to be normalized,
  \(\ell^{\alpha}(e)=|G|\) follows immediately
  and if \(g\neq e\) then \(gh \neq h,\) for all \(h \in G,\) implies that
  all diagonal entries of \(\tw{\alpha}{L}(g)\) are zero,
  hence \(\ell^{\alpha}(g) = 0\).
\end{proof}
\begin{corollary}\label{cor:reg-char}
  From the proof of the previous lemma we also see that
  \begin{equation}
    \operatorname{tr}\left\{ D \tw{\alpha}{L}(g) \right\}
    = \delta_{g,e}\operatorname{tr} D,
  \end{equation}
  for any matrix \(D\) which is diagonal
  in the standard basis of \(\mathbb{C}G\).
\end{corollary}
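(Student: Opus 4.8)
The plan is to use the explicit matrix form of the left regular $\alpha$-projective representation together with the standing normalization of $\alpha$. In the standard basis $\{\ket{h}\}_{h\in G}$ of $\mathbb{C}G$, Equation~\eqref{eq:15} exhibits $\tw{\alpha}{L}(g)$ as a monomial matrix whose only nonzero entries are $\bra{gh}\tw{\alpha}{L}(g)\ket{h} = \alpha(g,h)$. In particular its diagonal entries are
\begin{equation*}
  \bra{h}\tw{\alpha}{L}(g)\ket{h} = \alpha(g,h)\,\delta_{gh,h} = \delta_{g,e},
\end{equation*}
the last equality using that $gh = h$ forces $g = e$ and that $\alpha(e,h) = 1$ by normalization.

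Given this, I would write $D = \sum_{h\in G} d_h \dyad{h}{h}$ for the diagonal matrix $D$ and compute directly
\begin{equation*}
  \operatorname{tr}\left\{ D\,\tw{\alpha}{L}(g) \right\}
  = \sum_{h\in G} d_h\, \bra{h}\tw{\alpha}{L}(g)\ket{h}
  = \delta_{g,e} \sum_{h\in G} d_h
  = \delta_{g,e}\operatorname{tr} D ,
\end{equation*}
which is the claim. There is essentially no obstacle: the statement is an immediate refinement of the preceding computation of $\ell^{\alpha}(g) = \operatorname{tr}\tw{\alpha}{L}(g)$, obtained by inserting the diagonal weights $d_h$ and observing that they factor out of the sum once the Kronecker delta collapses it. The only point requiring care is to invoke the assumption that $\alpha$ is normalized, so that the surviving diagonal entries at $g = e$ equal $1$ rather than a phase $\alpha(e,h)$.
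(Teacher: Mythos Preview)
Your proof is correct and follows the same approach as the paper: the corollary is stated without separate proof, simply noting that the previous argument (that all diagonal entries of $\tw{\alpha}{L}(g)$ vanish for $g\neq e$, and equal $1$ for $g=e$ by normalization) immediately yields the result when a diagonal weight $D$ is inserted. Your write-up makes this explicit and is entirely in line with the paper's reasoning.
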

\begin{corollary}\label{cor:tr-reg-rep}
  The left regular \(\alpha\)-representation
  \(\tw{\alpha}{L}\) of \(G\)
  is (up to isomorphism) uniquely defined by the property
  \begin{equation}
    \operatorname{tr}(\tw{\alpha}{L}(g)) = \ell^{\alpha}(g) = \left| G \right| \delta_{g, e}.
  \end{equation}
\end{corollary}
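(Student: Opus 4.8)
The plan is to read off both halves of the statement from the two results immediately preceding it. First I would observe that the claim has an existence and a uniqueness part: the left regular $\alpha$-representation $\tw{\alpha}{L}$ does satisfy $\operatorname{tr}(\tw{\alpha}{L}(g)) = \ell^{\alpha}(g) = |G|\,\delta_{g,e}$ — this is exactly the content of the Lemma and Definition just above, whose short proof (all diagonal entries of $\tw{\alpha}{L}(g)$ vanish when $g \neq e$, and $\alpha$ normalized gives $\ell^{\alpha}(e) = |G|$) already establishes it. So nothing further is needed for existence.

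For uniqueness, I would let $(\mathcal{H}, \rho) \in \Rep^{\alpha}_{\mathbb{C}}G$ be any $\alpha$-projective representation with $\operatorname{tr}(\rho(g)) = |G|\,\delta_{g,e}$ for all $g \in G$. Then its character $\chi_{\rho}$ coincides as a function on $G$ with $\ell^{\alpha} = \chi_{\tw{\alpha}{L}}$. Since we have assumed (cf.\ the remark on unitary 2-cocycles) that $\alpha$ is unitary and $G$ is finite, $\rho$ is linearly equivalent to a unitary $\alpha$-representation without changing its character, so Corollary~\ref{cor:isomorphic-characters} — two unitary $\alpha$-projective representations over $\mathbb{C}$ are isomorphic if and only if they have the same character — applies and yields $\rho \cong \tw{\alpha}{L}$. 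This proves that the property $\operatorname{tr}(\rho(-)) = \ell^{\alpha}$ pins down $\tw{\alpha}{L}$ up to isomorphism.

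There is essentially no obstacle here; the only point requiring a word of care is the reduction to unitary representations so that Corollary~\ref{cor:isomorphic-characters} is applicable, and this is handled by the standard fact that over $\mathbb{C}$ a finite group with unitary cocycle has all its projective representations equivalent to unitary ones. Everything else is a direct invocation of the character-theoretic results developed above.
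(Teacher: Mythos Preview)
Your proposal is correct and matches the paper's intended argument: the corollary is stated without proof precisely because it follows immediately from the preceding lemma (which computes \(\ell^{\alpha}\)) together with Corollary~\ref{cor:isomorphic-characters}. Your remark about passing to a unitary representative before invoking Corollary~\ref{cor:isomorphic-characters} is a reasonable bit of extra care that the paper leaves implicit.
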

\begin{corollary}\label{cor:regular-decomp}
  Denote by \(\rho_1, \ldots, \rho_r\) the distinct,
  i.e.\ pairwise linearly inequivalent,
  irreducible projective representations of \(G\) with 2-cocycle \(\alpha\),
  then we have the decomposition
  \begin{equation}
    \label{eq:89}
    \tw{\alpha}{L} \cong
    \bigoplus_{i=1}^r \mathbbm{1}_{|\rho_i|}\otimes \rho_i.
  \end{equation}
\end{corollary}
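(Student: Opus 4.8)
The plan is to read off the multiplicities in the irreducible decomposition of $\tw{\alpha}{L}$ by a character computation. By Maschke's theorem (Theorem~\ref{thm:maschke}) we may write $\tw{\alpha}{L}\cong\bigoplus_{i=1}^{r}m_i\,\rho_i$ with uniquely determined nonnegative integers $m_i$, since the $\rho_i$ exhaust the linearly inequivalent irreducible $\alpha$-representations of $G$. It remains to show $m_i=|\rho_i|$ for every $i$.

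First I would invoke Corollary~\ref{cor:character-multiplicity}, which identifies $m_i$ with the scalar product $\braket{\chi_{\rho_i},\ell^{\alpha}}$ of the character of $\rho_i$ with the character $\ell^{\alpha}=\operatorname{tr}\circ\,\tw{\alpha}{L}$ of the left regular $\alpha$-representation. Recalling that $\ell^{\alpha}(g)=|G|\,\delta_{g,e}$ — which holds because $\alpha$ is normalized — the sum defining the scalar product of Lemma~\ref{lem:orhtogonal-characters} collapses to a single term:
\[
  m_i=\frac{1}{|G|}\sum_{g\in G}\overline{\chi_{\rho_i}(g)}\,\ell^{\alpha}(g)
     =\overline{\chi_{\rho_i}(e)}=\chi_{\rho_i}(e).
\]
Since $\alpha$ is normalized, $\rho_i(g)\rho_i(e)=\alpha(g,e)\rho_i(g)=\rho_i(g)$ together with invertibility of $\rho_i(g)$ forces $\rho_i(e)=\mathbbm{1}$, so $\chi_{\rho_i}(e)=|\rho_i|$ and hence $m_i=|\rho_i|$. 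Substituting back gives $\tw{\alpha}{L}\cong\bigoplus_{i=1}^{r}|\rho_i|\,\rho_i$, and writing the $|\rho_i|$-fold multiple of $\rho_i$ as $\mathbbm{1}_{|\rho_i|}\otimes\rho_i$ yields the claimed decomposition (uniqueness of the decomposition, or equivalently Corollary~\ref{cor:isomorphic-characters}, guarantees that no ambiguity remains). One could alternatively bypass the multiplicity count and use Corollary~\ref{cor:tr-reg-rep}: since $\tw{\alpha}{L}$ is the unique $\alpha$-representation with character $g\mapsto|G|\,\delta_{g,e}$, it would suffice to check $\sum_i|\rho_i|\,\chi_{\rho_i}(g)=|G|\,\delta_{g,e}$, the column-orthogonality relation; but this needs the $\alpha$-character table to be square, whereas the route via Corollary~\ref{cor:character-multiplicity} uses only facts already recorded above, so I would present the latter.

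There is essentially no obstacle here — this is the standard evaluation of the regular character — and the only point deserving care is the observation that the normalization of $\alpha$ makes $\rho_i(e)$ the identity operator, so that the trace $\chi_{\rho_i}(e)$ genuinely equals the dimension $|\rho_i|$ rather than merely recording the rank of an idempotent up to a scalar.
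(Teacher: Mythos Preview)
Your proof is correct and follows essentially the same approach as the paper: compute the multiplicity of each $\rho_i$ in $\tw{\alpha}{L}$ via Corollary~\ref{cor:character-multiplicity} as $\braket{\chi_{\rho_i},\ell^{\alpha}}=|\rho_i|$, using $\ell^{\alpha}(g)=|G|\,\delta_{g,e}$. The paper's proof is the one-line version of exactly this computation.
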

\begin{proof}
  Since
  \(\braket{\ell^{\alpha}, \operatorname{tr}\circ \rho_i} = |\rho_i|\),
  the result follows from Corollary~\ref{cor:character-multiplicity}.
\end{proof}
\begin{lemma}\label{lem:a15}
  Assume \(\alpha \in \operatorname{Z}^2(G_1 \times G_2, \operatorname{U}(1))\)
  decomposes as \(\alpha = \alpha_1\alpha_2\), i.e.\ 
  \begin{equation}
    \alpha((g_1, g_2), (h_1, h_2)) = \alpha_1(g_1, h_1)\alpha_2(g_2, h_2),
  \end{equation}
  for all \(g_i,h_i \in G_i\),
  where
  \(\alpha_i\in \operatorname{H}^2(G_i, \operatorname{U}(1))\),
  then \(\tw{\alpha}{L}\) is isomorphic to \(\tw{\alpha_1}{L}\otimes \tw{\alpha_2}{L}\).
\end{lemma}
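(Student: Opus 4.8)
The plan is to exhibit a concrete module isomorphism rather than argue abstractly. Identify the underlying vector spaces by the canonical linear isomorphism $\mathbb{C}(G_1\times G_2)\xrightarrow{\sim}\mathbb{C}G_1\otimes\mathbb{C}G_2$ sending $\ket{(g_1,g_2)}$ to $\ket{g_1}\otimes\ket{g_2}$, transport $\tw{\alpha}{L}$ along it, and check that the result is exactly $\tw{\alpha_1}{L}\otimes\tw{\alpha_2}{L}$, where the latter is understood to act by $(g_1,g_2)\mapsto\tw{\alpha_1}{L}(g_1)\otimes\tw{\alpha_2}{L}(g_2)$.

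First I would record the auxiliary fact that the outer tensor product of an $\alpha_1$-representation $\rho_1$ of $G_1$ and an $\alpha_2$-representation $\rho_2$ of $G_2$ is an $\alpha_1\alpha_2$-representation of $G_1\times G_2$: the composition law $\bigl(\rho_1(g_1)\otimes\rho_2(g_2)\bigr)\bigl(\rho_1(h_1)\otimes\rho_2(h_2)\bigr)=\alpha_1(g_1,h_1)\alpha_2(g_2,h_2)\,\rho_1(g_1h_1)\otimes\rho_2(g_2h_2)$ produces precisely the scalar $\alpha((g_1,g_2),(h_1,h_2))$ under the hypothesis $\alpha=\alpha_1\alpha_2$. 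In particular $\tw{\alpha_1}{L}\otimes\tw{\alpha_2}{L}$ is a legitimate object of $\Rep^{\alpha}_{\mathbb{C}}(G_1\times G_2)$, so that the comparison with $\tw{\alpha}{L}$ is between representations carrying the same cocycle.

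The core step is then a one-line computation. Applying $\tw{\alpha_1}{L}(g_1)\otimes\tw{\alpha_2}{L}(g_2)$ to a basis vector $\ket{h_1}\otimes\ket{h_2}$ and invoking Definition~\ref{def:regular} twice gives $\alpha_1(g_1,h_1)\alpha_2(g_2,h_2)\,\ket{g_1h_1}\otimes\ket{g_2h_2}$, which under the identification above equals $\alpha\bigl((g_1,g_2),(h_1,h_2)\bigr)\ket{(g_1h_1,g_2h_2)}=\tw{\alpha}{L}\bigl((g_1,g_2)\bigr)\ket{(h_1,h_2)}$. Hence the canonical basis identification intertwines the two representations, which is the claim.

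If an invariant argument is preferred, the same conclusion follows from the character criterion: $\chi_{\tw{\alpha_1}{L}\otimes\tw{\alpha_2}{L}}(g_1,g_2)=\ell^{\alpha_1}(g_1)\,\ell^{\alpha_2}(g_2)=|G_1|\,|G_2|\,\delta_{g_1,e}\,\delta_{g_2,e}=|G_1\times G_2|\,\delta_{(g_1,g_2),e}$, which is the character $\ell^{\alpha}$ that pins down $\tw{\alpha}{L}$ up to isomorphism by Corollary~\ref{cor:tr-reg-rep} (after passing to unitary representatives of $[\alpha_1]$ and $[\alpha_2]$, which we may always do). There is no genuine obstacle; the only point requiring care is the bookkeeping of conventions — chiefly that the outer tensor product of projective representations multiplies their cocycles, so that $\tw{\alpha_1}{L}\otimes\tw{\alpha_2}{L}$ really does carry the cocycle $\alpha=\alpha_1\alpha_2$ and the statement is not comparing representations with mismatched cocycles.
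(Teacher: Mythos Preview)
Your proof is correct. Your primary argument---exhibiting the canonical linear isomorphism $\mathbb{C}(G_1\times G_2)\to\mathbb{C}G_1\otimes\mathbb{C}G_2$ and checking directly on basis vectors that it intertwines $\tw{\alpha}{L}$ with $\tw{\alpha_1}{L}\otimes\tw{\alpha_2}{L}$---is more elementary than the paper's, which instead uses exactly the character computation you sketch as your ``invariant argument'': compute $\operatorname{tr}\bigl(\tw{\alpha_1}{L}(g_1)\otimes\tw{\alpha_2}{L}(g_2)\bigr)=|G_1\times G_2|\,\delta_{(g_1,g_2),e}$ and invoke Corollary~\ref{cor:tr-reg-rep}. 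Your direct approach avoids appealing to the uniqueness-by-character machinery and in fact yields a stronger conclusion (an explicit isomorphism, not just existence of one), at the cost of a line of bookkeeping; the paper's route is slightly slicker once Corollary~\ref{cor:tr-reg-rep} is already in hand, and fits the surrounding narrative where regular representations are repeatedly identified via their characters.
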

\begin{proof}
  By assumption,
  \(\tw{\alpha_1}{L}\otimes \tw{\alpha_2}{L}\) is a representation of \(G\) with 2-cocycle \(\alpha\)
  and according to Corollary~\ref{cor:tr-reg-rep}
  the regular \(\alpha\)-representation is (up to isomorphism) uniquely defined by its character,
  so we compute
  \pagebreak
  \begin{align*}
    \operatorname{tr}\big(\tw{\alpha_1}{L}(g_1) \otimes \tw{\alpha_2}{L}(g_2)\big)
    &= \operatorname{tr}\tw{\alpha_1}{L}(g_1) \operatorname{tr}\tw{\alpha_2}{L}(g_2)\\
    &= \ell^{\alpha_1}(g_1) \ell^{\alpha_2}(g_2)\\
    &= \left|G_1\right| \delta_{g_1,e} \left|G_2\right| \delta_{g_2,e}\\
    &= \left|G_1\times G_2\right| \delta_{(g_1,g_2), e} \\
    &= \ell^{\alpha}(g_1, g_2). \qedhere
  \end{align*}
\end{proof}
\begin{lemma}\label{lem:a16}
  The left regular \(\alpha\)-representation is invariant under
  taking tensor products with linear representations,
  i.e.\ \(-\otimes L^{\alpha} \sim L^{\alpha}\),
  and is essentially the only \(\alpha\)-representation with this property.
  Formulated more precisely, for any \(\lambda \in \Rep_{\mathbb{C}}^{\alpha}G\),
  satisfying
  \begin{equation}\label{eq:96}
    \forall \rho \in \Rep G\colon
    \rho \otimes \lambda \cong \mathbbm{1}_{|\rho|}\otimes \lambda,
  \end{equation}
  there exist \(m, n\in \mathbb{N}\) such that
  \begin{equation}
    \mathbbm{1}_m\otimes \lambda \cong \mathbbm{1}_n\otimes L^{\alpha}.
  \end{equation}
\end{lemma}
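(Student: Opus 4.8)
The plan is to reduce both halves of the lemma to short character computations, using Corollary~\ref{cor:isomorphic-characters} (two $\alpha$-representations with the same character are isomorphic) together with the explicit value $\ell^\alpha(g)=\operatorname{tr}\tw{\alpha}{L}(g)=|G|\,\delta_{g,e}$ of the regular character. The only structural point to keep in mind is that tensoring an $\alpha$-representation with a \emph{linear} representation leaves the 2-cocycle unchanged, so all the representations appearing below genuinely live in the same category $\Rep^{\alpha}_{\mathbb{C}}G$, which is what makes the character criterion applicable.

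For the invariance statement $-\otimes L^{\alpha}\sim L^{\alpha}$, I would fix any linear $\rho\in\Rep G$ and compute the character of $\rho\otimes L^{\alpha}$: it equals $\chi_{\rho}(g)\,\ell^{\alpha}(g)=\chi_{\rho}(g)\,|G|\,\delta_{g,e}=\chi_{\rho}(e)\,|G|\,\delta_{g,e}=|\rho|\,\ell^{\alpha}(g)$, which is exactly the character of $\mathbbm{1}_{|\rho|}\otimes L^{\alpha}$. Corollary~\ref{cor:isomorphic-characters} then gives $\rho\otimes L^{\alpha}\cong\mathbbm{1}_{|\rho|}\otimes L^{\alpha}$, hence in particular $\rho\otimes L^{\alpha}\sim L^{\alpha}$.

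For the converse (essential uniqueness), suppose $\lambda\in\Rep^{\alpha}_{\mathbb{C}}G$ satisfies \eqref{eq:96}. The idea is to feed a faithful linear representation into the hypothesis and thereby pin down the character of $\lambda$. Concretely, take $\rho$ to be the (linear) left regular representation of $G$: comparing characters in \eqref{eq:96} yields $|G|\,\delta_{g,e}\,\chi_{\lambda}(g)=|G|\,\chi_{\lambda}(g)$ for all $g\in G$, which forces $\chi_{\lambda}(g)=0$ for every $g\neq e$, so $\chi_{\lambda}(g)=(\dim\lambda)\,\delta_{g,e}$. Then $\mathbbm{1}_{|G|}\otimes\lambda$ has character $|G|\,(\dim\lambda)\,\delta_{g,e}=(\dim\lambda)\,\ell^{\alpha}(g)$, which is precisely the character of $\mathbbm{1}_{\dim\lambda}\otimes L^{\alpha}$; since both are $\alpha$-representations, Corollary~\ref{cor:isomorphic-characters} gives $\mathbbm{1}_{|G|}\otimes\lambda\cong\mathbbm{1}_{\dim\lambda}\otimes L^{\alpha}$, i.e.\ the claim with $m=|G|$ and $n=\dim\lambda$; the degenerate case $\lambda=0$ is trivial.

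I do not anticipate a genuine obstacle; the whole argument is driven by the multiplicativity of projective characters under tensoring with linear representations and by the support of $\ell^{\alpha}$ at the identity. The only places requiring a moment of care are the bookkeeping ones flagged above: verifying that $\rho\otimes\lambda$ sits in $\Rep^{\alpha}_{\mathbb{C}}G$ before comparing it with $\mathbbm{1}_{|\rho|}\otimes\lambda$, and choosing a \emph{faithful} linear representation for the converse so that its character separates $e$ from the rest of $G$ (the regular representation does the job; alternatively, the unitary bound $|\chi_{\rho}(g)|\le\dim\rho$, with equality iff $\rho(g)=\mathbbm{1}$, applied to a faithful $\rho$, gives the same conclusion).
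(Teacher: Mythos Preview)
Your proposal is correct and follows essentially the same character-theoretic route as the paper: both arguments establish that the hypothesis forces $\chi_{\lambda}(g)=|\lambda|\,\delta_{g,e}$ and then appeal to the character characterization of the regular $\alpha$-representation. The only tactical difference is that the paper quantifies over \emph{all} unitary $\rho$ (if $\chi_{\lambda}(h)\neq 0$ then dividing gives $\chi_{\rho}(h)=|\rho|$ for every $\rho$, whence $h=e$), while you plug in the single choice $\rho=L$; you even mention the paper's variant yourself as your alternative at the end.
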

\begin{proof}
  Let \(h\in G\) such that \(\operatorname{tr}\lambda(h)\neq 0\).
  From Equation~\eqref{eq:96} we see that \(\operatorname{tr}\rho(h) = |\rho|\),
  i.e.\ \(\rho(h) = 1\) for all unitary \(\rho \in \Rep G\),
  implying \(h = e\).
  We conclude \(\operatorname{tr}\lambda(g) = |\lambda|\, \delta_{g,e}\)
  and, by Corollary~\ref{cor:tr-reg-rep}, the result follows.
\end{proof}
\begin{theorem}[Great Orthogonality Theorem]\label{thm:great-orthogonality}
  Let \((\mathcal{H}_{\rho}, \rho)\) and \((\mathcal{H}_{\sigma}, \sigma)\)
  be equal or inequivalent irreducible \(\alpha\)-projective
  representations of the finite group \(G\). Then
  \begin{equation}
    \label{eq:38}
    \sum_{g\in G}\,
    \mStk[.8cm]{{\raisebox{-14.30594pt}{\includegraphics{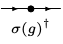}}}
                {\raisebox{-13.06595pt}{\includegraphics{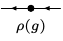}}}}
    = \frac{|G|\delta_{\rho \sigma}}{d_{\rho}}
    \raisebox{-9.08113pt}{\includegraphics{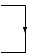}}\,\raisebox{-9.08113pt}{\includegraphics{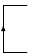}}\,.
  \end{equation}
\end{theorem}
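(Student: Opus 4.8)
The plan is to follow the classical proof of the Great Orthogonality Theorem essentially verbatim, the only modification being a careful bookkeeping of the 2-cocycle $\alpha$. The structural inputs I need are already available in the excerpt: by Maschke's theorem (Theorem~\ref{thm:maschke}) the twisted group algebra $\mathbb{C}^{\alpha}G$ is semisimple, so Schur's lemma holds for its simple modules — over $\mathbb{C}$, an intertwiner of two irreducible $\alpha$-representations vanishes if they are linearly inequivalent and is a scalar multiple of the identity if they coincide. Everything else is the usual averaging trick.

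Concretely, I would first fix an arbitrary linear map $M\colon\mathcal{H}_{\sigma}\to\mathcal{H}_{\rho}$ and form the average $\widehat{M}\coloneq\tfrac{1}{|G|}\sum_{g\in G}\rho(g)\,M\,\sigma(g)^{-1}$. The key computation is to verify that $\widehat{M}$ is a $\mathbb{C}^{\alpha}G$-module homomorphism $\mathcal{H}_{\sigma}\to\mathcal{H}_{\rho}$: expanding $\rho(h)\widehat{M}\sigma(h)^{-1}$ and using $\rho(h)\rho(g)=\alpha(h,g)\rho(hg)$ together with $\sigma(g)^{-1}\sigma(h)^{-1}=(\sigma(h)\sigma(g))^{-1}=\alpha(h,g)^{-1}\sigma(hg)^{-1}$, the two factors of $\alpha(h,g)$ cancel (this is exactly where one uses that $\rho$ and $\sigma$ carry the \emph{same} cocycle), and reindexing the sum gives $\rho(h)\widehat{M}=\widehat{M}\sigma(h)$ for all $h\in G$; here one also uses that $\alpha$ is normalized, so $\rho(e)=\mathbbm{1}$. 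Then Schur's lemma splits the argument in two: if $\rho\not\cong\sigma$ it forces $\widehat{M}=0$ for every $M$; if $\rho=\sigma$ it gives $\widehat{M}=c_M\,\mathbbm{1}$, and taking the trace (using cyclicity and $\operatorname{tr}(\rho(g)M\rho(g)^{-1})=\operatorname{tr}M$) identifies $c_M=\operatorname{tr}(M)/d_{\rho}$. Specializing $M$ to the matrix units $E_{ab}$ and reading off components yields $\sum_{g}\rho(g)_{ia}(\sigma(g)^{-1})_{bj}=\tfrac{|G|\,\delta_{\rho\sigma}}{d_{\rho}}\,\delta_{ij}\,\delta_{ab}$; finally, unitarity of $\sigma$ rewrites $(\sigma(g)^{-1})_{bj}$ as $\overline{\sigma(g)_{jb}}$, which is precisely Equation~\eqref{eq:38} written in components, with the right-hand ``cup--cap'' tensor having the index pattern $\delta_{ij}\delta_{ab}$.

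I expect no serious obstacle: the only point needing genuine care is the cocycle cancellation in the intertwining check above, and the small translation between $\sigma(g)^{-1}$, $\sigma(g)^{\dagger}$ and $\overline{\sigma(g)}$ via unitarity; once those are in place the diagrammatic identity is just the component identity repackaged graphically. As a cross-check (and an alternative route), since $\mathbb{C}^{\alpha}G$ is an ordinary finite-dimensional semisimple $\mathbb{C}$-algebra, one could also quote the orthogonality of matrix coefficients from the projective character theory references already cited, specialized to its Artin--Wedderburn decomposition.
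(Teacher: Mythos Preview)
Your proposal is correct and follows essentially the same route as the paper: define an averaged map $\sum_g \rho(g)\,X\,\sigma(g)^{\dag}$ (the paper uses $\sigma(g)^{\dag}$ directly rather than $\sigma(g)^{-1}$ plus unitarity at the end), verify via the cocycle identities that it intertwines $\rho$ and $\sigma$, and then apply Schur's lemma in the two cases. The only cosmetic differences are the normalization and where unitarity is invoked.
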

\begin{proof}
  Define
  \begin{equation}
    \label{eq:36}
    M \coloneq \sum_{g\in G} \rho(g) X \sigma(g)^{\dag}
  \end{equation}
  for \(X\) an arbitrary linear map from \(\mathcal{H}_{\sigma}\) to \(\mathcal{H}_{\rho}\).
  The calculation
  \begin{align*}
    \rho(h)M
    &= \sum_{g\in G} \alpha(h,g) \rho(hg)X\sigma(g)^{\dag}\\
    &= \sum_{g\in G} \alpha(h,h^{-1}g) \rho(g)X\sigma(h^{-1}g)^{\dag}\\
    &= \sum_{g\in G} \alpha(h,h^{-1}) \rho(g)X\sigma(g)^{\dag}\sigma(h^{-1})^{\dag}\\
    &= M\sigma(h)
  \end{align*}
  reveals that \(M\) is an intertwiner between the two representations.
  If they are inequivalent then, by Schur's Lemma, \(M\) must vanish, which implies for any choice of basis
  \begin{equation}
    \label{eq:74}
    \sum_{g\in G} \rho_{ij}(g)\, \overline{\sigma}_{kl}(g) = 0
  \end{equation}
  for all \(i,j,k,l\).
  If the two representations are equal then
  \begin{equation}
    \label{eq:77}
    \sum_{g\in G} \rho(g) X \rho(g)^{\dag} = \operatorname{tr}X\frac{|G|}{d_{\rho}}\mathbbm{1}_{\rho},
  \end{equation}
  which, after fixing a basis and choosing \(X = \delta_{ij}\), is what we had to show.
\end{proof}
\begin{lemma}\label{lem:g-inj-proj}
  Let \(\lambda\) be a representation of the finite group \(G\),
  whose decomposition into irreducibles contains only characters.
  In particular this means that the representation is linear.
  If this decomposition has no multiplicities greater than one, then
  \begin{equation}\label{eq:28}
    \sum_{g\in G}
    \mStk[.7cm]{{\raisebox{-0pt}{\includegraphics{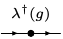}}}
                {\raisebox{-13.06595pt}{\includegraphics{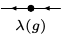}}}}
    = \sum_{g\in G}
      \raisebox{-8.58342pt}{\includegraphics{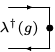}}\;
      \raisebox{-8.58342pt}{\includegraphics{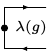}}.
  \end{equation}
\end{lemma}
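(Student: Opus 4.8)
The plan is to establish the identity by evaluating both sides in a basis adapted to the representation $\lambda$; this reduces the statement to Schur orthogonality of irreducible characters, which is already available as the $\alpha = 1$ case of the Great Orthogonality Theorem~\ref{thm:great-orthogonality} (equivalently Lemma~\ref{lem:orhtogonal-characters}).

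First I would use the hypothesis to fix a convenient basis. Because every irreducible constituent of $\lambda$ is a one-dimensional character, $\lambda$ is a linear representation whose operators are simultaneously diagonalizable, so I choose a basis $\ket{1}, \dots, \ket{n}$ of its carrier space, $n = \dim \lambda$, in which $\lambda(g) = \sum_{a=1}^{n} \chi_a(g)\,\dyad{a}{a}$ with $\chi_1, \dots, \chi_n \in \widehat{G}$ the irreducible constituents of $\lambda$. The assumption that no multiplicity exceeds one is precisely the statement that these $\chi_a$ are pairwise distinct.

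Next I would compute the left-hand side in this basis. The $(a,b;c,d)$ component of $\lambda(g) \otimes \overline{\lambda(g)}$ is $\chi_a(g)\,\overline{\chi_c(g)}\,\delta_{ab}\,\delta_{cd}$, so summing over $g$ and invoking Schur orthogonality, $\sum_{g \in G} \chi_a(g)\,\overline{\chi_c(g)} = |G|\,\delta_{ac}$, gives
\begin{equation}
  \sum_{g \in G} \lambda(g) \otimes \overline{\lambda(g)} = |G| \sum_{a=1}^{n} \dyad{a}{a} \otimes \dyad{a}{a},
\end{equation}
i.e.\ $|G|$ times the ``diagonal'' copy tensor on the four external legs. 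It is exactly here that distinctness of the $\chi_a$ is used: without it the character inner product would not collapse to a Kronecker symbol and off-diagonal blocks ($a \neq c$ with $\chi_a = \chi_c$) would survive. I would then check that the right-hand diagram, expanded in the same basis, also reduces to $|G| \sum_a \dyad{a}{a} \otimes \dyad{a}{a}$; should a residual sum over $G$ remain there, one applies Schur orthogonality to the $\chi_a$ once more. Since the claimed equality is an identity of tensors with fixed external spaces, agreement of all components in a single basis suffices to conclude.

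I do not expect a genuine obstacle --- the content is Schur orthogonality, which is already proved --- so the actual work is bookkeeping: confirming that the contraction drawn on the right-hand side is the correct reshaping of $|G| \sum_a \dyad{a}{a}^{\otimes 2}$, and verifying that both hypotheses are genuinely invoked. The hypothesis ``only characters'' is what lets me diagonalize $\lambda$, while ``no multiplicities'' is what forces the off-diagonal blocks to vanish, so that the resulting super-operator is $|G|$ times the projector onto diagonal matrices rather than onto some higher-dimensional isotypic component, which would break the purely diagonal form of the right-hand side.
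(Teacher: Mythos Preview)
Your proposal is correct and follows essentially the same route as the paper: diagonalize $\lambda$ so that $\lambda(g)_{ij}=\delta_{ij}\chi_i(g)$ with pairwise distinct characters, then apply Schur orthogonality once to each side to see that both reduce to the tensor with components $|G|\,\delta_{ij}\delta_{jk}\delta_{kl}$. The only cosmetic difference is that the paper writes out the index computation for both sides explicitly rather than packaging the result as $|G|\sum_a\dyad{a}{a}\otimes\dyad{a}{a}$.
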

\begin{proof}
  Without loss of generality we can assume that \(\lambda\) acts as a diagonal \(m\times m\) matrix,
  i.e.\ \(\lambda_{ij}(g) = \delta_{ij}\, \chi_i(g)\),
  where \(\chi_1,\ldots, \chi_m\) is a list of mutually distinct characters of \(G\).
  To show the statement we simply compare the components of the two tensors:
  \begin{align*}
    \sum_{g\in G}\,
    \mStk[.7cm]{{\scriptstyle j} {\scriptstyle l}}\,
    \mStk[.7cm]{{\raisebox{-0pt}{\includegraphics{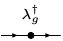}}}
                {\raisebox{-12.89926pt}{\includegraphics{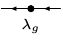}}}}\,
    \mStk[.7cm]{{\scriptstyle i} {\scriptstyle k}}
    &= \sum_{g\in G} (\lambda^{\dag}_g)_{ij}(\lambda_g)_{lk}\\[-7pt]
    &= \sum_{g\in G} \delta_{ij}\, \overline{\chi}_j(g)\, \delta_{lk} \,\chi_k(g)\\
    &= \delta_{ij}\, \delta_{jk}\, \delta_{lk}
     = \delta_{lj}\, \delta_{jk}\, \delta_{ik}\\
    &= \sum_{g\in G} \delta_{lj}\, \overline{\chi}_j(g)\, \delta_{ik} \,\chi_k(g)\\
    &= \sum_{g\in G} (\lambda^{\dag}_g)_{lj}(\lambda_g)_{ik}\\[-3pt]
    &= \sum_{g\in G}\,
    \mStk[.7cm]{{\scriptstyle j} {\scriptstyle l}}\hspace{-.1cm}
    \raisebox{-8.58342pt}{\includegraphics{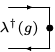}}\;
    \raisebox{-8.58342pt}{\includegraphics{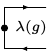}}
    \mStk[.7cm]{{\scriptstyle i} {\scriptstyle k}},
  \end{align*}
  where we used the Schur orthogonality relations, cf.\ Lemma \ref{lem:orhtogonal-characters},
  for irreducible characters twice.
  The expressions involving the Kronecker deltas are equal,
  since both of them enforce all indices to be equal.
\end{proof}
\begin{remark}
  In particular, the regular representations of abelian groups
  satisfy the assumptions of the previous lemma.
\end{remark}
\subsection{Abelian groups}
\label{sec:proj-rep-abelian-groups}
In this subsection \(G\) will always be a finite abelian group.
\begin{definition}[Pontryagin dual]
  The \emph{Pontryagin dual} \(\widehat{G}\) of a finite abelian group \(G\)
  is the group of homomorphisms from \(G\) to the unitary group \(\operatorname{U}(1)\),
  in symbols \(\widehat{G} \coloneq \operatorname{Hom}(G, \operatorname{U}(1))\).
\end{definition}
\begin{remark}
  There is always a (non-canonical) isomorphism \(G\cong \widehat{G}\)
  and for any subgroup \(H \leq G\) there is a canonical restriction map
  \(\Res ^G_H \colon \widehat{G}\rightarrow \widehat{H}\).
\end{remark}
\begin{definition}[Bicharacter]
  A \emph{bicharacter} on a group \(G\) is a map
  \(G\times G \rightarrow \operatorname{U}(1)\),
  which restricts to group homomorphisms in both arguments.
  A bicharacter \(\omega\) is called \emph{symmetric} if \(\omega(g,h) = \omega(h,g)\)
  and is called \emph{antisymmetric} if \(\omega(g,h) = \omega(h,g)^{-1}\).
\end{definition}
\begin{lemma}\label{lem:bicharacter-com}
  Let \(\omega\) be a bicharacter on \(G\) and
  construct the linear representation \(\lambda\) on \(\mathbb{C}G\) by
  \begin{equation}
    \label{eq:14}
    \lambda(g) \coloneq \sum_{h\in G} \omega(g, h) \dyad{h}{h}.
  \end{equation}
  Up to the bicharacter \(\omega\),
  the representation \(\lambda\) commutes with any projective left regular representation of \(G\):
  \begin{equation}
    \label{eq:44}
    \lambda(g) \tw{\alpha}{L}(h) = \omega(g,h) \tw{\alpha}{L}(h)\lambda(g)
  \end{equation}
\end{lemma}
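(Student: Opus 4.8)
The statement to prove is Lemma~\ref{lem:bicharacter-com}: with $\lambda(g)=\sum_{h\in G}\omega(g,h)\dyad{h}{h}$ and $\tw{\alpha}{L}(h)=\sum_{k\in G}\alpha(h,k)\dyad{hk}{k}$, one has $\lambda(g)\tw{\alpha}{L}(h)=\omega(g,h)\tw{\alpha}{L}(h)\lambda(g)$. The plan is a direct computation on the standard basis $\{\ket{k}\}_{k\in G}$ of $\mathbb{C}G$, exploiting that $\lambda(g)$ is diagonal in this basis and $\tw{\alpha}{L}(h)$ is a (weighted) permutation matrix, so both composites have a single nonzero term when applied to a basis vector.

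First I would compute the left-hand side on $\ket{k}$: applying $\tw{\alpha}{L}(h)$ gives $\alpha(h,k)\ket{hk}$, and then $\lambda(g)$ reads off the diagonal entry at $hk$, yielding
\begin{equation*}
  \lambda(g)\tw{\alpha}{L}(h)\ket{k}=\alpha(h,k)\,\omega(g,hk)\,\ket{hk}.
\end{equation*}
Next I would compute the right-hand side on $\ket{k}$: applying $\lambda(g)$ gives $\omega(g,k)\ket{k}$, then $\tw{\alpha}{L}(h)$ gives $\omega(g,k)\alpha(h,k)\ket{hk}$, and multiplying by the scalar $\omega(g,h)$ gives
\begin{equation*}
  \omega(g,h)\,\tw{\alpha}{L}(h)\lambda(g)\ket{k}=\omega(g,h)\,\omega(g,k)\,\alpha(h,k)\,\ket{hk}.
\end{equation*}
Comparing the two, the identity reduces to the single scalar equation $\omega(g,hk)=\omega(g,h)\,\omega(g,k)$, which is exactly the statement that $\omega$ is a homomorphism in its second argument — part of the definition of a bicharacter. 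Since $\{\ket{k}\}$ is a basis, this proves the operator identity.

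There is essentially no obstacle here; the only thing to be careful about is bookkeeping the order of composition (the convention that $\tw{\alpha}{L}(h)$ sends $\ket{k}\mapsto\alpha(h,k)\ket{hk}$, read off from Equation~\eqref{eq:15}) and noting that we only need multiplicativity of $\omega$ in the \emph{second} slot, not symmetry or antisymmetry. One could alternatively phrase the argument index-free by observing that conjugating the diagonal operator $\lambda(g)$ by the permutation $k\mapsto hk$ shifts its eigenvalue pattern from $\omega(g,k)$ to $\omega(g,h^{-1}k)$, hence $\tw{\alpha}{L}(h)^{-1}\lambda(g)\tw{\alpha}{L}(h)$ is diagonal with entries $\omega(g,h^{-1}k)=\omega(g,h)^{-1}\omega(g,k)$; rearranging gives \eqref{eq:44}. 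Either presentation is short; I would include the basis-vector computation since it is the most transparent.
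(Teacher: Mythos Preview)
Your proof is correct and is exactly the ``simple computation after inserting Equation~\eqref{eq:15}'' that the paper indicates as its proof; the paper gives no further detail, so your basis-vector calculation is a faithful (and more explicit) realization of the intended argument.
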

\begin{proof}
  This is a simple computation after inserting Equation~\eqref{eq:15}.
\end{proof}
\begin{remark}
  It turns out that the group of antisymmetric bicharacters of G is isomorphic to 
  \(\operatorname{H}^2(G,\operatorname{U}(1))\).
\end{remark}
Given a 2-cocycle \(\alpha\in \operatorname{Z}^2(G, \operatorname{U}(1))\)
we can construct an antisymmetric bicharacter
\(c^{\alpha}\colon G\times G\rightarrow \operatorname{U}(1)\)
by defining
for all \(g,h\in G\)
\begin{equation}
  c^{\alpha}(g, h) \coloneq \frac{\alpha(g,h)}{\alpha(h,g)}.
\end{equation}
Since \(c^{\alpha}\) is clearly antisymmetric we only have to check
that it is a group homomorphism in the first argument.
So let \(g_1, g_2, h \in G\) and compute
\begin{align*}
  c^{\alpha}(g_1 g_2,h)
  &= \frac{\alpha(g_1 g_2,h)}{\alpha(h, g_1 g_2)}\\
  &= \frac{\alpha(g_1, g_2 h) \alpha(g_2, h)}{\alpha(g_1, g_2)}
     \frac{\alpha(g_1, g_2)}{\alpha(h g_1, g_2)\alpha(h, g_1)}\\
  &= \frac{\alpha(g_1 h, g_2)\alpha(g_2, h)\alpha(g_1, h)}{
     \alpha(h, g_2)\alpha(h g_1, g_2)\alpha(h, g_1)}\\
  &= \frac{\alpha(g_1, h)}{\alpha(h, g_1)}
     \frac{\alpha(g_2, h)}{\alpha(h, g_2)}\\
  &= c^{\alpha}(g_1, h) c^{\alpha}(g_2, h).
\end{align*}
\begin{remark}
  For \(\rho\) a projective representation of \(G\) with 2-cocycle
  \(\alpha\in \operatorname{Z}^2(G, \operatorname{U}(1))\)
  and for all \(g,h\in G\)
  we clearly have
  \begin{equation}
    \label{eq:7}
    \rho(h)^{-1}\rho(g)\rho(h) = \frac{\alpha(g,h)}{\alpha(h,g)}\rho(g) = c^{\alpha}(g,h) \rho(g).
  \end{equation}
\end{remark}
\begin{definition}[Slant product]\label{def:slant-product}
  Given \([\alpha]\in \operatorname{H}^2(G, \operatorname{U}(1))\), we can use the antisymmetric bicharacter
  \(c^{\alpha}\) to define the group homomorphism
  \begin{align}
    \imath\alpha \colon
    \eqmakebox[i1]{\(G\)} & \longrightarrow \eqmakebox[i2]{\(\widehat{G}\)} \\\nonumber
    \eqmakebox[i1]{\(g\)} & \longmapsto \eqmakebox[i2]{\(\imath_g\alpha\)} \coloneq c^{\alpha}(g, -).
  \end{align}
  We call \(\imath\alpha\)
  the \emph{slant product} of \([\alpha]\).
\end{definition}
\begin{remark}
  Clearly, \(\imath\alpha\) depends only on the class \([\alpha]\), not the specific representative.
  Sometimes we will write
  \(\imath_{(-)}\alpha\)
  instead of
  \(\imath\alpha\).
\end{remark}
\begin{example}\label{ex:sylvester}
  Every irreducible projective representation of \(\mathbb{Z}_n \times \mathbb{Z}_n\) 
  is linearly equivalent to a representation
  \(\sigma^{\zeta}\colon (b,u) \mapsto X^b Z^u \), where
  \begin{equation}\label{eq:25}
    X \coloneq
    \begin{pmatrix}
      0 & 0 & \cdots & 0 & 1\\
      1 & 0 & \cdots & 0 & 0\\
      0 & 1 & \cdots & 0 & 0\\
      \vdots & \vdots & \ddots & \vdots & \vdots\\
      0 & 0 & \cdots & 1 & 0
    \end{pmatrix},
  \end{equation}
  \begin{equation}\label{eq:26}
    Z \coloneq
    \begin{pmatrix}
      1 & 0 & 0& \cdots & 0\\
      0 & \zeta & 0& \cdots & 0\\
      0 & 0 & \zeta^2& \cdots & 0\\
      \vdots & \vdots & \vdots & \ddots & \vdots\\
      0 & 0 & 0& \cdots & \zeta^{n-1}
    \end{pmatrix}
  \end{equation}
  and \(\zeta\) is an \(n^{\mathrm{th}}\) root of unity.
  The matrices \(X\) and \(Z\) are known as the \emph{Sylvester shift} and \emph{clock} matrices
  and satisfy \(XZ = \zeta ZX\).
  The slant product of this representation is easily computed to be
  \begin{equation}\label{eq:37}
    (\imath_{(b_1, u_1)}\alpha^{\zeta})(b_2, u_2) = \zeta^{u_1b_2 - u_2b_1}
  \end{equation}
  for all \(b_i, u_i \in \mathbb{Z}_n\),
  where \(\alpha^{\zeta}\) denotes the 2-cocyle of~\(\sigma^{\zeta}\).
  Hence, the slant product gives a one-to-one correspondence between
  \(\operatorname{H}^2(\mathbb{Z}_n\times \mathbb{Z}_n, \operatorname{U}(1))\)
  and the set of \(n^{\mathrm{th}}\) roots of unity.
\end{example}
\begin{definition}[Degeneracy]\label{def:degeneracy}
  Mirroring the terminology of the associated bicharacter,
  \(K_{\alpha} \coloneq \operatorname{ker} \imath\alpha \leq G\)
  is called the \emph{degeneracy} of the 2-cocycle \(\alpha\) and
  if \(K_{\alpha}\) is trivial, the 2-cocycle is called \emph{non-degenerate}.
\end{definition}
\begin{remark}
  The representations of the previous example all have non-degenerate 2-cocyles.
\end{remark}
\begin{lemma}\label{lem:10}
  Let \(\rho\) be an irreducible \(\alpha\)-projective representation 
  on the vector space \(V\), then there exists a unique
  \(\lambda_{\rho} \in \widehat{K}_{\alpha}\) such that
  \(\lambda_{\rho}\cdot \operatorname{id}_V =\rho |_{K_{\alpha}} \equiv\Res ^G_{K_{\alpha}} \rho.\)
\end{lemma}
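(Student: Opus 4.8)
The plan is to identify $K_{\alpha}$ as precisely the set of group elements that act by a scalar under any irreducible $\alpha$-representation, and then extract that scalar. First I would recall Equation~\eqref{eq:7}: for all $g,h\in G$ one has $\rho(h)^{-1}\rho(g)\rho(h) = c^{\alpha}(g,h)\,\rho(g)$. If $g\in K_{\alpha}=\operatorname{ker}\imath\alpha$, then by definition $c^{\alpha}(g,h) = (\imath_g\alpha)(h) = 1$ for every $h\in G$, so $\rho(g)$ commutes with every $\rho(h)$; equivalently, $\rho(g)$ is an intertwiner of $\rho$ with itself, i.e.\ $\rho(g)\in\operatorname{End}_{\mathbb{C}^{\alpha}G}(V)$.

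Next I would invoke Schur's Lemma over the algebraically closed field $\mathbb{C}$ (applicable since $V$ is a finite-dimensional simple $\mathbb{C}^{\alpha}G$-module; one may also appeal to semisimplicity, Theorem~\ref{thm:maschke}): $\operatorname{End}_{\mathbb{C}^{\alpha}G}(V) = \mathbb{C}\,\operatorname{id}_V$. Hence for each $g\in K_{\alpha}$ there is a unique scalar $\lambda_{\rho}(g)\in\mathbb{C}^{\times}$ with $\rho(g)=\lambda_{\rho}(g)\operatorname{id}_V$, which already supplies both the existence and the uniqueness of $\lambda_{\rho}$ as a function $K_{\alpha}\to\mathbb{C}^{\times}$. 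Since $\rho$ is unitary (the group being finite), each $\rho(g)$ is unitary, forcing $|\lambda_{\rho}(g)|=1$, so $\lambda_{\rho}$ takes values in $\operatorname{U}(1)$.

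It remains to check that $\lambda_{\rho}$ is a homomorphism. Restricting the defining relation $\rho(g)\rho(h)=\alpha(g,h)\rho(gh)$ to $g,h\in K_{\alpha}$ yields $\lambda_{\rho}(g)\lambda_{\rho}(h)=\alpha(g,h)\lambda_{\rho}(gh)$, so $\lambda_{\rho}$ is a degree-one $\alpha|_{K_{\alpha}}$-projective character. By the very definition of $K_{\alpha}$ the antisymmetric bicharacter $c^{\alpha}$ vanishes identically on $K_{\alpha}\times K_{\alpha}$, and since the group of antisymmetric bicharacters of an abelian group is isomorphic to its $\operatorname{H}^{2}(-,\operatorname{U}(1))$, the class $[\alpha|_{K_{\alpha}}]$ is trivial. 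Writing $\alpha|_{K_{\alpha}} = d\psi$ for some normalized $\psi\colon K_{\alpha}\to\operatorname{U}(1)$, the product $\lambda_{\rho}\psi^{-1}$ is then a genuine character; with the representative of $\alpha$ chosen so that $\alpha|_{K_{\alpha}}\equiv 1$ (which is harmless here) the relation collapses to $\lambda_{\rho}(g)\lambda_{\rho}(h)=\lambda_{\rho}(gh)$, i.e.\ $\lambda_{\rho}\in\widehat{K}_{\alpha}$.

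I expect the only genuine subtlety to be this last step — passing from ``degree-one projective character'' to ``linear character of $K_{\alpha}$'' — which is exactly where the defining property of $K_{\alpha}$ (triviality of the restricted bicharacter, equivalently of $[\alpha|_{K_{\alpha}}]$) enters, and which forces a small remark about the choice of cocycle representative. Everything else is an immediate consequence of the centrality criterion \eqref{eq:7} combined with Schur's Lemma.
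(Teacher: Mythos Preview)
Your proposal is correct and follows essentially the same approach as the paper's proof, which is simply ``By Schur's lemma and Equation~\eqref{eq:7} the result immediately follows.'' You have merely unpacked this into its constituent steps---centrality of $\rho(g)$ for $g\in K_{\alpha}$ via \eqref{eq:7}, then Schur's lemma---and added a careful discussion of why $\lambda_{\rho}$ is a genuine linear character rather than just a projective one, a point the paper's one-line proof leaves implicit.
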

\begin{proof}
  By Schur's lemma and Equation~\eqref{eq:7} the result immediately follows.
\end{proof}
\begin{lemma}\label{lem:11}
  Let \(d\) be the rank of an irreducible \(\alpha\)-representation,
  then
  \(|G| = d^2\left| \operatorname{ker}(\imath \alpha) \right|\).
  In particular, all irreducible \(\alpha\)-representations share the same rank.
\end{lemma}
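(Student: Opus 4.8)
The plan is to isolate a single irreducible $\alpha$-representation, read off its dimension from the Schur orthogonality relations for projective characters, and then note that the resulting formula does not depend on which representation we chose.

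First I would fix an irreducible $\alpha$-projective representation $(V,\rho)$ and set $d\coloneq\dim_{\mathbb C}V$, with character $\chi_\rho=\operatorname{tr}\circ\rho$. The crucial observation is that $\chi_\rho$ is supported on the degeneracy $K_\alpha=\ker\imath\alpha$, cf.\ Definitions~\ref{def:slant-product} and~\ref{def:degeneracy}. Indeed, Equation~\eqref{eq:7} gives $\rho(h)^{-1}\rho(g)\rho(h)=c^{\alpha}(g,h)\,\rho(g)$ for all $g,h\in G$, and taking traces (using cyclicity of the trace and invertibility of $\rho(h)$) yields $\chi_\rho(g)=c^{\alpha}(g,h)\,\chi_\rho(g)$ for every $h\in G$. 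If $g\notin K_\alpha$ then $\imath_g\alpha=c^{\alpha}(g,-)$ is a non-trivial character of $G$, so some $h$ has $c^{\alpha}(g,h)\neq1$, forcing $\chi_\rho(g)=0$. Conversely, for $g\in K_\alpha$, Lemma~\ref{lem:10} (Schur's lemma applied to Equation~\eqref{eq:7}) gives $\rho(g)=\lambda_\rho(g)\,\operatorname{id}_V$ with $\lambda_\rho(g)\in\operatorname{U}(1)$, hence $\lvert\chi_\rho(g)\rvert=d$.

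Next I would combine this with Schur orthogonality. Taking $\alpha$ and $\rho$ unitary (as we may throughout), Lemma~\ref{lem:orhtogonal-characters} gives $\langle\chi_\rho,\chi_\rho\rangle=1$, that is $\sum_{g\in G}\lvert\chi_\rho(g)\rvert^{2}=\lvert G\rvert$. By the previous paragraph the only surviving terms are those with $g\in K_\alpha$, each contributing $d^{2}$, so $\lvert G\rvert=\lvert K_\alpha\rvert\,d^{2}=d^{2}\lvert\ker(\imath\alpha)\rvert$. Since the right-hand side involves only $\lvert G\rvert$ and $\lvert K_\alpha\rvert$, it is independent of the chosen $\rho$; therefore every irreducible $\alpha$-representation has rank $d=\sqrt{\lvert G\rvert/\lvert K_\alpha\rvert}$, which in particular shows this quantity is a perfect square.

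I do not expect a genuine obstacle here: the one point requiring care is the vanishing of $\chi_\rho$ away from $K_\alpha$ (the usual statement that non-$\alpha$-regular elements carry zero character), and Equation~\eqref{eq:7} makes this a one-line consequence. A less efficient alternative would be to compute $\dim_{\mathbb C}Z(\mathbb{C}^{\alpha}G)=\lvert K_\alpha\rvert$ directly — a combination $\sum_g c_g\ket g$ is central iff $c_g\alpha(g,h)=c_g\alpha(h,g)$ for all $g,h$, i.e.\ iff $c_g=0$ whenever $g\notin K_\alpha$ — so that Lemma~\ref{lem:5} yields exactly $\lvert K_\alpha\rvert$ irreducible $\alpha$-representations, and then invoke Corollary~\ref{cor:regular-decomp} for $\lvert G\rvert=\sum_i d_i^{2}$; but concluding that the $d_i$ coincide from this still needs transitivity of the $\widehat G$-action $\rho\mapsto\chi\otimes\rho$ (via surjectivity of $\Res^{G}_{K_\alpha}\colon\widehat G\to\widehat{K_\alpha}$), which is why I prefer the character-theoretic route above.
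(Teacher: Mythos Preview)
Your proof is correct. The paper itself does not prove this lemma; it simply cites Theorem~39(b) in Chapter~6 of \cite{berkovicCharactersFiniteGroups1998}. Your character-theoretic argument---showing $\chi_\rho$ vanishes off $K_\alpha$ via Equation~\eqref{eq:7}, has modulus $d$ on $K_\alpha$ by Lemma~\ref{lem:10} and unitarity, and then reading off $|G|=d^2|K_\alpha|$ from $\langle\chi_\rho,\chi_\rho\rangle=1$---is a clean self-contained proof that the paper does not supply. The alternative route you sketch (computing $\dim Z(\mathbb{C}^\alpha G)=|K_\alpha|$ and invoking Lemma~\ref{lem:5} and Corollary~\ref{cor:regular-decomp}) is also valid, and your observation that it still needs transitivity of the $\widehat G$-action to conclude equality of the $d_i$ is accurate; the direct character computation is indeed the cleaner path.
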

\begin{lemma}\label{lem:12}
  Two irreducible \(\alpha\)-representations \(\rho_1\) and \(\rho_2\)
  are linearly equivalent if and only if
  \(\lambda_{\rho_1} = \lambda_{\rho_2}\),
  where \(\lambda_{\rho_i} \in \widehat{K}_{\alpha}\) is defined in Lemma~\ref{lem:10}.
\end{lemma}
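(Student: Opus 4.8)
The plan is to reduce the statement to a comparison of projective characters, exploiting that for an abelian group the character of every irreducible $\alpha$-representation is supported on the degeneracy $K_{\alpha}$ and is there determined entirely by the scalar $\lambda_{\rho}$ of Lemma~\ref{lem:10}.

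First I would dispatch the easy implication: if $\rho_1 \cong \rho_2$ then their restrictions to $K_{\alpha}$ are linearly equivalent, and since both are scalar by Lemma~\ref{lem:10}, the corresponding scalars agree, i.e.\ $\lambda_{\rho_1} = \lambda_{\rho_2}$.

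For the converse, assume $\lambda_{\rho_1} = \lambda_{\rho_2}$. The key observation is that the character $\chi_{\rho} \coloneq \operatorname{tr}\circ\,\rho$ of any irreducible $\alpha$-representation $\rho$ vanishes off $K_{\alpha}$: by Equation~\eqref{eq:7} we have $\rho(h)^{-1}\rho(g)\rho(h) = c^{\alpha}(g,h)\,\rho(g)$, and taking traces gives $\chi_{\rho}(g) = c^{\alpha}(g,h)\,\chi_{\rho}(g)$ for every $h\in G$; if $g\notin K_{\alpha} = \operatorname{ker}\imath\alpha$ there is an $h$ with $c^{\alpha}(g,h)\neq 1$, forcing $\chi_{\rho}(g)=0$. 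On the other hand, for $g\in K_{\alpha}$ Lemma~\ref{lem:10} gives $\rho(g) = \lambda_{\rho}(g)\,\mathrm{id}$, hence $\chi_{\rho}(g) = d\,\lambda_{\rho}(g)$, where $d$ is the dimension (rank) common to all irreducible $\alpha$-representations by Lemma~\ref{lem:11}. Therefore $\chi_{\rho_1}$ and $\chi_{\rho_2}$ coincide: both vanish outside $K_{\alpha}$ and both equal $d\,\lambda_{\rho_1} = d\,\lambda_{\rho_2}$ on $K_{\alpha}$. By Corollary~\ref{cor:isomorphic-characters}, equal characters imply $\rho_1 \cong \rho_2$, which finishes the proof. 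The only load-bearing ingredients are the vanishing of projective characters off $K_{\alpha}$ (immediate from Equation~\eqref{eq:7}) and the rank formula of Lemma~\ref{lem:11}, so I anticipate no real obstacle here.

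As an alternative, purely enumerative, argument for the converse: the assignment $\rho\mapsto\lambda_{\rho}$ is surjective, since for a fixed irreducible $\alpha$-representation $\rho_0$ and any $\chi\in\widehat{G}$ the twist $\chi\cdot\rho_0$ is again an irreducible $\alpha$-representation with $\lambda_{\chi\cdot\rho_0} = (\Res^{G}_{K_{\alpha}}\chi)\cdot\lambda_{\rho_0}$, and $\Res^{G}_{K_{\alpha}}\colon\widehat{G}\to\widehat{K}_{\alpha}$ is onto (characters of subgroups of finite abelian groups extend); combined with the count $r_{\alpha} = |G|/d^2 = |K_{\alpha}| = |\widehat{K}_{\alpha}|$, which follows from Corollary~\ref{cor:regular-decomp} together with Lemma~\ref{lem:11}, surjectivity of $\rho\mapsto\lambda_{\rho}$ forces it to be a bijection, hence injective, giving again $\lambda_{\rho_1}=\lambda_{\rho_2}\Rightarrow\rho_1\cong\rho_2$.
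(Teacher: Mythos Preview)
Your proof is correct. Both arguments are sound: the character approach uses exactly the ingredients available (Equation~\eqref{eq:7} for vanishing off $K_{\alpha}$, Lemma~\ref{lem:10} for the scalar action on $K_{\alpha}$, Lemma~\ref{lem:11} for the common dimension, Corollary~\ref{cor:isomorphic-characters} to conclude), and the enumerative argument is clean as well, with the count $r_{\alpha}=|K_{\alpha}|$ derived independently from Corollary~\ref{cor:regular-decomp} and Lemma~\ref{lem:11} rather than via Lemma~\ref{lem:13}, so there is no circularity.

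The paper itself does not supply a proof of this lemma; it simply cites Chapter~6 of \cite{berkovicCharactersFiniteGroups1998} (their Lemma~44). Your first argument is essentially the standard textbook proof one finds there, so in that sense it matches what the paper defers to. The second, counting-based argument is a nice self-contained alternative that avoids character orthogonality for the converse direction entirely.
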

Lemmas~\ref{lem:11} and~\ref{lem:12} are stated and proved in e.g.\ 
Chapter 6 of the textbook \cite{berkovicCharactersFiniteGroups1998},
as Theorem 39 (b) and Lemma 44 respectively.
\begin{remark}
  The previous lemmas show that for a given 2-cocycle of a finite abelian group,
  there is a unique irreducible projective representation,
  up to projective equivalence.
  Lemma~\ref{lem:12} exhibits a bijection between irreducible linear representations of \(K_{\alpha}\)
  and irreducible \(\alpha\)-representations of \(G\), up to linear equivalence.
\end{remark}
\begin{lemma}\label{lem:13}
  Let \(r_{\alpha}\) denote
  (as in Definition~\ref{lem:4})
  the number of linearly inequivalent irreducible
  projective representations of \(G\) with 2-cocycle \(\alpha\).
  We have that
  \begin{enumerate}
  \item \(r_{\alpha} = \left| \operatorname{ker} \imath\alpha \right|\) and
  \item \(|\widehat{G}| = r_{\alpha} \left|\operatorname{ker}\Res ^G_{K_{\alpha}}\right|\).
  \end{enumerate}
\end{lemma}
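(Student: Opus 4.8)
The plan is to establish (i) by a dimension count using the decomposition of the regular projective representation, and then to obtain (ii) from (i) together with the surjectivity of the restriction map on Pontryagin duals.

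For (i), I would start from Corollary~\ref{cor:regular-decomp}, which gives $\tw{\alpha}{L}\cong\bigoplus_{i=1}^{r_{\alpha}}\mathbbm{1}_{|\rho_i|}\otimes\rho_i$, where $\rho_1,\dots,\rho_{r_{\alpha}}$ enumerate the linearly inequivalent irreducible $\alpha$-representations of $G$. Comparing dimensions on both sides---the left-hand side acts on $\mathbb{C}G$ and hence has dimension $|G|$---yields $|G|=\sum_{i=1}^{r_{\alpha}}|\rho_i|^2$. By Lemma~\ref{lem:11} all irreducible $\alpha$-representations share a common rank $d$ with $|G|=d^2|\operatorname{ker}\imath\alpha|$, so the sum collapses to $|G|=r_{\alpha}d^2=r_{\alpha}\,|G|/|\operatorname{ker}\imath\alpha|$; cancelling $|G|$ gives $r_{\alpha}=|\operatorname{ker}\imath\alpha|$. (This is, of course, just the bijection between irreducible $\alpha$-representations of $G$ up to linear equivalence and irreducible characters of $K_{\alpha}$ recorded after Lemma~\ref{lem:12}, combined with $|\widehat{K_{\alpha}}|=|K_{\alpha}|$; I would present the dimension count because it is self-contained within the appendix.)

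For (ii), recall $K_{\alpha}=\operatorname{ker}\imath\alpha$ by Definition~\ref{def:degeneracy}, so (i) reads $r_{\alpha}=|K_{\alpha}|=|\widehat{K_{\alpha}}|$, using that $K_{\alpha}$ is finite abelian. It then suffices to show $|\widehat{G}|=|\widehat{K_{\alpha}}|\cdot|\operatorname{ker}\Res^G_{K_{\alpha}}|$, which follows from the first isomorphism theorem applied to $\Res^G_{K_{\alpha}}\colon\widehat{G}\to\widehat{K_{\alpha}}$ once one knows this map is surjective: then $\widehat{G}/\operatorname{ker}\Res^G_{K_{\alpha}}\cong\widehat{K_{\alpha}}$ gives the order relation immediately. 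Surjectivity of $\Res^G_{K_{\alpha}}$ is the statement that every linear character of the subgroup $K_{\alpha}\leq G$ extends to a character of $G$---a standard fact for finite abelian groups, already invoked in the proof of the Main Theorem~\ref{thm:main}, which I would either cite or prove in a line by splitting $G$ as a direct product with a complement of $K_{\alpha}$ (or, avoiding a splitting, by extending across one cyclic factor at a time).

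I do not anticipate a genuine obstacle; the statement is essentially a repackaging of counting facts already present in the appendix. The only points needing care are (a) the surjectivity of $\Res^G_{K_{\alpha}}$, and (b) keeping the identification $\operatorname{ker}\imath\alpha=K_{\alpha}$ straight so that the two displayed equalities are phrased consistently with Lemma~\ref{lem:11} and Definition~\ref{def:degeneracy}.
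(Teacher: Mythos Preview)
Your argument is correct. Part~(ii) matches the paper's proof essentially verbatim: surjectivity of \(\Res^G_{K_{\alpha}}\) on characters of a finite abelian group, first isomorphism theorem, then the identification \(r_{\alpha}=|\widehat{K_{\alpha}}|\).

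For part~(i) you take a different route than the paper's primary one. The paper argues directly that \(|\ker\imath\alpha|=\dim_{\mathbb{C}}\operatorname{Z}(\mathbb{C}^{\alpha}G)\) (an element \(\sum_g c_g\ket{g}\) is central iff \(c_g\alpha(g,h)=c_g\alpha(h,g)\) for all \(h\), i.e.\ iff \(c_g\) is supported on \(K_{\alpha}\)) and then invokes Lemma~\ref{lem:5}. You instead count dimensions in the regular representation via Corollary~\ref{cor:regular-decomp} and feed in the common irreducible degree from Lemma~\ref{lem:11}. Both are short and self-contained within the appendix; the paper's route is slightly more structural (it ties \(r_{\alpha}\) to the center), while yours avoids Lemma~\ref{lem:5} at the cost of leaning on Lemma~\ref{lem:11}, whose proof in the cited textbook in fact already contains the center computation. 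The paper also notes your parenthetical alternative via Lemma~\ref{lem:12} as a second proof of~(i).
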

\pagebreak
\begin{proof}
  ~\begin{enumerate}
  \item Clearly,
    \(\left| \operatorname{ker} \imath\alpha \right| = \operatorname{dim}_{\mathbb{C}} \operatorname{Z}(\mathbb{C}^{\alpha}G)\),
    hence, the result follows from Lemma~\ref{lem:5}.
    Alternatively, this also follows directly from Lemma~\ref{lem:12}.
  \item
    For finite abelian groups,
    linear characters on a subgroup can always be extended to the whole group.
    Hence, \(\Res ^G_{K_{\alpha}}\) is surjective
    and the result follows from Lemma~\ref{lem:12}.\qedhere
  \end{enumerate}
\end{proof}
\begin{corollary}\label{cor:can-iso}
  \(\operatorname{im}\imath\alpha=
  \operatorname{ker}\Res ^{G}_{K_{\alpha}}\).
\end{corollary}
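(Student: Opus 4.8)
The plan is to establish the two inclusions separately: first the easy inclusion $\operatorname{im}\imath\alpha \subseteq \operatorname{ker}\Res^G_{K_\alpha}$ by a direct check, and then to upgrade it to an equality by a cardinality count, since both sides are finite subgroups of $\widehat{G}$.

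For the inclusion, I would take an arbitrary $g \in G$ and consider its image $\imath_g\alpha = c^\alpha(g,-) \in \widehat{G}$ under the slant product. Showing that this character lies in $\operatorname{ker}\Res^G_{K_\alpha}$ amounts to showing that it restricts to the trivial character on $K_\alpha$, i.e.\ that $c^\alpha(g,h) = 1$ for every $h \in K_\alpha$. But $h \in K_\alpha = \operatorname{ker}\imath\alpha$ means precisely that $c^\alpha(h,-) \equiv 1$, hence $c^\alpha(h,g) = 1$; since $c^\alpha$ is antisymmetric, $c^\alpha(g,h) = c^\alpha(h,g)^{-1} = 1$. This antisymmetry step is the only part that needs an argument, and it is immediate.

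For the reverse inclusion I would compare orders. By the first isomorphism theorem, $|\operatorname{im}\imath\alpha| = |G|/|\operatorname{ker}\imath\alpha| = |G|/|K_\alpha|$. Invoking Lemma~\ref{lem:13}, $|K_\alpha| = r_\alpha$ and $|\widehat{G}| = r_\alpha\,|\operatorname{ker}\Res^G_{K_\alpha}|$; combined with the (non-canonical) equality $|G| = |\widehat{G}|$ valid for any finite abelian group, this yields $|\operatorname{im}\imath\alpha| = |\widehat{G}|/r_\alpha = |\operatorname{ker}\Res^G_{K_\alpha}|$. A subgroup contained in another subgroup of the same finite order must coincide with it, so the two sides are equal.

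I do not expect a genuine obstacle here: the statement is essentially bookkeeping on top of Lemma~\ref{lem:13} and the antisymmetry of the bicharacter $c^\alpha$. The only point of care is to use the isomorphism $G \cong \widehat{G}$ purely at the level of cardinalities, making no canonicity claim, since the asserted equality of subgroups of $\widehat{G}$ is itself canonical.
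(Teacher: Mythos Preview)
Your proof is correct and follows essentially the same approach as the paper: the paper also first asserts the inclusion $\operatorname{im}\imath\alpha \leq \operatorname{ker}\Res^G_{K_\alpha}$ (without spelling out the antisymmetry step as you do) and then concludes equality via the identical cardinality chain $|\operatorname{im}\imath\alpha| = |G/\operatorname{ker}\imath\alpha| = |G|/r_\alpha = |\operatorname{ker}\Res^G_{K_\alpha}|$ using Lemma~\ref{lem:13}. Your version is simply more explicit about the inclusion and about the use of $|G| = |\widehat{G}|$.
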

\begin{proof}
  Clearly,
  \(\operatorname{im}\imath\alpha\)
  is a subgroup of
  \(\operatorname{ker}\Res ^{G}_{K_{\alpha}}\)
  and since
  \begin{align}\label{eq:8}
    \left| \operatorname{im} \imath\alpha \right|
    = \left| G/ \operatorname{ker}\imath\alpha \right|
    = \frac{|G|}{r_{\alpha}}
    = \left| \operatorname{ker}\Res ^{G}_{K_{\alpha}} \right|,
  \end{align}
  equality of the two groups follows.
\end{proof}
\begin{lemma}\label{lem:17}
  For any irreducible \(\alpha\)-representation \(\rho\)
  and \(\chi_1, \ldots, \chi_{r_{\alpha}}\in \widehat{G}\) such that
  \(\widehat{G}/ \operatorname{ker} \Res ^G_{K_{\alpha}}
    = \left\{ \left[\chi_1\right], \ldots, \left[\chi_{r_{\alpha}}\right] \right\}\),
  we have
  \begin{equation}\label{eq:9}
    \tw{\alpha}{L}\cong \bigoplus_{i=1}^{r_{\alpha}} \chi_i \mathbbm{1}_{d_{\rho}} \otimes \rho.
  \end{equation}
\end{lemma}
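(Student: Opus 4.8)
The plan is to reduce everything to the character-theoretic characterization of the regular projective representation in Corollary~\ref{cor:tr-reg-rep}. The right-hand side of \eqref{eq:9} is an $\alpha$-projective (unitary) representation of $G$, being a direct sum of the representations $\chi_i\mathbbm{1}_{d_\rho}\otimes\rho$, each of which carries the cocycle $1\cdot\alpha=\alpha$; moreover its dimension is $r_\alpha d_\rho^2=|G|$ by Lemmas~\ref{lem:11} and~\ref{lem:13}. Hence, by Corollary~\ref{cor:tr-reg-rep}, it is enough to show that its character equals $g\mapsto|G|\,\delta_{g,e}$.

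The character of the right-hand side is
\begin{equation*}
  g\longmapsto d_\rho\,\chi_\rho(g)\sum_{i=1}^{r_\alpha}\chi_i(g),
\end{equation*}
since $\operatorname{tr}\!\big(\chi_i(g)\,\mathbbm{1}_{d_\rho}\otimes\rho(g)\big)=d_\rho\,\chi_i(g)\,\chi_\rho(g)$. I would split into cases. At $g=e$ the value is $d_\rho^2 r_\alpha=|G|$. For $g\notin K_\alpha$ there is an $h$ with $c^\alpha(g,h)\neq 1$, so Equation~\eqref{eq:7} together with Schur's lemma forces $\chi_\rho(g)=0$, and the whole expression vanishes. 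The only nontrivial case is $g\in K_\alpha\setminus\{e\}$, where $\chi_\rho(g)=d_\rho\,\lambda_\rho(g)\neq 0$ by Lemma~\ref{lem:10}; here one must establish $\sum_{i=1}^{r_\alpha}\chi_i(g)=0$.

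The key step is the following. For $g\in K_\alpha$ the value $\chi_i(g)$ depends only on the coset $[\chi_i]\in\widehat{G}/\ker\Res^G_{K_\alpha}$, because every element of $\ker\Res^G_{K_\alpha}$ is trivial on $K_\alpha$. The induced map $\widehat{G}/\ker\Res^G_{K_\alpha}\to\widehat{K_\alpha}$ given by $[\chi]\mapsto\Res^G_{K_\alpha}\chi$ is injective by the first isomorphism theorem and surjective because linear characters of a subgroup of a finite abelian group extend (as used already in the proof of Lemma~\ref{lem:13}); since both groups have order $r_\alpha$, it is a bijection. Therefore $\{\chi_i|_{K_\alpha}\}_{i=1}^{r_\alpha}$ is exactly the set $\widehat{K_\alpha}$, each character occurring once, and orthogonality of the characters of the finite abelian group $K_\alpha$ gives $\sum_{i=1}^{r_\alpha}\chi_i(g)=\sum_{\eta\in\widehat{K_\alpha}}\eta(g)=|K_\alpha|\,\delta_{g,e}=0$ for $g\in K_\alpha\setminus\{e\}$. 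Putting the three cases together shows that the character of the right-hand side is $\ell^\alpha$, so Corollary~\ref{cor:tr-reg-rep} yields \eqref{eq:9}; the argument is insensitive to which irreducible $\rho$ is chosen, since the $\rho$-dependence disappears after the case split.

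I do not anticipate a real obstacle: once the lemmas of this appendix are available, the computation is routine. The only point requiring genuine care is the identification $\widehat{G}/\ker\Res^G_{K_\alpha}\cong\widehat{K_\alpha}$ (hence $r_\alpha=|\widehat{K_\alpha}|$) and the associated bookkeeping, together with the trivial check that the right-hand side is an $\alpha$-projective unitary representation so that the uniqueness in Corollary~\ref{cor:tr-reg-rep} applies.
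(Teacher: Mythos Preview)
Your argument is correct. The character computation is sound in all three cases: at $e$ you get $d_\rho^2 r_\alpha=|G|$ from Lemmas~\ref{lem:11} and~\ref{lem:13}(i); for $g\notin K_\alpha$ the vanishing of $\chi_\rho(g)$ follows already from taking the trace of~\eqref{eq:7} (Schur is not really needed, but does no harm); and the identification $\widehat{G}/\ker\Res^G_{K_\alpha}\cong\widehat{K_\alpha}$ via restriction is exactly right and yields $\sum_i\chi_i(g)=0$ on $K_\alpha\setminus\{e\}$.

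The paper proceeds differently and more structurally. Instead of computing the character of the right-hand side, it first invokes Corollary~\ref{cor:regular-decomp}, which already gives $\tw{\alpha}{L}\cong\bigoplus_j\mathbbm{1}_{d_j}\otimes\rho_j$ over a complete list of inequivalent irreducible $\alpha$-representations, and then uses Lemmas~\ref{lem:12} and~\ref{lem:13} to argue that $\chi_1\rho,\ldots,\chi_{r_\alpha}\rho$ \emph{is} such a complete list (pairwise inequivalent because their restrictions to $K_\alpha$ differ, and exhaustive by counting). So the paper reduces the lemma to the already-established regular decomposition plus an enumeration of the irreducibles, whereas you rebuild the character of the regular representation from scratch and appeal to the uniqueness statement in Corollary~\ref{cor:tr-reg-rep}. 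The paper's route is shorter given the surrounding lemmas; yours is more self-contained and makes the role of $K_\alpha$ in the character support explicit, which is a nice byproduct.
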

\begin{proof}
  Lemmas~\ref{lem:12} and~\ref{lem:13}
  show that \(\chi_1\rho, \ldots, \chi_{r_{\alpha}}\rho\)
  is a complete list of
  pairwise linearly inequivalent,
  irreducible \(\alpha\)-projective representations of \(G\) and
  Corollary~\ref{cor:character-multiplicity} shows that
  each irreducible representation appears in the regular representation,
  with multiplicity equal to its degree.
\end{proof}
\begin{remark}
  In the case where \(\alpha\) is the trivial 2-cocycle,
  the previous formula reduces to the standard decomposition into characters.
\end{remark}
\end{multicols}
\section{Basis transformations}
\label{sec:basis-transf}
\begin{align*}
  \mathbb{A}^i
  &= \sum_{g\in G} \dyad{g}{g} \otimes U(g)^i_j A^j \\
  &= \sum_{g\in G} \dyad{g}{g} \otimes V(g)^{-1} A^i V(g)\\
  &= \left(\sum_{g\in G} \dyad{g}{g} \otimes V(g)^{-1}\right)
     \left(\mathbbm{1}_n \otimes A^i\right)
     \left(\sum_{h\in G} \dyad{h}{h} \otimes V(h)\right)\\
  &= S^{-1}\left(\mathbbm{1}_n \otimes A^i\right) S\\
  \\
  S \twb{\tau}{\mathbb{B}}^g S^{-1}
  &= S \left( \tw{\tau}{L}(g) \otimes \mathbbm{1}_{d_{\text{v}}} \right) S^{-1}\\
  &= \sum_{h_1, h_2\in G} \dyad{h_1}{h_1}\tw{\tau}{L}(g)\dyad{h_2}{h_2} \otimes V(h_1) V(h_2)^{-1}\\
  &= \sum_{h_1, h_2\in G} \tau(g,h_2)\ket{h_1}\mkern-6mu\braket{h_1|gh_2}\mkern-6mu\bra{h_2}\otimes V(h_1) V(h_2)^{-1}\\
  &= \sum_{h\in G} \tau(g,h)\dyad{gh}{h} \otimes V(gh) V(h)^{-1}\\
  &= \sum_{h\in G} \frac{\tau(g, h)}{\alpha(g, h)}\dyad{gh}{h}\otimes V(g)\\
  &= \tw{\tau\overline{\alpha}}{L}(g)\otimes V(g)\\
  \\
  S\left(\tw{\overline{\tau}}{R}(g)\otimes V(g)\right)S^{-1}
  &= \sum_{h_1, h_2\in G}
    \dyad{h_1}{h_1}\tw{\overline{\tau}}{R}(g)\dyad{h_2}{h_2} \otimes V(h_1)V(g) V(h_2)^{-1}\\
  &= \sum_{h_1, h_2\in G} \overline{\tau}(h_2g^{-1}, g)
    \ket{h_1}\mkern-6mu\braket{h_1|h_2g^{-1}}\mkern-6mu\bra{h_2}\otimes V(h_1)V(g) V(h_2)^{-1}\\
  &= \sum_{h\in G} \overline{\tau}(hg^{-1}, g) \dyad{h g^{-1}}{h} \otimes V(h g^{-1})V(g) V(h)^{-1}\\
  &= \sum_{h\in G} \frac{\alpha(hg^{-1}, g)}{\tau(hg^{-1}, g)} \dyad{h g^{-1}}{h} \otimes \mathbbm{1}\\
  &= \sum_{h\in G} \frac{\alpha(h, g)}{\tau(h, g)} \dyad{h}{hg} \otimes \mathbbm{1}\\
  &= \tw{\alpha\overline{\tau}}{R}(g) \otimes \mathbbm{1}
\end{align*}
\begin{multicols}{2}
\section{MPS Ansatz}
\label{sec:ansatz}
\begin{proof}[Proof of Lemma~\ref{lem:main2}]
  Under the assumptions that the symmetry group \(G\) factors into broken and unbroken parts
  and that \(\ket{\Psi(A)}\) does not exhibit long range entanglement not protected by \(G\),
  the results in~\cite{schuchClassifyingQuantumPhases2011} readily yield
  the existence of a canoncial form
  \begin{equation}
    \label{eq:100}
    A^i = \bigoplus_{k=1}^{n_{\mathrm{b}}} A^i_k,
  \end{equation}
  where \(n_{\mathrm{b}} \coloneq |G_{\mathrm{b}}|\)
  and the representation of \(G\) on the virtual Hilbert space is given by
  \begin{equation}
    \label{eq:97}
    (\mathcal{V}, V) = (\mathbb{C}G_{\mathrm{b}}\otimes \mathcal{W}, L\otimes W).
  \end{equation}
  Here, the regular representation \(L\) permutes between the blocks,
  while \(W\) acts within the blocks:
  \begin{equation}\label{eq:105}
    \raisebox{-13.94629pt}{\includegraphics{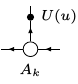}} = 
    \raisebox{-13.94629pt}{\includegraphics{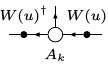}} 
  \end{equation}
  for all \(u\in G_{\mathrm{u}}\) and \(k=1, \dots, n_{\mathrm{b}}\).
  Thus, we have shown~(i).
  What remains is to show the \(G_{\mathrm{b}}\)-injectivity of \(A\).
  For simplicity, we assume that the component tensors are properly normalized,
  which implies, cf.~\cite{ciracMatrixProductDensity2017}, that each transfer operator
  \begin{equation}
    \label{eq:106}
    \mathbb{T}_k\coloneq\raisebox{-25.32742pt}{\includegraphics{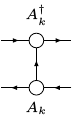}},
  \end{equation}
  \(k = 1,\ldots, n_{\mathrm{b}}\), has a unique eigenvector
  \(v_k \in \mathcal{W}^{*}\otimes \mathcal{W}\) with eigenvalue 1,
  \begin{equation}
    \label{eq:102}
    \raisebox{-25.32742pt}{\includegraphics{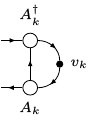}} = 
    \raisebox{-9.08113pt}{\includegraphics{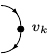}},
  \end{equation}
  and that all other eigenvalues of \(\mathbb{T}_k\) are less than 1.
  Let \(\epsilon_k < 1\) denote the the second largest eigenvalue of \(\mathbb{T}_k\)
  and let \(\epsilon\) denote the maximum of \(\left\{ \varepsilon_k \right\}\).
  Since the transfer operator \(\mathbb{T}_k\) is \(W^{\dag} \otimes W\) symmetric,
  in the sense of
  \begin{equation}
    \label{eq:101}
    \raisebox{-24.44708pt}{\includegraphics{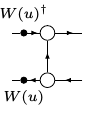}} =
    \raisebox{-24.44708pt}{\includegraphics{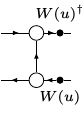}}
  \end{equation}
  for all \(u \in G_{\mathrm{u}}\),
  its unique eigenvector \(v_k\) of maximal eigenvalue also has to be symmetric,
  which implies
  \begin{equation}
    \label{eq:103}
    \sqrt{v_k}\,W(u) = W(u) \sqrt{v_k}
  \end{equation}
  for all \(u\in G_{\mathrm{u}}\) and \(k = 1,\ldots, n_{\mathrm{b}}\).
  This means that we can find a basis transformation of \(\mathcal{W}\)
  such that the mixed transfer operators satisfy
  \begin{equation}
    \label{eq:104}
    \left( \raisebox{-25.32742pt}{\includegraphics{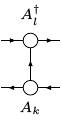}} \right)^n
    = \frac{\delta_{kl}}{d_{\mathcal{W}}}\;
    \raisebox{-9.08113pt}{\includegraphics{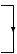}}\; \raisebox{-9.08113pt}{\includegraphics{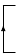}}
    + \mathcal{O}(\epsilon^n)
  \end{equation}
  for all \(k,l \in 1, \ldots, n_{\mathrm{b}}\) and \(n\in \mathbb{N}\),
  without modifying the form of the virtual representation in Equation~\eqref{eq:105}.

  Because all tensors \(A_1, \dots, A_{n_{\mathrm{b}}}\)
  are normal and linearly independent (since they generate pairwise orthogonal states)
  the matrices commuting with all \(A^i\) are precisely those of the form \(D\otimes \mathbbm{1}\),
  where \(D\) is a diagonal matrix.
  In particular, the representation \(\lambda \otimes \mathbbm{1}\) will commute with \(A\), where
  \begin{equation}\label{eq:47}
    \lambda(b) \coloneq \sum_{c\in G_{\mathrm{b}}} \omega(b,c) \dyad{c}{c}
  \end{equation}
  for a fixed non-degenerate bicharacter \(\omega\) of \(G_{\mathrm{b}}\).
  Notice that for an appropriately ordered list \(\chi_1,\ldots, \chi_{n_{\mathrm{b}}}\)
  of all irreducible characters of \(G_{\mathrm{b}}\),
  \(\lambda\) is of the form
  \begin{equation}
    \label{eq:48}
    \lambda(b) = 
    \begin{pmatrix}
      \chi_1(b)&&\\
      &\ddots&\\
      &&\chi_{n_{\mathrm{b}}}(b)
    \end{pmatrix},
  \end{equation}
  which implies that the matrices \(\left\{ \lambda(b) \mid b\in G_{\mathrm{b}} \right\}\)
  span the space of diagonal matrices of size \(n_{\mathrm{b}}\).
  Defining the projectors
  \begin{equation}
    P_{\chi} \coloneq \frac{1}{n_{\mathrm{b}}}
                \sum_{b\in G_{\mathrm{b}}}
                \chi(b)\lambda(b),
  \end{equation}
  where \(\chi\) is an irreducible character of \(G_{\mathrm{b}}\),
  allows us to write \(A\) as
  \begin{equation}
    \raisebox{-15.03612pt}{\includegraphics{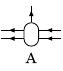}} =
    \sum_{k = 1}^{n_{\mathrm{b}}}
    \Stk{\raisebox{-13.94629pt}{\includegraphics{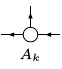}}}{\raisebox{-13.59927pt}{\includegraphics{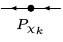}}},
  \end{equation}
  where we swapped top and bottom legs to make the following diagrams easier to draw.
  Using the orthogonality relations for irreducible characters we compute
  \begin{equation}
    \sum_{\chi\in \widehat{G_{\mathrm{b}}}}
    \Stk{\raisebox{-0pt}{\includegraphics{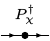}}}
        {\raisebox{-12.8215pt}{\includegraphics{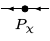}}}
        = \frac{1}{n_{\mathrm{b}}^2}
        \!\!\sum_{\substack{\chi\in \widehat{G_{\mathrm{b}}}\\ a,b\in G_{\mathrm{b}}}}\!\!
    \Stk{\raisebox{-0pt}{\includegraphics{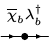}}}
        {\raisebox{-12.28815pt}{\includegraphics{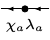}}}
     = \frac{1}{n_{\mathrm{b}}}\sum_{b\in G_{\mathrm{b}}} 
    \Stk{\raisebox{-0pt}{\includegraphics{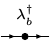}}}
        {\raisebox{-11.98816pt}{\includegraphics{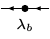}}},
  \end{equation}
  yielding for \(n\in \mathbb{N}\)
  \begin{align*}
    \left(\!\raisebox{-26.37843pt}{\includegraphics{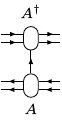}}\!\right)^n
    &= \eqmakebox[lem-2-prf]{\(\displaystyle\frac{1}{d_{\mathcal{W}}}\sum_{k = 1}^{n_{\mathrm{b}}}\)}
    \mStk[.47cm]{{\raisebox{-0pt}{\includegraphics{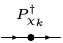}}}
                 {\raisebox{-7.08945pt}{\includegraphics{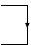}}\hspace{.4mm}
                  \raisebox{-7.08945pt}{\includegraphics{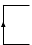}}}
                 {\raisebox{-13.59927pt}{\includegraphics{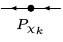}}}}
      + \mathcal{O}(\epsilon^n) \\
    &\equiv \eqmakebox[lem-2-prf]{\(\displaystyle\frac{1}{d_{\mathcal{W}}}\sum_{\chi\in \widehat{G_{\mathrm{b}}}}\)}
    \mStk[.47cm]{{\raisebox{-0pt}{\includegraphics{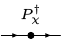}}}
                {\raisebox{-7.08945pt}{\includegraphics{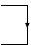}}\hspace{.4mm}
                 \raisebox{-7.08945pt}{\includegraphics{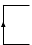}}}
                {\raisebox{-12.8215pt}{\includegraphics{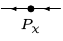}}}}
      + \mathcal{O}(\epsilon^n) \\
    &= \eqmakebox[lem-2-prf]{\(\displaystyle\frac{1}{n_{\mathrm{b}}d_{\mathcal{W}}}\sum_{b \in G_{\mathrm{b}}}\)\,}
    \mStk[.47cm]{{\raisebox{-0pt}{\includegraphics{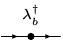}}}
                 {\raisebox{-7.08945pt}{\includegraphics{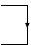}}\hspace{.4mm}
                  \raisebox{-7.08945pt}{\includegraphics{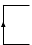}}}
                 {\raisebox{-11.98816pt}{\includegraphics{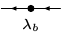}}}}
      + \mathcal{O}(\epsilon^n),
  \end{align*}
  which is, after using Lemma~\ref{lem:g-inj-proj}, precisely what we had to show.
\end{proof}
\end{multicols}
\end{appendices}

\section*{References}
\begin{multicols}{2}
\printbibliography[heading=none]
\end{multicols}
\end{document}